\documentclass[UKenglish,cleveref,autoref,a4paper]{article}
\usepackage[utf8]{inputenc}
\usepackage{babel}
\usepackage{lmodern}
\usepackage[T1]{fontenc}
\usepackage{graphicx}
\usepackage{microtype}
\usepackage{parskip}
\usepackage{setspace}
\usepackage[dvipsnames]{xcolor}
\usepackage{hyperref}
\usepackage{subcaption}
\usepackage[left=3cm,right=3cm,top=3cm,bottom=3cm]{geometry}

\usepackage{dsserif}  % pour mettre le 1 en gras de tableau (cf https://www.ctan.org/topic/font-bbd)
\usepackage{amsmath}
\usepackage{amssymb}
\usepackage{amsfonts}
\usepackage{amsthm}
\newtheorem{theorem}{Theorem}
\newtheorem{proposition}[theorem]{Proposition}
\newtheorem{corollary}[theorem]{Corollary}
\newtheorem{lemma}[theorem]{Lemma}
\newtheorem{definition}[theorem]{Definition}
\newtheorem{algorithm}[theorem]{Algorithm}
\newtheorem{remark}[theorem]{Remark}

\theoremstyle{plain}

\graphicspath{{figures/}} 

\newcommand{\HH}{\mathbb{H}^2}  % plan hyperbolique
\newcommand{\arcsinh}{\operatorname{arcsinh}}  
\newcommand{\arccosh}{\operatorname{arccosh}}
\newcommand{\surf}{S}  % surface
\newcommand{\e}{\varepsilon}  % epsilon
\newcommand{\thin}{\surf^t_\e}  % partie eps-fine
\newcommand{\thick}{\surf^T_\e}  % partie eps-épaisse
\newcommand{\dist}{\delta_\surf}  % distance sur la surface
\newcommand{\disth}{\delta_{\HH}}  % distance dans H^2
\newcommand{\h}{\widetilde}  % tilde pour les relevés dans H
\newcommand{\D}{\mathcal{D}_{\widetilde{b}}}  % domaine de Dirichlet
\newcommand{\sys}{\sigma}  % systole
\newcommand{\g}{\gamma}  % isométrie
\newcommand{\id}{\mathbb{1}_\Gamma}  % identité de Gamma
\newcommand{\diam}{\operatorname{Diam}}  % diamètre
\newcommand{\tr}{\operatorname{Tr}}  % trace

\renewcommand{\leq}{\leqslant}
\renewcommand{\geq}{\geqslant}

\title{Computing an $\e$-net of a closed hyperbolic surface}
\author{Vincent Delecroix, Vincent Despré, Camille Lanuel,\\ Hugo Parlier, Monique Teillaud}
\date{}

\begin{document}

\maketitle

\begin{abstract}
Hyperbolic surfaces are a fundamental object in mathematics and play an increasingly important role in computational geometry and topology. A key ingredient in the design of efficient algorithms on such surfaces is the availability of a geometric discretization of controlled complexity. In this paper, we present the first algorithm for constructing $\e$-nets on hyperbolic surfaces starting from a fundamental polygon representation. Our approach is based on Delaunay refinement and relies on maintaining Delaunay triangulations through edge flips.

The size of an $\e$-net cannot be bounded solely as a function of the genus because of the presence of arbitrarily long collars around short geodesics. To overcome this difficulty, we introduce the notion of a pseudo $\e$-net, which decomposes the surface into $\e$-thin cylinders together with a Delaunay triangulation over an $\e$-net of the remaining thick part.

As applications, we obtain algorithms for computing the length spectrum of an $\e$-thick hyperbolic surface and for computing the systole from a pseudo $\log(\sqrt{2})$-net. These results demonstrate that Delaunay-based discretizations provide a practical and versatile framework for algorithmic computations on hyperbolic surfaces.

\end{abstract}
 
\section{Introduction}
This paper focuses on hyperbolic surfaces, that is, surfaces endowed with a Riemannian metric of constant negative curvature $-1$. These surfaces have been extensively studied because of their ubiquitous nature. Beyond the realm of geometric topology in which they naturally live, they are important objects in spectral geometry, geometric group theory and the study of their closed geodesics is often shows intriguing relations to number theory and dynamics.  

Hyperbolic geometry also plays a central role in theoretical computer science. A classical example arises in the study of rotation distance between binary trees~\cite{sleatorRotationDistanceTriangulations1986}. More recently, hyperbolic geometry has become a powerful tool for graph representation and visualization~\cite{eppsteinSquarepantsTreeSum2009,eppsteinLimitationsRealisticHyperbolic2021}. Furthermore, the hyperbolic plane provides the canonical model of the universal cover of surfaces of genus at least $2$. This perspective has proved instrumental in the design and analysis of algorithms on surfaces, as well as in the proof of purely topological results~\cite{despreComputingGeometricIntersection2019,deverdiereUntanglingGraphsSurfaces2024}. Recently, the hyperbolic geometry of surfaces has been a surprisingly strong tool to attack problems on graph drawings \cite{HubardParlier2025,HubarddeMesmayParlier2025}. This follows a tradition of using hyperbolic geometry in combinatorics, namely since the work of Sleater, Tarjan and Thurston \cite{sleatorRotationDistanceTriangulations1986,sleatorRotationDistanceTriangulations1988}.

To perform algorithmic computations on a hyperbolic surface $\surf$, one first needs a suitable computational representation. From this perspective, a fundamental polygon is a natural choice, as it closely relates to the combinatorial maps (typically triangulations) commonly used to represent surfaces in computational geometry and topology. A fundamental polygon in $\HH$ simultaneously encodes both the topology of the surface through the side identifications and its hyperbolic geometry through the embedding of the polygon.

A first major step in the algorithmic treatment of hyperbolic surfaces was the development of an algorithm computing a Dirichlet domain from an arbitrary fundamental polygon~\cite{despreComputingDirichletDomain2023}. Since Dirichlet domains enjoy strong geometric properties, they provide a convenient canonical representation from which subsequent algorithms can be developed.

The next challenge is to obtain a local geometric description of the surface. A natural discretization is provided by an $\e$-net, namely a set of points that is simultaneously $\e$-dense and $\e$-separated. Such nets are ubiquitous in computational geometry, as they capture the local geometry while remaining of controlled size. The first contribution of this paper is an algorithm for constructing $\e$-nets on hyperbolic surfaces.

Buser proposed a closely related construction in the form of triangulations with controlled geometry~\cite[Section 4.5]{buserGeometrySpectraCompact2010}. Although his construction is not algorithmic, it suggests a natural strategy: start from a sparse set of points and iteratively insert the centers of large empty disks until the desired density is achieved. While this observation may appear elementary, it strongly motivates the use of Delaunay triangulations, which identify maximal empty disks as a by-product of their construction.

This is precisely the principle underlying Shewchuk's Delaunay refinement algorithm in the Euclidean setting~\cite{shewchukDelaunayRefinementAlgorithms2002}. Delaunay triangulations have since been extensively studied in hyperbolic geometry, both in the hyperbolic plane and on hyperbolic surfaces~\cite{bogdanovHyperbolicDelaunayComplexes2014,iordanovImplementingDelaunayTriangulations2017,despreFlippingGeometricTriangulations2020,ebbensMinimalDelaunayTriangulations2022}. Building on these developments, we design a hyperbolic analogue of Shewchuk's refinement algorithm, using the flip algorithm to maintain the Delaunay triangulation throughout the refinement process.

\begin{theorem}
    Let $\surf$ be a hyperbolic surface of genus $g$ and sytole $\sys$ given by a Dirichlet domain. We can compute an $\e$-net on $S$ in $O_{\e\to\infty}(f(g,\sys)/\e^4)$ time where:
    \begin{itemize}
        \item $f(g,\sys)=O_{g\to\infty}(e^{40g^2})$
        \item $f(g,\sys)=O_{\sys\to\infty}((\log(1/\sys)/\sys)^{6g})$
    \end{itemize}
\end{theorem}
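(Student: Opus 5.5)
The algorithm is a hyperbolic analogue of Shewchuk's Delaunay refinement, and the proof follows its three phases. First, from the Dirichlet domain $\D$ I build an initial Delaunay triangulation of $\surf$. I take the vertices of $\D$ (at most $O(g)$ orbits after identification) together with the base point $\h b$, and triangulate the polygon by fanning from $\h b$. Then I run the flip algorithm for geodesic triangulations on hyperbolic surfaces until the triangulation is Delaunay. To bound the cost of this phase, I will use the known bound on the number of flips needed to reach a Delaunay triangulation, which is polynomial in the number of vertices times a factor depending on the diameter of the initial triangles. The diameter of $\D$ is bounded in terms of $g$ and $\sys$: the area is $4\pi(g-1)$ and the injectivity radius at $\h b$ is at least $\sys/2$. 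This gives a standard bound of the form $\diam(\D)=O(g/\sys)$ or $O(g+\log(1/\sys))$. This phase is where the $f(g,\sys)$ term originates.

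Second, the refinement loop. While some Delaunay triangle has circumradius $>\e$, I insert its circumcenter, computed in the universal cover and projected to $\surf$. I locate it by walking in the triangulation, split the containing triangle, and restore the Delaunay property by flips. The flips are local, as in the Euclidean Bowyer–Watson/Lawson argument, and the number of flips per insertion is bounded by the degree of the new vertex in the resulting triangulation. The invariant to maintain is that inserted points are at distance $\geq \e$ from all existing vertices, since the circumscribed disk is empty. Hence the inserted vertices are $\e$-separated. Initial vertices may be closer, but there are only $O(g)$ of them. When the loop terminates, every empty disk has radius $\leq\e$, so the vertex set is $\e$-dense. Discarding or merging the few initial vertices that violate separation then yields an $\e$-net; this uses the fact that initial vertices sit near the boundary of a Dirichlet domain whose vertices are spaced at least $\sys$ apart.

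Third, complexity. By separation, disks of radius $\e/2$ around the points are disjoint once $\e/2$ is below the injectivity radius, so a packing argument bounds the number of points by $\mathrm{Area}(\surf)/\mathrm{Area}(\text{disk}(\e/2))=O(g/\e^2)$ as $\e\to 0$. When $\e$ exceeds $\sys$, this fails near short geodesics, and the count must instead be controlled by the collar widths. Each insertion costs point location plus flips. I will bound both by the number of points, which gives quadratic total work $O(N^2)=O(1/\e^4)$ times a function of $g,\sys$; I read the subscript "$\e\to\infty$" in the statement as meaning this asymptotic regime of small $\e$.

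The main obstacle is bounding the initial flip phase and the location cost in terms of $g$ and $\sys$ with the specific forms $e^{40g^2}$ and $(\log(1/\sys)/\sys)^{6g}$. The only handle on the behaviour of long thin triangles of $\D$ is the number of lifts of the triangulation that meet a disk of radius $\diam(\D)$, which grows exponentially in the diameter. I would first try bounding the flip count by the number of translates $\g\D$ intersecting a ball of radius $O(\diam(\D))$, that is, roughly $e^{O(\diam \D)}$ times $g$. I would then plug in the diameter bound for a Dirichlet domain in terms of $g$ and $\sys$, aiming to match the exponents $40g^2$ and $6g$ by careful but routine estimation.
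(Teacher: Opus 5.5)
Your refinement loop, the $\e$-separation invariant and the covering argument match the paper's. The complexity analysis, however, has a genuine gap at the flips.

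The bound ``flips per insertion $\le$ degree of the new vertex'' is Lawson's planar argument. It relies on each old circumdisk containing $c$ containing no other new point, so that every flip creates an edge at $c$ that survives to the end. On $\surf$ you are really inserting the whole orbit $\Gamma\h c$ into the lifted triangulation, and a Delaunay disk may contain several lifts of $c$. This happens as soon as disks are large compared with the injectivity radius at $c$: near short geodesics, and for the large disks of the early, nearly one-vertex stages. Edges at $c$ can then be created and flipped away again, and you give no replacement bound. The planar count only transfers when every Delaunay disk contains at most one lift of each point.

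The missing ingredient is global, and the paper supplies it in Lemma~\ref{lem:nbflips}:
\begin{itemize}
\item Lift each triangulation to a polyhedral surface $\Sigma_j$ via $\mathbb{S}^2$. Then $\Sigma_{j+1}$ contains $\Sigma_j$ both after a flip and after splitting a triangle at a new vertex. Hence an edge that has been flipped never reappears during the entire run.
\item Comparing with the fundamental domain formed by the initial one-vertex triangles shows that every edge created by a flip has length at most $8\diam(\surf)$.
\item The total number of flips is therefore at most $N^2$ times the number of simple geodesic segments of length at most $8\diam(\surf)$ between two points.
\end{itemize}
That product is where $f(g,\sys)/\e^4$ comes from. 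It is not the cost of an initial flipping phase, which in any case would add to the $1/\e^4$ term rather than multiply it.

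Your initial phase is unnecessary and breaks your invariant. $\D$ is the Voronoi cell of $\h b$, so its dual is already a Delaunay triangulation with the single vertex $b$, and $P_1=\{b\}$ is trivially an $\e$-packing. Adding the vertices of $\D$ creates non-separated points, and the proposed repair fails on two counts. First, sides of $\D$ can be arbitrarily short, so its vertices are not $\sys$-separated; and $\sys$-separation would not give $\e$-separation anyway when $\sys<\e$. Second, deleting vertices at the end can leave points farther than $\e$ from the remaining set, destroying density.

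Your counting device also cannot give the stated $\sys$-dependence. Translates of $\D$ meeting a ball count all geodesic segments of length $\le\ell$, which grow like $e^{\ell}$. Here $\ell=8\diam(\surf)$ and $\diam(\surf)$ can be of order $g\log(1/\sys)$ (Lemma~\ref{lem:diam_sys}). Your $O(g+\log(1/\sys))$ fails when several thin collars lie in series, and $O(g/\sys)$ gives $e^{O(g/\sys)}$. With the correct diameter you get roughly $\sys^{-24g}$, not $(\log(1/\sys)/\sys)^{6g}$. To obtain the stated bound you must use that triangulation edges are embedded and count only simple geodesic paths. The paper does this via Dehn--Thurston coordinates relative to a short pants decomposition (Proposition~\ref{prop:nb_curves}), which gives a count polynomial in $\ell$.
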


The complexity of the refinement algorithm depends on the size of the resulting $\e$-net. Unfortunately, for a fixed value of $\e$, this size cannot be bounded independently of the surface. Indeed, as the systole tends to zero, the Collar Lemma implies the existence of increasingly long embedded cylinders around short geodesics. Consequently, an arbitrarily large number of net points may be required to cover these cylinders. We recall the Collar Lemma precisely in the background section.

To overcome this difficulty, Buser introduced the notion of a trigon decomposition, which partitions the surface into thin cylinders and a thick part that can be triangulated efficiently~\cite[Section 4.5]{buserGeometrySpectraCompact2010}. Inspired by this construction, we introduce the notion of a pseudo $\e$-net. Rather than covering the entire surface with an $\e$-net, a pseudo $\e$-net decomposes $\surf$ into a collection of $\e$-thin cylinders together with a Delaunay triangulation over an $\e$-net of the remaining $\e$-thick part.

\begin{theorem}
\label{thm:complexite_bananes}
	Let $\surf$ be a hyperbolic surface given by a Delaunay triangulation on a single vertex (or, equivalently, by a Dirichlet domain) and let $\e\leq\ln\sqrt 2$ be a positive number. A pseudo $\e$-net of $\surf$ can be computed in $O(f(g,\sys)/\e^4)$ time.
\end{theorem}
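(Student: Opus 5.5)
The algorithm has two phases. First, identify and cut out the $\e$-thin cylinders, which by the Collar Lemma surround the closed geodesics of length $<2\e$. Second, run the Delaunay refinement of the first theorem on the remaining thick part, maintaining the triangulation by flips.

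\textbf{Phase 1: locating the short geodesics.} Starting from the single-vertex Delaunay triangulation (equivalently, the Dirichlet domain $\D$), I will use the fact that any closed geodesic of length at most $2\ln\sqrt2=\ln 2$ is short relative to the geometry of a Delaunay triangulation. Such a geodesic must be homotopic to a bounded-length word in the side-pairing generators of $\D$. Enumerating translates $\g\D$ within bounded distance of $\D$ therefore finds all primitive hyperbolic $\g$ with $2\arccosh(|\tr\g|/2)<2\e$. The hypothesis $\e\le\ln\sqrt2$ guarantees two things:
\begin{itemize}
\item distinct such geodesics are disjoint and simple;
\item their $\e$-thin collars (the sets of points with injectivity radius $<\e$) are disjoint embedded cylinders, by the Collar Lemma.
\end{itemize}
Hence there are at most $3g-3$ of them. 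For each one I compute the boundary curves of its thin cylinder explicitly, as equidistant curves at the distance $w$ determined by $\sinh(\ell/2)\cosh w=\sinh\e$.

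\textbf{Phase 2: refinement on the thick part.} Seed the thick part with points on the boundary of each cylinder, spaced so that consecutive points are at distance about $\e$. Because the boundary has length comparable to $\e$ when the systole is small, this adds only $O(1)$ points per cylinder. Insert these points into the triangulation, restoring the Delaunay property by flips. Treat the cylinder boundaries as constraints so that no triangle crosses into a thin part. Then run the refinement algorithm of the first theorem restricted to $\thick$: while some Delaunay triangle of the thick part has circumradius $>\e$ and its circumcenter lies in $\thick$, insert that circumcenter and flip.

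Correctness follows as in Shewchuk's argument. Inserted points are pairwise at distance $\ge\e$, since every insertion is the center of an empty disk of radius $>\e$. At termination every point of $\thick$ is within $\e$ of a vertex. Termination and the size bound come from a packing argument: disks of radius $\e/2$ around the points are disjoint and embedded in the thick part. Since $\mathrm{Area}(\surf)=4\pi(g-1)$, the number of points is $O(g/\e^2)$. Each insertion costs a number of flips bounded by the current size times a factor depending on $g$ and $\sys$, as in the analysis of the first theorem. This gives $O(f(g,\sys)/\e^4)$ overall, with the Phase 1 search absorbed into $f(g,\sys)$.

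\textbf{Main obstacle.} I expect the hardest step to be controlling the refinement near the cylinder boundaries. A circumcenter may fall inside a thin cylinder, or a triangle may be long and skinny because it spans a collar. My plan is to apply the encroachment rule: when a circumcenter lies inside a cylinder, insert instead the midpoint of the boundary segment it encroaches. I then need to verify that the separation stays $\gtrsim\e/2$, so that the packing bound still holds. This requires a hyperbolic estimate that $\e\le\ln\sqrt2$ keeps the injectivity radius along the boundary at least $\e$. Bounding the flip count per insertion independently of the long collars is the second subtle point. The approach is to work only in the thick part, whose diameter is bounded in terms of $g$ and $\e$ alone.
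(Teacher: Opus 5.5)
Your Phase~1 (precomputing the short geodesics from the side pairings) is plausible in the spirit of Appendix~\ref{sec:Akrout}, though only sketched. The genuine gap is the step you yourself flag as the main obstacle: you give no working mechanism that keeps the refinement out of the thin cylinders while preserving both $\e$-separation and coverage of $\thick$.

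The encroachment rule only promises separation $\gtrsim\e/2$, i.e.\ an $\e/2$-packing, whereas a pseudo $\e$-net must be $\e$-separated. Your correctness sentence (every insertion is the center of an empty disk of radius $>\e$) does not apply to seeded boundary points or to encroachment midpoints. Nothing in your plan keeps a seed at distance $\ge\e$ from existing vertices:
\begin{itemize}
\item the initial vertex $b$ may lie near or inside a collar, and must then be removed, which you never address;
\item when $\ell(\rho)$ is close to the threshold, the two boundary components of a collar are arbitrarily close on $\surf$, so the $\e/2$-disks around seeds on opposite sides overlap and your packing count does not apply.
\end{itemize}
Declaring the boundaries to be constraints does not rescue this. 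They are equidistant curves, not geodesics, so they cannot be edges of a geodesic triangulation. Moreover, the flip bound of Lemma~\ref{lem:nbflips} relies on unconstrained Delaunay flips and insertions: the lifted polyhedral surfaces only move towards the sphere, so a flipped edge never reappears. That argument would have to be re-proved for a constrained scheme.

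The paper needs none of this, thanks to three ingredients your plan lacks.
\begin{itemize}
\item \emph{Collars are detected on the fly.} Let $\sinh(l_\e/2)=\sqrt2\sinh(\e/2)$. By Lemma~\ref{lem:ploop}, as soon as a vertex has a geodesic loop shorter than $l_\e$, the Delaunay triangulation has a loop edge shorter than $l_\e$ at it. If the core $\rho$ is shorter than $\e$, that insertion is cancelled and replaced by one point on each boundary component of $\mathcal C(\rho,\e)$.
\item \emph{The role of $\e\le\ln\sqrt2$.} It is not there to make collars disjoint (that holds for much larger $\e$). It gives Lemma~\ref{lem:lnsqrt}: the boundary of $\mathcal C(\rho,l_\e)$ lies more than $\e$ beyond that of $\mathcal C(\rho,\e)$. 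Since no other vertex lies in $\mathcal C(\rho,l_\e)$ at that moment, the two new points are $\e$-separated from all other vertices, and from each other within $\thick$.
\item \emph{Loop edges and tagging.} The new points carry Delaunay loop edges of length $\e$ homotopic to $\rho$. Tagging the two triangles between them discards the skinny triangles spanning the collar. Lemma~\ref{lem:center} then shows, by an empty-circle argument, that no untagged triangle of circumradius $>\e$ has its circumcenter in a processed collar.
\end{itemize}
Unconstrained Delaunay refinement therefore simply continues, skipping tagged triangles. There are at most $16(g-1)/\e^2+3g-3$ insertions, and the flip analysis of Lemma~\ref{lem:nbflips} applies; its dependence on $\diam(\surf)$ is already absorbed in $f(g,\sys)$, so no thick-part diameter bound is needed.

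Finally, your thin part (geodesics shorter than $2\e$, boundary at $\sinh(\ell/2)\cosh w=\sinh\e$) is twice the paper's $\thin=\{x : r_x(\surf)\le\e/2\}$. Even a correct execution of your plan would leave the points of $\thick$ with $\e/2<r_x(\surf)<\e$ uncovered.
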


Beyond providing a convenient geometric representation of hyperbolic surfaces, $\e$-nets also form the basis of efficient algorithms for computing geometric invariants. Our first application is the computation of the length spectrum, that is the collection of lengths of the closed geodesics of the surface. As hinted at previously, the length spectrum is arguably the most important geometric invariant of a hyperbolic surface, as it encodes the dynamics of the geodesic flow, but is also equivalent, by a celebrated result of Huber \cite{huberAnalytischenTheorieHyperbolischer1959}, using the Selberg trace fomula \cite{selbergHarmonicAnalysisDiscontinuous1956}, to the Laplace spectrum. 

We show that, on an $\e$-thick surface, a Delaunay triangulation over an $\e$-net provides enough geometric and combinatorial information to enumerate all closed geodesics up to a prescribed length.

\begin{theorem}\label{thm:spectrum}
Let $\e>0$, let $\surf$ be an $\e$-thick hyperbolic surface, and let $T$ be a Delaunay triangulation over an $\e$-net of $\surf$. The length spectrum of $\surf$ up to length $L$ can be computed in $O(g\,m\,L\,e^L)$ time, where $m$ denotes the maximum multiplicity of a length in the spectrum below $L$.
\end{theorem}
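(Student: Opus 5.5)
We plan to enumerate closed geodesics of length at most $L$ as closed paths in a graph derived from $T$, using the $\e$-thickness to make this correspondence tight. The idea is that each closed geodesic $\gamma$ of length $\ell\le L$ crosses a sequence of triangles of $T$. Its free homotopy class is determined by the cyclic sequence of edges of $T$ it crosses, so $\gamma$ is encoded by a cyclic word in the dual graph $T^*$. Since $T$ is a Delaunay triangulation over an $\e$-net, every triangle has circumradius less than $\e$ (the empty circumdisk contains no net point, and the net is $\e$-dense). Every edge has length at least $\e$ (by separation) and less than $2\e$. We will use these two facts to relate geometric length to combinatorial length in both directions.

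\textbf{Step 1: combinatorial length versus geometric length.} We first show that a geodesic segment of length $\ell$ crosses at most $C(\ell/\e+1)$ triangles, with $C$ an absolute constant. The argument uses a packing bound. A segment of length $\ell$ stays in its $2\e$-neighbourhood, which contains all triangles it meets. Since $\surf$ is $\e$-thick, balls of radius $\e/2$ around net points embed. A hyperbolic area comparison, together with the fact that Delaunay triangles with edges in $[\e,2\e)$ have area bounded below in terms of $\e$, bounds the number of such triangles. We expect the crude version of this to be too weak, so we will refine it. Every net point lies at distance at least $\e$ from the others, and the triangles crossed by a geodesic form a strip whose consecutive vertices advance. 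We will try to show that every bounded number of consecutive crossings advances the geodesic by at least a constant times $\e$. This yields a linear bound $k\le C\ell$ on the word length $k$, where the implicit constant depends on $\e$, which is acceptable since $\e$ is fixed.

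\textbf{Step 2: enumeration.} We compute, by lifting to $\HH$ through the developing map of $T$, all closed paths in $T^*$ of length at most $k_{\max}=C L$ that could represent geodesics of length at most $L$. For each candidate we take the product of the side-pairing isometries along the word and compute the translation length $2\arccosh(|\tr|/2)$. A naive enumeration would produce exponentially many words in $k_{\max}$, with a base depending on $\e$. To reach $e^L$ we will instead run a pruned search in the universal cover. We grow the lifted strip from a starting triangle and discard every branch whose endpoint triangle lies farther than $L+O(\e)$ from the start. The number of surviving triangles is the number of triangles in a hyperbolic ball of radius $L$ in the tiling. By area this is $O(e^L/\mathrm{area}(\triangle))$, summed over $O(g)$ starting triangles, and the number of triangles is $O(g)$ for fixed $\e$. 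Each closed geodesic is found from each triangle it crosses. We reduce duplicates by canonicalising cyclic words: minimal rotation, at cost $O(L)$ per word. This gives the factor $L$.

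\textbf{Step 3: correctness and multiplicity.} Distinct cyclic reduced words give distinct free homotopy classes, and hence distinct closed geodesics. However, geodesics with equal length cannot be told apart by length alone. We keep one representative per conjugacy class, and bookkeeping against the current list of classes of the same length costs a factor $m$ when testing for duplicates. This accounts for the $m$ in $O(g\,m\,L\,e^L)$.

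The main obstacle is Step 2's counting. We must show that only $O(g e^L)$ lifted triangles need to be explored, rather than exponentially many words with an $\e$-dependent base. We must also handle geodesics passing through vertices of $T$, which we will treat by a generic perturbation of the choice of crossing.
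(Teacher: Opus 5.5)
The genuine gap is in Step~3, in how you identify free homotopy classes. Your claim that distinct cyclic reduced words in the dual graph $T^*$ give distinct free homotopy classes is false. The group $\pi_1(T^*)$ is free and surjects onto $\pi_1(\surf)$ with a nontrivial kernel, normally generated by the loops of $T^*$ encircling the vertices of $T$. The loop around a vertex $v$ is itself a nonempty reduced cyclic word that is null-homotopic on $\surf$. More generally, any portion of a word that turns partway around $v$ can be replaced by the complementary way around $v$, giving a different reduced cyclic word in the same class. So minimal rotation is not a canonical form. The same closed geodesic, reached from different starting triangles or along different search paths, produces different words. Your comparison against stored classes of the same length then misses these duplicates, and the multiplicities you output are wrong. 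This is exactly why the paper preprocesses $T$ with the method of Lazarus and Rivaud and uses their canonical representative of the free homotopy class of each closed walk, computed in $O(L)$ time, with $O(mL)$ for the comparisons. There is also a fix inside your own framework. For each candidate isometry $g$, keep only those whose axis meets the starting lift $\tilde t$; this loses no class. Trace that axis through the lifted triangulation over one period and use the crossing sequence of the closed geodesic itself, which is unique in its class. Minimal rotation of that sequence, with a fixed convention at vertices, is canonical, and Step~1 bounds its length by $O(L)$ for fixed $\e$.

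Step~2 also needs to be stated correctly, and this is linked to the previous point. The bound by the number of triangles in a ball of radius $L+O(\e)$ holds only for a breadth-first search over \emph{lifted triangles} with a visited set, keyed for instance by the deck transformation. Pruning branches by distance alone still leaves a number of dual paths inside the ball that is exponential in $L/\e$, which is the blow-up you wanted to avoid. With a genuine BFS, the word attached to a translate $g\tilde t$ is a BFS-tree path rather than the geodesic's crossing sequence, hence the need for a real homotopy test. Once both points are repaired, your route is a reasonable variant of the paper's. The paper searches over vertices from each net point $p$. It uses Lemma~\ref{lem:approxlength} to bound the based loop at a nearby net point by $L_p$, and Lemma~\ref{lem:BFS} to ensure the triangulation restricted to the ball is connected enough. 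Your triangle-based search gets reachability directly from the strip of triangles crossed by the axis segment, which stays within distance $L+O(\e)$ of the start. Finally, Step~1 needs no refinement. Edges of length at least $\e$ together with circumradius at most $\e$ bound triangle areas below by a constant depending on $\e$. A geodesic segment of $\HH$ meets each convex lifted triangle in a connected set. So the crude area count already gives $O(\ell+1)$ crossings for fixed $\e$.
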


We use the maximum multiplicity in the above estimate, but it can be replaced by a function of $L$ and the genus (see Remark \ref{rem:multiplicity}). The pseudo $\e$-net decomposition allows us to extend this approach to surfaces with arbitrarily small systole (see Theorem \ref{thm:spectrum2}). In particular, choosing $\e=\log(\sqrt{2})$ yields a polynomial-time algorithm for computing the systole once a pseudo $\log(\sqrt{2})$-net has been constructed.

\begin{corollary}
    Let $\surf$ be an $\log(\sqrt{2})$-thick surface and $T$ a Delaunay triangulation over a pseudo $\log(\sqrt{2})$-net of $\surf$. We can compute the length of the systole of $\surf$ in $O(g^2)$ time.
\end{corollary}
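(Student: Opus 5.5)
The plan is to derive the corollary from Theorem~\ref{thm:spectrum}, applied with $\e=\log(\sqrt 2)$ and a length bound $L$ chosen as a function of $g$ only, so that the systole is guaranteed to lie below $L$. First, I fix such an $L$. A classical bound, obtained by comparing the area of an embedded disk of radius $\sys/2$ with the total area $4\pi(g-1)$, gives $2\pi(\cosh(\sys/2)-1)\leq 4\pi(g-1)$. Hence $\sys\leq 2\arccosh(2g-1)\leq 2\log(4g-2)$, and I take $L=2\log(4g)$. Since the systole is the length of a shortest closed geodesic, it is the smallest value of the length spectrum, so it suffices to compute the spectrum up to $L$ and return its minimum.

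Next, I check that the hypotheses of Theorem~\ref{thm:spectrum} are met. Because $\surf$ is assumed $\log(\sqrt2)$-thick, a pseudo $\log(\sqrt2)$-net has no $\e$-thin cylinders (or only degenerate ones), so $T$ is genuinely a Delaunay triangulation over a $\log(\sqrt 2)$-net of $\surf$. If cylinders can still appear in the definition, I would instead invoke the extension to pseudo-nets, Theorem~\ref{thm:spectrum2}, at this point. In either case the spectrum up to $L$ is obtained in $O(g\,m\,L\,e^L)$ time. One could also stop the enumeration early at the first length found, but I would not rely on that.

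The main obstacle is the running time. With $L=2\log(4g)$ we get $e^L=16g^2$, so the bound becomes $O(g\cdot m\cdot g^2\log g)$, which exceeds the claimed $O(g^2)$. To get $O(g^2)$ I would lower the threshold: use a sharper systole bound, or replace the fixed $L$ by doubling $L$ upward from a constant until some geodesic is found. The cost is then dominated by the last stage, of order $e^{\sys}$ up to the other factors. The factors $g\,m\,L$ must also be controlled. The multiplicity $m$ of the systole length is bounded, via Remark~\ref{rem:multiplicity}, by a quantity polynomial in $g$; the relevant fact is that systoles pairwise intersect at most once, which bounds their number. I expect the final accounting to use the fact that on a $\log\sqrt2$-thick surface the net has $O(g)$ points, so that the enumeration of candidate paths of length below about $\log g$ costs $O(g)$ per starting triangle, giving $O(g^2)$ in total. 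Making this last count match the stated $O(g^2)$ exactly is the step I am least sure of; if necessary, the claimed bound should be read as holding up to logarithmic factors.
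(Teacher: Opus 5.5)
\textbf{The gap is in the complexity bound.} Your reduction is the same as the paper's. The systole is the minimum of the length spectrum. It is at most an explicit $O(\log g)$ quantity: your area bound $\sys\leq 2\log(4g-2)$ serves as well as the bound $2\log g+8$ quoted in the paper. And on a $\log\sqrt2$-thick surface a pseudo-net is an honest net, so the search behind Theorem~\ref{thm:spectrum2} applies with this $L$. But the complexity is the whole content of the corollary, and both steps you propose for it fail.

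\emph{First step: the multiplicity $m$.} You try to bound $m$, whereas the paper's one extra remark is that for the systole no homotopy bookkeeping is needed. Every pair of lifts reached by the search yields a length in $O(1)$ time, and one keeps the minimum. So the factor $m$ and the $O(L)$ cost of canonical representatives disappear. Bounding $m$ instead cannot reach $O(g^2)$: Remark~\ref{rem:multiplicity} only gives $m\leq (g+1)e^{L+6}=O(g^3)$ here, and the number of systoles, although polynomial in $g$, is not $O(1)$.

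\emph{Second step: the final count.} More seriously, your claim of ``$O(g)$ per starting point for paths of length about $\log g$'' is unjustified and contradicts your own computation. The relevant length is about $2\log g$, not $\log g$. To reach the other endpoint $h\h p$ of a lift of the based loop, the search from $\h p$ must explore a ball of radius $L_p+O(\e)$ (Lemma~\ref{lem:BFS}). By Lemma~\ref{lem:eball} and $\e$-density, this visits $\Theta(\sinh^2(L/2))=\Theta(e^{L})=\Theta(g^2)$ vertices, which is precisely your earlier $e^L=16g^2$. Summed over the $O(g)$ starting vertices, this is $O(g^3)$ even after discarding $m$ and $L$. Reading the statement ``up to logarithmic factors'' cannot absorb a factor $g$.

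Doubling $L$ does not rescue this. It only replaces $e^{L}$ by $e^{\sys}$, and for infinitely many $g$ there are surfaces (Buser--Sarnak) with $\sys\geq\frac43\log g-O(1)$. On those surfaces this method still costs $\Omega(g^{7/3})$.

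The paper's own justification (Theorem~\ref{thm:spectrum2} with $L=2\log g+8$ and no multiplicities) does not explain this saving either. Taken literally it also yields $O(g^3)$, so your doubt about the accounting is well founded. Still, as written your proposal establishes only a weaker bound. Proving the stated $O(g^2)$ would require an additional idea that brings the search cost per starting vertex down to $O(g)$, and none is supplied.
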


The remainder of the paper is organized as follows. In Section~2, we review the necessary background on hyperbolic surfaces and establish the notation used throughout the paper. Section~3 develops the structural properties of hyperbolic surfaces and of their $\e$-nets that underlie our algorithms. In Section~4, we present our algorithm for constructing $\e$-nets based on Delaunay refinement, prove its correctness and complexity. Section~5 introduces pseudo $\e$-nets and explains how they allow us to handle the $\e$-thin parts of hyperbolic surfaces. Finally, Section~6 applies these constructions to the computation of the systole and the length spectrum of a hyperbolic surface.

\noindent {\bf Acknowledgements.}
Hugo Parlier was supported by ANR-SNF Grant number 200021E\_238147
(SUGAR). Vincent Delecroix and Vincent Despré were partly supported by grant ANR-25-CE40-0416 of the French National Research Agency (SUGAR).
Vincent Despré was also partly supported by grant ANR-23-CE48-0017 of the French National Research Agency ANR (project Abysm).

\section{Background on hyperbolic surfaces and notation}\label{sec:background}
We refer the reader to textbooks for more details, e.g. \cite{buserGeometrySpectraCompact2010, beardonGeometryDiscreteGroups1983}.

A closed hyperbolic surface can be seen as the quotient $\HH/\Gamma$ of the hyperbolic plane $\HH$ under the action of a group $\Gamma$ of orientation-preserving isometries.
Throughout the paper, objects in $\HH$ are denoted with a tilde $\h \cdot$, while objects on $\surf$ are denoted without. In particular, for an object $o$ on $\surf$, $\h o$ denotes any of its lifts in $\HH$. To simplify the language, we often use the term \emph{copy} to refer to an image of an object in $\HH$ by an element of $\Gamma$. 

We work with the Poincaré disk model in which the hyperbolic plane $\HH$ is represented as the unit disk of the complex plane $\mathbb{C}$. The unit circle consists of points at infinity.  The geodesics are either diameters of the unit disk, or circular arcs that meet the boundary circle orthogonally. The hyperbolic circles are Euclidean circles (but their hyperbolic and Euclidean centers differ). Orientation-preserving isometries are represented as matrices in $\mathbb{C}^{2\times 2}$. 

\subsection{Delaunay triangulation and Dirichlet domain}\label{sec:rep_surface}

A triangulation $T$ of $\surf$ is a partition of $\surf$ into triangles; note that edges may be loops. A triangulation of $\surf$ is a Delaunay triangulation if for each triangle $t$ of $T$ and any of its lifts $\h t$ in $\HH$, the open disk circumscribing $\h t$ contains no vertex of the (infinite) lift of $T$ in $\HH$~\cite{despreFlippingGeometricTriangulations2020}. 
The Voronoi diagram is the dual of the Delaunay triangulation. The Dirichlet domain $\mathcal{D}_{\h{x}}$ of a point $\h{x}\in\HH$ is the (closed) cell of $\h{x}$ in the Voronoi diagram of its (infinite) orbit $\Gamma\h{x}$. Unlike the Euclidean case, $\Gamma$ is non-commutative, and the combinatorics of a Dirichlet domain depends on the point $x$ (Figure~\ref{fig:Dirichlet}). The number $k$ of sides of $\mathcal{D}_{\h{x}}$ satisfies $4g\leq k \leq 12g-6$ (see, e.g.,~\cite{despreComputingDirichletDomain2023}).

\begin{figure}[!ht]
	\centering
	\begin{subfigure}{0.5\textwidth}
		\centering
		\includegraphics[width=.95\textwidth]{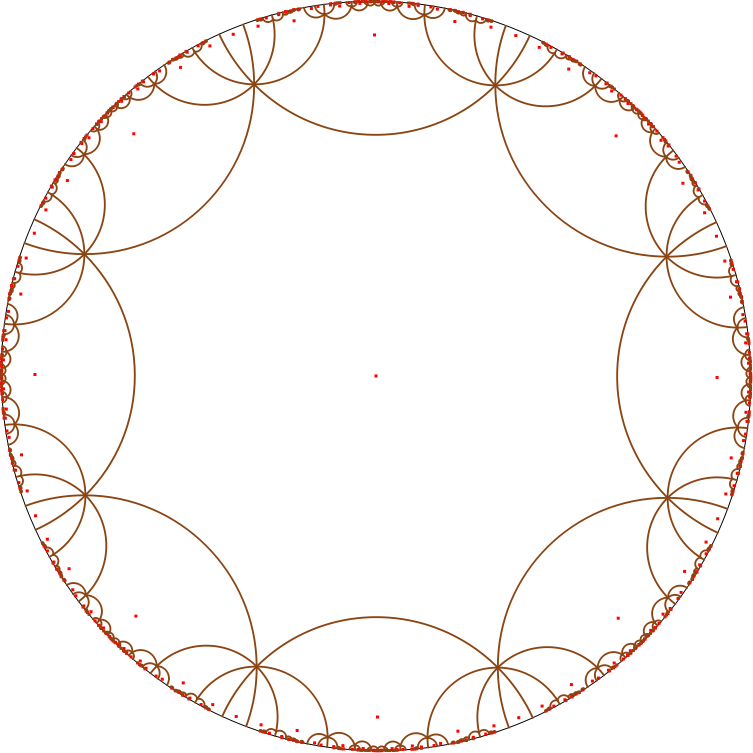}
	\end{subfigure}%
	\begin{subfigure}{0.5\textwidth}
		\centering
		\includegraphics[width=.95\textwidth]{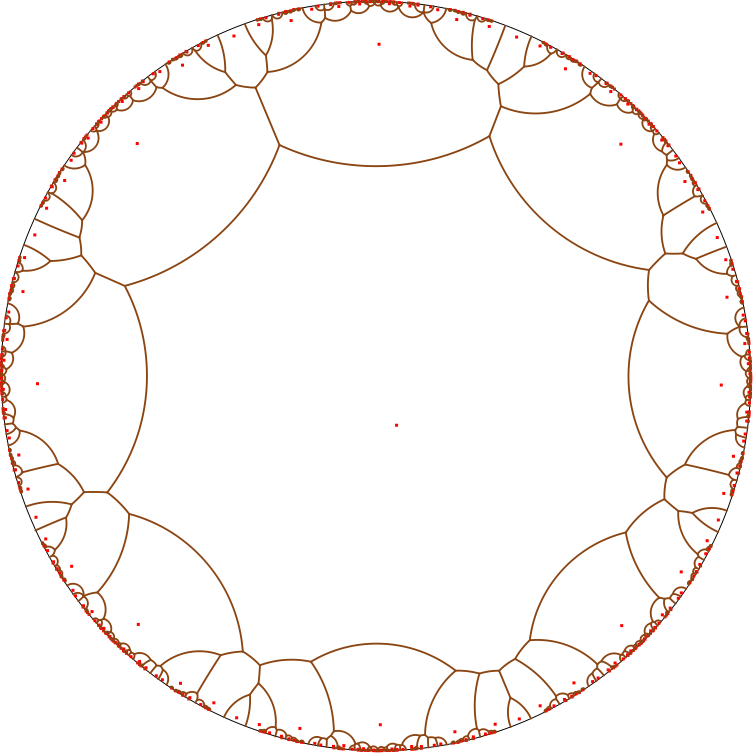}
	\end{subfigure}
				
	\caption{Dirichlet domains for the Bolza surface ($g=2$). The domain on the left has $4g=8$ sides and the one on the right has $12g-6=18$ sides. Figure from~\cite{bogdanovDelaunayTriangulationsOrientable2016}.}\label{fig:Dirichlet}
\end{figure}

In this paper, we assume that the input surface $\surf$ is given by a Delaunay triangulation having a single vertex $b$, i.e., all Delaunay edges are loops based in $b$. The point $b$ is arbitrary. This introduces no restriction, as such a representation can be computed for any closed hyperbolic surface, starting from a standard representation by a fundamental domain and side pairings~\cite{despreComputingDirichletDomain2023}.\footnote{The common basepoint is denoted as $b''$ in~\cite{despreComputingDirichletDomain2023}.} The Dirichlet domain $\D$ of some lift $\h{b}$ of $b$ can be computed together with the corresponding side pairings, which are generating the group $\Gamma$. The sides of $\D$ are denoted as $s_i, i=0,\ldots,k-1$ and the corresponding side pairings as $\g_i, i=0,\ldots,k-1$ (here, side pairings are pairwise inverses).

\subsection{Thick and thin parts}\label{sec;thinthick}

The \emph{injectivity radius} $r_x(\surf)$ of $\surf$ at a point $x$ is the supremum of all $r>0$ such that the open ball of radius $r$ centered at $x$, $B(x, r)=\left\{ y\in \surf \mid \dist(x, y) <r\right\}$, where $\dist$ is the distance on $\surf$, is isometric to a disk in $\HH$. In particular, $B(x,r)$ is a topologically embedded disk on $\surf$ for all $r\leq r_x(\surf)$. The \emph{systole} $\sys$ of a surface is the length of its shortest non-contractible curve. By a slight abuse of notation, we denote both the curve itself and its length by $\sys$. For closed hyperbolic surfaces, the systole is directly related to the injectivity radius via $\sys=2\cdot\inf \left\{ r_x(\surf)\mid x\in \surf\right\}$.

Similarly, for a simple closed geodesic $\gamma$, it is interesting to know if it can be embedded in a metric cylinder of a certain width. The classical collar lemma~\cite{buserGeometrySpectraCompact2010} states that $\gamma$ is necessarily included in an embedded cylinder, often called its collar, of width $2\cdot\arcsinh\left(1 / \sinh \left(\frac{1}{2} \ell\left(\gamma\right)\right)\right)$. We will call curves that are shorter than their collar width \emph{small} curves. Thus a curve $\gamma$ is deemed to be be small if and only if it satisfies $\ell(\gamma)\le2\cdot\arcsinh\left(1 / \sinh \left(\frac{1}{2} \ell\left(\gamma\right)\right)\right)$ which in turn amounts to $\ell(\gamma)\le2\cdot\arcsinh(1)$. 

A crucial property of small curves, that motivates the definition, is that they cannot pairwise intersect. In this context, we consider the collars of small curves to be \emph{thin} regions, while the rest of the surface is considered \emph{thick}. It allows an adaptation of the collar lemma, specifically for small curves:
\begin{theorem}[The collar lemma for small curves~{\cite[thm 4.1.6]{buserGeometrySpectraCompact2010}}]\label{thm:collar}
	Let $\surf$ be a closed hyperbolic surface of genus $g \geq 2$, and let $\gamma_1, \ldots, \gamma_m$ be small simple closed geodesics on $\surf$. Then the following hold:
	\begin{enumerate}
		\item[(i)] $m \leq 3 g-3$ and the curves are disjoint.
		
		\item[(ii)] $r_p(\surf)>\arcsinh(1)$ for all $p$ in the the thick part of the surface.
		
		\item[(iii)] If $p$ is in the collar of $\gamma_i$, at distance $d$ of $\gamma_i$ then 
		\[
		\sinh(r_p(\surf))=\sinh \left(\frac{1}{2}\ell(\gamma_i)\right)\cdot\cosh(d)
		\]
	\end{enumerate}
\end{theorem}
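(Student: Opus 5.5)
We would prove the three parts separately, treating (i) and (iii) through the standard collar lemma and the geometry of pairs of pants, and (ii) through a displacement argument for short loops.

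\emph{Step 1: disjointness and the count in (i).} Let $\gamma_i,\gamma_j$ be small, so $\ell(\gamma_i),\ell(\gamma_j)\leq 2\arcsinh(1)$. If they intersected, $\gamma_j$ would cross the collar of $\gamma_i$ from one boundary to the other. Its length would then be at least the collar width $2\arcsinh(1/\sinh(\ell(\gamma_i)/2))\geq 2\arcsinh(1)$, with equality only in the boundary case. To make this strict, we would use the sharper classical inequality $\sinh(\ell(\gamma_i)/2)\sinh(\ell(\gamma_j)/2)\geq 1$ for intersecting closed geodesics. Given disjointness, simple closed geodesics that are pairwise disjoint and distinct extend to a pants decomposition. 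Such a decomposition has $3g-3$ curves, which gives $m\leq 3g-3$.

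\emph{Step 2: formula (iii).} Lift $\gamma_i$ to a geodesic axis $\h\gamma$ in $\HH$ with translation length $\ell=\ell(\gamma_i)$. For $p$ in the collar we would argue as follows.
\begin{itemize}
\item Since the collar is embedded, the shortest geodesic loop at $p$ is homotopic to $\gamma_i$. A loop in another class would have to leave the collar, which by the width estimate costs more.
\item The shortest loop in the class of $\gamma_i$ is realised by the displacement of $\h p$ under the hyperbolic translation. A point at distance $d$ from the axis is displaced by $D$ with $\sinh(D/2)=\sinh(\ell/2)\cosh d$. This is the standard quadrilateral/trirectangle formula.
\item Since $r_p=D/2$, this yields (iii).
\end{itemize}

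\emph{Step 3: the bound (ii), which is the main obstacle.} Suppose $r_p(\surf)\leq\arcsinh(1)$ for some $p$ in the thick part. Then there is a geodesic loop at $p$ of length $\leq 2\arcsinh(1)$. Its free homotopy class contains a closed geodesic $\delta$ of length $\leq 2\arcsinh(1)$, so $\delta$ is small. We must also check that $\delta$ is simple, since a non-simple closed geodesic has length $>4\arcsinh(1)$. Then $p$ lies at distance $d$ from $\delta$ with $\sinh(\ell(\delta)/2)\cosh d\leq 1$. This places $p$ inside the collar of $\delta$, because the collar half-width $w$ satisfies $\sinh(\ell/2)\cosh w=\cosh(\arcsinh(1/\sinh(\ell/2)))\cdot\sinh(\ell/2)>1$. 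That contradicts $p$ being thick.

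The delicate points are the strictness in the boundary cases and the meaning of ``thick part'' when $\delta$ is not among the listed $\gamma_i$. We would handle the latter by taking the $\gamma_i$ to be all small geodesics, which is what the definition intends.
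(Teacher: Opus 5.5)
\textbf{Gap in Step 2.} The gap is the first bullet of your Step 2. You claim that for every $p\in\mathcal{C}(\gamma_i)$ a geodesic loop at $p$ in another class ``has to leave the collar, which costs more''. Leaving and re-entering only forces length $2(w-d)$, where $w=\arcsinh(1/\sinh(\ell(\gamma_i)/2))$. This tends to $0$ as $p$ approaches $\partial\mathcal{C}(\gamma_i)$, whereas the $\gamma_i$-loop there has length $2\arcsinh(\cosh(\ell(\gamma_i)/2))>2\arcsinh(1)$.

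No refinement can close this, because (iii) as stated is false near the collar boundary. Glue two copies of a pair of pants with boundary lengths $\ell_1=1$, $\ell_2=0.1$ and $\ell_3$ very small into a genus-$2$ surface; all three curves are small. The hexagon formula gives $\cosh d_{12}-\cosh(w_1+w_2)=(\cosh(\ell_3/2)-1)/(\sinh(\ell_1/2)\sinh(\ell_2/2))$, so the collars of $\gamma_1$ and $\gamma_2$ almost touch. Let $p$ be the point of their common perpendicular at distance $w_1$ from $\gamma_1$.
\begin{itemize}
\item Formula (iii) predicts $\sinh r_p=\sinh(\ell_1/2)\cosh w_1=\cosh(1/2)\approx 1.128$.
\item The geodesic loop at $p$ freely homotopic to $\gamma_2$ has length $D_2$ with $\sinh(D_2/2)=\sinh(\ell_2/2)\cosh(d_{12}-w_1)$.
\item As $\ell_3\to 0$ this tends to $\cosh(0.05)\approx 1.001$, so $\sinh r_p<1.128$.
\end{itemize}

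\textbf{What is true.} Let $D(p)$ be the length of the $\gamma_i$-loop at $p$, so $\sinh(D(p)/2)=\sinh(\ell(\gamma_i)/2)\cosh d$. The correct statement is the truncated identity $\min\{\sinh r_p,1\}=\min\{\sinh(D(p)/2),1\}$. In particular (iii) holds whenever $\sinh(\ell(\gamma_i)/2)\cosh d\le 1$. It follows from your Step 3, not from a width estimate:
\begin{itemize}
\item If $r_p\le\arcsinh 1$, the shortest loop at $p$ is freely homotopic to a small geodesic $\gamma'$ with $p$ in the interior of $\mathcal{C}(\gamma')$.
\item Collars of distinct small geodesics are disjoint, so $\gamma'=\gamma_i$.
\item Minimizing over the lifts of $\gamma_i$ then gives $2r_p=D(p)$.
\item If $r_p>\arcsinh 1$, then $\sinh(D(p)/2)\ge\sinh r_p>1$, so both minima equal $1$.
\end{itemize}
This range is all the paper uses later; for example, in Lemma~\ref{lem:lnsqrt} one has $\sinh(l_\e/2)=\sqrt2\sinh(\e/2)<1$.

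\textbf{Step 1.} Separately, the inequality $\sinh(\ell_i/2)\sinh(\ell_j/2)\ge 1$ does not exclude two intersecting geodesics of length exactly $2\arcsinh 1$. You need the strict version. It holds because $\gamma_j$ must exit the closed embedded collar of $\gamma_i$ in order to close up, which gives $\ell(\gamma_j)>2w(\gamma_i)$.

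\textbf{Comparison with the paper.} The paper gives no proof of its own; it cites Buser's Theorem 4.1.6 and rewrites the last item in terms of distance to the core. Your Steps 1 and 3 are the standard arguments for (i) and (ii). Step 2 cannot be completed as written.
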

It is important to notice that the injectivity radius of a point $p$ in the collar of $\gamma_i$ is half of the length of the geodesic loop based at $p$ and freely homotopic to $\gamma_i$. We have also slightly modified formula (iii), using the distance from the core geodesic of the collar and not its boundary as it appears in the referenced book.

In the context of $\e$-nets, it is meaningful to consider thin parts with respect to a parameter $\e$. For any $\e>0$, the \emph{$\e$-thin part} of $\surf$ is $\thin = \left\{ x \in \surf \mid r_x(\surf) \leq\e/2\right\}$, and its \emph{$\e$-thick part} is $\thick=\surf\setminus \thin$. Observe that if $\e<\sys$, then there is no $\e$-thin part.

\begin{figure}[!ht]
	\begin{center}
		\includegraphics{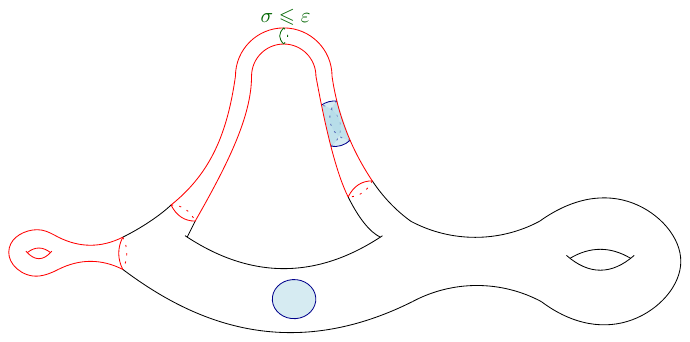}
		\caption{Thick and thin (red) parts of a hyperbolic surface. Disks of radius $\e$ are shown in blue.}\label{fig:thickthin}
	\end{center}
\end{figure}

\section{Structural properties of hyperbolic surfaces and their $\e$-nets}\label{sec:bound}

\subsection{Number of points in an $\e$-net}
    
Let $P$ be an $\e$-packing of $\surf$. The open balls of radius $\e /2$ centered at the points of $P$ on the $\e$-thick part $\thick$ are isometric to disks in $\HH$ and are pairwise disjoint. The area of such a disk centered at a point $p$ is $\mathcal{A}\left(B\left(p, \frac \e 2 \right)\right) = 4\pi\sinh^2\left(\frac \e 4\right)$~\cite[Theorem 7.2.2]{beardonGeometryDiscreteGroups1983}. Since $\sinh(x)\geq x$ for all $x\geq 0$, we have $\mathcal{A}\left(B\left(p, \frac \e 2 \right)\right) \geq \pi \e^2/4$.
	
Let $N^T$ be the number of points of $P$ while lie on the $\e$-thick part $\thick$. By the Gauss-Bonnet theorem, the area of the surface $\surf$ is $\mathcal{A}(\surf) = 4\pi(g-1)$. Summing the above inequality over all the points in $P\cap \thick$, we obtain
$N^T \pi\e^2/4 \leq \sum_{p\in P\cap \thick}\mathcal{A}\left(B\left(p, \frac\e 2 \right)\right) \leq 4\pi(g-1)$, and thus we have the following bound.
\begin{lemma}\label{eq:nb_points_epaisse} 
	$N^T\leq \frac{16(g-1)}{\e^2}$
\end{lemma}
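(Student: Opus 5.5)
The proof is a packing argument: we compare the total area of disjoint embedded disks around the net points in the thick part with the area of $\surf$. Let $P$ be the $\e$-packing and $p\in P\cap\thick$. By definition of the $\e$-thick part, $r_p(\surf)>\e/2$. Hence the open ball $B(p,\e/2)$ is isometric to a hyperbolic disk of radius $\e/2$, and its area is $4\pi\sinh^2(\e/4)$.

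The first step is to show that these balls are pairwise disjoint. If $B(p,\e/2)\cap B(q,\e/2)\neq\emptyset$ for distinct $p,q\in P$, the triangle inequality gives $\dist(p,q)<\e$. This contradicts $\e$-separation, so the balls are disjoint subsets of $\surf$.

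The second step is the area estimate. Since $\sinh x\geq x$ for $x\geq 0$, each ball has area at least $4\pi(\e/4)^2=\pi\e^2/4$. Because the balls are disjoint, their total area is at most $\mathcal{A}(\surf)$, which equals $4\pi(g-1)$ by Gauss--Bonnet. Therefore
\[
N^T\cdot\frac{\pi\e^2}{4}\leq \sum_{p\in P\cap\thick}\mathcal{A}\bigl(B(p,\tfrac\e2)\bigr)\leq 4\pi(g-1),
\]
and dividing gives $N^T\leq 16(g-1)/\e^2$.

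The only delicate point is the embeddedness of the balls, which is needed to use the hyperbolic disk area formula. It follows from the strict inequality $r_p(\surf)>\e/2$ for thick points, since $\thin$ is defined with the non-strict inequality $r_x(\surf)\leq\e/2$. Everything else is routine.
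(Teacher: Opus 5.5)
Your proof is correct and follows essentially the same argument as the paper. The paper also uses pairwise disjoint embedded balls of radius $\e/2$ around the thick points, each of area $4\pi\sinh^2(\e/4)\geq \pi\e^2/4$ by $\sinh x\geq x$, compared with the Gauss--Bonnet area $4\pi(g-1)$. The only difference is that you justify disjointness explicitly via the triangle inequality, which the paper merely asserts.
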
 
	        
The open balls of radius $\e/2$ in the $\e$-thin part $\thin$, if they exist (that is if $\sys\leq \e$) are also pairwise disjoint, but they are no longer topological disks (so in particular not isometric to disks in $\HH$). However, by definition, the open balls of radius $\sys/2$ are isometric to disks in $\HH$. When $\sys\leq \e$, we can apply the reasoning that led to inequality~(\ref{eq:nb_points_epaisse}) for $\sys$ instead of $\e$, and obtain a bound on the number of points of $P$ on the thin part $\thin$: $N^t\leq 16(g-1)/\sys^2$. The bound on the total number of points of $P$ follows.

\begin{proposition}\label{prop:nb_points}
	Any $\e$-packing of a closed hyperbolic surface $\surf$ of genus $g$ and systole $\sys$ contains $N \leq 16(g-1)\left(1/\e^2 + 1/\sys^2\right)$ points. If $\e < \sys$, then $N \leq 16(g-1)/\e^2$.
\end{proposition}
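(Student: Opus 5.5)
The plan is to split $P$ into the points lying in the $\e$-thick part $\thick$ and those lying in the $\e$-thin part $\thin$, and bound each set by an area-packing argument. The first set has size $N^T$, already controlled by Lemma~\ref{eq:nb_points_epaisse}: $N^T\leq 16(g-1)/\e^2$. If $\e<\sys$, then every point $x$ satisfies $r_x(\surf)\geq \sys/2>\e/2$, using $\sys=2\inf_x r_x(\surf)$. Hence $\thin$ is empty, $N=N^T$, and the second assertion follows immediately.

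For the thin points, assume $\sys\leq\e$ and let $N^t=|P\cap\thin|$. The key step is to shrink the balls. I will use the open balls $B(p,\sys/2)$ for $p\in P\cap\thin$, instead of the balls of radius $\e/2$ which need not be embedded. Since $r_p(\surf)\geq \sys/2$ for every $p$, each such ball is isometric to a hyperbolic disk of radius $\sys/2$. Its area is therefore $4\pi\sinh^2(\sys/4)\geq \pi\sys^2/4$, by the same computation as in the thick case using $\sinh x\geq x$.

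These balls are pairwise disjoint. Indeed $B(p,\sys/2)\subseteq B(p,\e/2)$, and the balls of radius $\e/2$ about distinct points of an $\e$-packing are disjoint: points of $P$ are at distance at least $\e$, so an intersection would violate the triangle inequality for $\dist$. Summing the areas and comparing with Gauss--Bonnet, $\mathcal{A}(\surf)=4\pi(g-1)$, gives $N^t\pi\sys^2/4\leq 4\pi(g-1)$, that is, $N^t\leq 16(g-1)/\sys^2$.

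Adding the two bounds gives $N=N^T+N^t\leq 16(g-1)(1/\e^2+1/\sys^2)$. If $\sys>\e$, the thin term is simply absent, so the general bound holds in either case. The only delicate point is making sure the disks used in the thin part really are embedded. This is exactly why the radius is reduced to $\sys/2$ and not kept at $\e/2$, together with the fact that $\sys/2$ bounds the injectivity radius from below everywhere. The rest is routine.
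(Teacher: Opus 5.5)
Your proposal is correct and follows the paper's argument essentially step for step. You bound the thick points with the disjoint embedded balls of radius $\e/2$ and Gauss--Bonnet, and the thin points (present only when $\sys\leq\e$) with the shrunken balls of radius $\sys/2$, which are embedded because $r_x(\surf)\geq\sys/2$ everywhere; you also spell out the disjointness and the emptiness of $\thin$ when $\e<\sys$, which the paper leaves implicit.
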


\subsection{Diameter}
We concentrate on lower bounds for $N$ in terms of our choice of $\e$ and the geometry of $S$. Observe that $\diam(S)$, the diameter of the surface, can be bounded in terms of the number of points and $\e$. As balls of radius $2\e$ around the $N$ points cover the surface, we have
$
\diam(S) < 2\e N$.
It follows that 
$
N> \frac{\diam(S)}{2\e}.$

We now look at the geometry of surface in the neighborhood of a systole. Notice that the shortest path between two points on a systole is itself an arc of the systole, otherwise one could construct a shorter non-contractible curve. The same argument can be extended to a collar region around the systole: the half-collar (that is the collar of width $\frac{1}{2}\arcsinh\left(1/(\sinh((\sys/2))\right)$) is also geodesically convex. In particular, this means that
$$
\diam(S)> \arcsinh\left(1/(\sinh((\sys/2))\right).
$$
This latter quantity, for very small systole, behaves roughly like $\log\left(\frac{1}{\sys}\right)$. Conversely, we can bound $\diam(\surf)$ by above w.r.t. $\sys$.

\begin{lemma}\label{lem:diam_sys}
Let $\surf$ be a closed hyperbolic surface of genus $g$, and let
$\sys=\sys(\surf)$. Then
\[
    \diam(\surf)
    \le
    (3g-3)\left(13+\log_+(1/\sys)\right),
\]
where $\log_+(x)=\max\{0,\log x\}$.
\end{lemma}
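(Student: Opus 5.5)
We bound the diameter by a chain-of-pieces argument based on the thick/thin decomposition of Theorem~\ref{thm:collar}. Let $\gamma_1,\dots,\gamma_m$ be the small simple closed geodesics of $\surf$, with $m\le 3g-3$, and let $C_1,\dots,C_m$ be their standard collars. Denote by $K$ the complement of the union of the collars, so that $K$ is thick with injectivity radius $>\arcsinh(1)$. The idea is to take two arbitrary points $x,y\in\surf$ and build a path between them that crosses each collar at most once, taking a bounded-length detour through each component of the thick part. This is expected to give a total of at most one bounded piece plus one collar crossing per small curve.

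\textbf{Step 1: bound the diameter of each component of $K$.} Each component $K_j$ of $K$ is a subsurface with boundary whose interior points have injectivity radius bounded below by $\arcsinh(1)$. By Gauss--Bonnet its area is at most $4\pi(g-1)$, and more precisely bounded by the area of the pairs of pants it meets. A bound depending only on $g$ would give the wrong shape, since we want a per-curve constant. So I would instead decompose $\surf$ into $2g-2$ pairs of pants, completing $\{\gamma_i\}$ to a pants decomposition using Bers-type short curves of length $\le$ a universal constant (Bers' constant is $O(g)$, which is not good enough either).

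The better route is to use a maximal $\arcsinh(1)$-packing of $K$. The disks of radius $\arcsinh(1)/2$ are embedded and disjoint, so there are $O(g)$ of them, say at most $c(g-1)$. The net graph with edges between points at distance $\le 2\arcsinh(1)$ is connected on each $K_j$ (together with its adjacent half-collar boundaries). A path in this graph visits each vertex at most once, but that bounds the distance by $O(g)$ without per-curve control. To get the constant $13(3g-3)$ we only need total length $\le 13(3g-3)$ plus the collar terms, and since $c(g-1)\cdot 2\arcsinh(1)$ is linear in $g$, this suffices after checking constants. Concretely, the area of a disk of radius $\arcsinh(1)/2$ is $4\pi\sinh^2(\arcsinh(1)/4)$. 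So the number of packing points is at most $(g-1)/\sinh^2(\arcsinh(1)/4)\approx 0.2^{-1}\cdot\ldots$, and a shortest graph path has length at most $2\arcsinh(1)$ times the number of vertices. I expect the main obstacle to be getting this constant below $13\cdot 3=39$ per unit of $(g-1)$. If the crude bound fails, I would refine it by using radius-$\arcsinh(1)$ disks for spacing and noting that consecutive vertices of a shortest path have disjoint disks only for every other vertex.

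\textbf{Step 2: collar crossings.} Any shortest path from $x$ to $y$ can be modified so that it traverses each collar $C_i$ transversally at most once; a point inside a collar is within half the collar width of its boundary. The collar width is $2\arcsinh(1/\sinh(\ell(\gamma_i)/2))\le 2\log(4/\ell(\gamma_i))+O(1)\le 2\log_+(1/\sys)+O(1)$. Combined with Step 1 this gives $\diam(\surf)\le\sum_i\bigl(\text{width}(C_i)\bigr)+(\text{thick contribution})$. After absorbing the constants, with the factor $2$ handled by noting that a geodesic between the two points crosses each collar at most once in total, this yields $(3g-3)(13+\log_+(1/\sys))$. The collar count uses $m\le 3g-3$, and the thick contribution is $O(g-1)\le 13(3g-3)$ after the constant check.
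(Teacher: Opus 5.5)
\textbf{Gap: the factor $2$ in Step~2.} The problem is the last sentence of Step~2. Your own accounting charges each collar its full width $2w_i$, where $w_i=\arcsinh(1/\sinh(\ell(\gamma_i)/2))\approx\log(4/\ell(\gamma_i))$. So Steps~1 and~2 give at best $\diam(\surf)\le 2(3g-3)\log_+(1/\sys)+O(g)$. The factor $2$ multiplies the unbounded quantity $\log(1/\sys)$, so it cannot be absorbed into the additive constant. Saying that ``a geodesic crosses each collar at most once'' does not remove it either. That fact is what produced one term per collar in the first place, and a single crossing from one boundary component of $\{p:\dist(p,\gamma_i)\le w_i\}$ to the other already has length at least $2w_i$. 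Even the sharper count of at most $2g-3$ crossings (a simple route through at most $2g-2$ thick components) only lowers the coefficient to $4g-6$. That is at most $3g-3$ only when $g\le 3$.

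\textbf{The statement fails for $g\ge 4$.} This cannot be repaired, because for $g\ge4$ and very small systole the inequality as stated is false. Take a pants decomposition whose dual graph is a chain $P_1,\dots,P_{2g-2}$, with a loop at each end and single and double edges alternating. Give all $3g-3$ curves length $\sys$; these are then the only small curves, and $\sys$ is the systole. The complement of the open collars is the disjoint union of the pants cores. Hence any path from the core of $P_1$ to the core of $P_{2g-2}$ crosses $2g-3$ collars completely, and
$\diam(\surf)\ge(2g-3)\cdot 2\arcsinh(1/\sinh(\sys/2))\ge(4g-6)\log(2/\sys)$.
This exceeds $(3g-3)(13+\log(1/\sys))$ as soon as $(g-3)\log(1/\sys)>39(g-1)$, for example when $g=4$ and $\sys=e^{-120}$.

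\textbf{What your argument does give.} Your strategy actually proves a bound of the form $(3g-3)(C+2\log_+(1/\sys))$. The additive constant still has to be checked: your estimate of about $35.7(g-1)$ for net-graph paths does not yet include the $O(1)$ part of each collar crossing.

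\textbf{Comparison with the paper.} The paper's proof has the same architecture as yours. A minimizing path meets each collar in at most one crossing, and its thick portion is bounded by an embedded-tube area argument, $\ell(p')\le 8\pi(g-1)/\arcsinh(1)$, where you use a packing graph. The paper obtains coefficient $1$ only by charging each crossing $\arcsinh(1/\sinh(\sys/2))$. That is half the collar width $2\arcsinh(1/\sinh(\ell/2))$ stated in Section~2, so the paper's proof contains the same factor-$2$ slip.
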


\begin{proof}
Let $p$ be a path realizing $\diam(\surf)$. Since $p$ is a shortest path,
the cylinder of width $\sys/4$ around $p$ is embedded in $\surf$. As the
area of this cylinder is bounded above by the area of $\surf$, we obtain
\[
    \diam(\surf)\cdot \frac{\sys}{4}\le 4\pi(g-1).
\]
Hence, if $\sys\ge 2\arcsinh(1)$, then
\[
    \diam(\surf)
    \le \frac{8\pi(g-1)}{\arcsinh(1)}
    < 13(3g-3),
\]
and the desired estimate follows since $\log_+(1/\sys)\ge0$.

We may therefore assume that $\sys<2\arcsinh(1)$.

We now assume that $\sys < 2\arcsinh(1)$. In this case, we decompose $p$ into two collections of subpaths: a subpath $p_\sys$ contained in the collars of the short curves of $\surf$, and a subpath $p'$ contained in the thick part. Arguing as above, we obtain the estimate
$\ell(p') \le \frac{8\pi(g-1)}{\arcsinh(1)}$.

Let $\gamma$ be a boundary component of a collar. Suppose that $p$ intersects $\gamma$ at least three times, and denote by $v_0$, $v_1$, and $v_2$ the first three intersection points along $p$ for some fixed orientation. Then the subpaths from $v_0$ to $v_1$ and from $v_1$ to $v_2$ must themselves be shortest paths. However, the shortest path between two points on $\gamma$ is entirely contained in the corresponding collar~\cite[Lemma 4.1.5]{buserGeometrySpectraCompact2010}, which yields a contradiction.

It follows that $p$ intersects each boundary component of a collar at most twice. In particular, each collar contributes to $p_\sys$ either through subpaths homotopic to portions of its boundary or through paths crossing the collar. Indeed, $p$ cannot cross the same short curve twice without being shortened by replacing the corresponding segment with a subarc of that curve. Finally, we note that the situation where one endpoint of $p$ lies inside a collar leads to a simpler configuration and can be treated in the same way.

When a component of $p_\sys$ intersects a short curve, the intersection is not necessarily orthogonal. The configuration that maximizes the length occurs when the path connects opposite points on the boundary components of the collar. In this case, the path is homotopic to the concatenation of a segment realizing the width of the collar and a geodesic segment connecting opposite boundary points.

The length of the geodesic loop homotopic to a short curve $s$, based at a point on the boundary of its collar, is $2\arcsinh(\cosh(\ell(s)/2))$ by Theorem~\ref{thm:collar}. This quantity is increasing in $\ell(s)$ and is therefore maximized when $\ell(s)=2 \arcsinh(1)$. Similarly, the length of a geodesic segment between opposite boundary points of a collar is bounded above by $2\arcsinh(\sqrt{\sqrt{2}-1})$. Consequently, we obtain
$\ell(p_\sys)\le (3g-3)\left(2\arcsinh(\sqrt{\sqrt{2}-1}) + \arcsinh(1/\sinh(\sys/2))\right)$,
since there are at most $3g-3$ collars on $\surf$ and $\arcsinh(1/\sinh(\sys/2)) > 2\arcsinh(\sqrt{\sqrt{2}-1})$.

Combining the estimates for $p'$ and $p_\sys$, we obtain
$$\ell(p)\le (3g-3)\left(2\arcsinh(\sqrt{\sqrt{2}-1}) + \arcsinh(1/\sinh(\sys/2))\right) + \frac{8\pi(g-1)}{\arcsinh(1)}.$$

A direct estimate of the preceding expression gives
\[
    \ell(p)\le (3g-3)\left(13+\log(1/\sys)\right)
\]
whenever $\sys\le1$. If $1<\sys<2\arcsinh(1)$, the same explicit expression is uniformly bounded above by $13(3g-3)$. Thus, in both cases,
\[
    \diam(\surf)=\ell(p)
    \le (3g-3)\left(13+\log_+(1/\sys)\right).
\]
\end{proof}
\begin{remark}
    When we do not need an explicit bound and when $\sys$ goes to 0, we have $\diam(\surf)=O(g*\log(1/\sys))$ and at fixed genus $\diam(\surf)=\Theta(\log(1/\sys))$.
\end{remark}

\subsection{Counting curves}

A standard approach to control the length of simple closed curves is to describe them via their Dehn–Thurston coordinates and to use the fact that the length of a curve is bounded below by a linear function of these coordinates~\cite{flp}. A similar strategy applies to paths instead of closed curves, as explained in~\cite{despreFlippingGeometricTriangulations2020}. Here, we refine this approach by making the dependence on $g$ and $\sys$ explicit.

Let $\mathcal{B}$ be a Bers pants decomposition of $\surf$, that is, a pants decomposition minimizing the maximum length of its curves among all such decompositions. Furthermore, we suppose that every curve in the pants decomposition is locally minimizing: that means that if we remove one curve, there is no strictly shorter curve it can be replaced by to complete the set of curves into a pants decomposition. By \cite{parlierShorterNoteShorter2024}, such a pants decomposition exists with all lengths $\le 4\pi(g-1)$.

Consider the Dehn–Thurston coordinates $(m_i,t_i,s_i)_{i\in[1,\dots,3g-3]}$ of a path $p$.

Fix a curve $\gamma$ in $\mathcal{B}$. It bounds two pairs of pants, whose union forms a sphere with four boundary components or it bounds one pair of pants, when it is an interior curve to a one-holed torus. In both cases, there are three combinatorial ways to partition these boundary components into two pairs. In each case, there exists a shortest curve realizing the partition (one of them being $\gamma$, by the local minimality of $\mathcal{B}$). We denote by $(m,t,s)$ the intersection numbers of $p$ with these three curves, where $m$ corresponds to $\gamma$, and the coordinates satisfy one of the relations $m = t + s$, $t = m + s$, or $s = m + t$.

Up to the choice of endpoints of $p$ on the boundary components, a path is determined by its coordinates $(m_i,t_i,s_i)_{i\in[1,\dots,3g-3]}$. Hence, for a fixed set of coordinates, there are at most $\sum m_i$ such paths. Moreover, for given $m_i$ and $t_i$, there are at most three possible values of $s_i$.

Let $b$ denote the length of the longest curve in $\mathcal{B}$. As discussed previously, we have $b \leq 4\pi(g-1)$. Any non-trivial simple arc in a pair of pants, with endpoints on its boundary, has length at least the width of the collars of the curves it traverses. This yields the lower bound $w := 2\arcsinh(\sinh^{-1}(b/2)) > 2e^{-b/2}$, which implies the following lemma.

\begin{lemma}
A simple path on $\surf$ of length $l$ intersects $\mathcal{B}$ at most $l/w + 1$ times, that is, $\sum m_i \le l/w + 1$.
\end{lemma}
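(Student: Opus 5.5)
We want to show that a simple path $p$ of length $l$ crosses the curves of $\mathcal{B}$ at most $l/w+1$ times in total. The plan is to cut $p$ at its intersection points with $\mathcal{B}$ and bound the length of each piece from below. We may assume $p$ is a geodesic arc, or at least has geodesic representatives transverse to $\mathcal{B}$; replacing $p$ by a shortest path in its homotopy class relative to the endpoints only shortens it and does not increase the intersection numbers. Let $x_1,\dots,x_k$ be the successive intersection points of $p$ with $\bigcup\mathcal{B}$, where $k=\sum m_i$. These points cut $p$ into $k+1$ subarcs: two terminal pieces, which we do not control, and $k-1$ interior pieces. Each interior piece lies in a single pair of pants of $\mathcal{B}$, with both endpoints on its boundary.

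Next we bound each interior piece from below. Such a piece is a non-trivial simple arc in a pair of pants $P$, going from a boundary curve $\alpha$ to a boundary curve $\beta$, which may equal $\alpha$. Non-triviality comes from minimality: if the piece were homotopic rel boundary into $\alpha$, we could remove two intersection points. Every boundary curve of $P$ has length at most $b$. By the collar lemma, in the form of Theorem~\ref{thm:collar} and its standard statement in Buser, each such curve $\gamma$ has an embedded half-collar inside $P$ of width $\arcsinh(1/\sinh(\ell(\gamma)/2))\ge \arcsinh(1/\sinh(b/2))$, and these half-collars inside $P$ are pairwise disjoint. An essential arc starting on $\alpha$ must leave the half-collar of $\alpha$ before returning to the boundary, and it must then cross either the half-collar of $\beta$ or, if $\beta=\alpha$, the half-collar of $\alpha$ a second time. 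Since the collars are geodesically convex, a geodesic arc cannot enter and exit the same collar boundary without being trivial, so both crossings are genuine.

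Hence each interior piece has length at least $2\arcsinh(1/\sinh(b/2))=w$, which is exactly the width the paper denotes $w$. Summing over the interior pieces gives $(k-1)w\le l$, so $k\le l/w+1$.

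The step I expect to be most delicate is the claim that an interior arc traverses two full half-collar widths. This is clear when $\alpha\neq\beta$, by disjointness of the collars. When $\alpha=\beta$, we need the returning arc to exit the collar of $\alpha$ completely. Here I would use that a geodesic arc with both endpoints on $\alpha$ which stays inside the collar of $\alpha$ is homotopic into $\alpha$, so it is trivial; this is the standard convexity and cylinder argument, compare \cite[Lemma 4.1.5]{buserGeometrySpectraCompact2010}. The inequality $w>2e^{-b/2}$ is elementary and only needed afterwards.
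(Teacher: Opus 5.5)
Your argument is correct and is the same as the paper's. You cut the geodesic path at its $k=\sum m_i$ crossings with $\mathcal{B}$, bound each of the $k-1$ interior pieces below by $w=2\arcsinh(1/\sinh(b/2))$ using the collars, and sum to get $(k-1)w\le l$; you also supply the details the paper compresses into one sentence (minimal position of the geodesic representative, disjoint half-collars, the case of an arc returning to the same boundary). One caveat: the collar fact you need is Buser's general collar theorem for pairwise disjoint simple closed geodesics, not Theorem~\ref{thm:collar} as stated in the paper, because the curves of $\mathcal{B}$ are only known to have length at most $b\le 4\pi(g-1)$ and need not be small.
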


\begin{proposition}\label{prop:nb_curves}
Let $\surf$ be a surface of genus $g$, let $u,v \in \surf$, and let $\ell>0$. Then there are at most $(\ell/w+1)^{3g-1}(6\ell/\sys+1)^{3g-3}$ distinct simple geodesic paths of length at most $\ell$ between $u$ and $v$ on $\surf$.
\end{proposition}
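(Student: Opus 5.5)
We bound the number of simple geodesic paths of length at most $\ell$ from $u$ to $v$ by counting the Dehn--Thurston data that determine them, relative to the Bers pants decomposition $\mathcal{B}$ fixed above. A simple path $p$ is determined, up to finitely many choices of how its endpoints sit, by the coordinates $(m_i,t_i,s_i)_{i\in[1,\dots,3g-3]}$. We have already noted two facts: for fixed $m_i,t_i$ there are at most three values of $s_i$, and for fixed coordinates there are at most $\sum m_i$ paths. So the count reduces to bounding how many tuples $(m_i)$ and $(t_i)$ can occur.

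\textbf{The $m_i$.} By the previous lemma, $\sum_i m_i\le \ell/w+1$, and each $m_i$ lies in $[0,\ell/w+1]$. Counting each of the $3g-3$ coordinates independently gives at most $(\ell/w+1)^{3g-3}$ choices. The factor $\sum m_i\le \ell/w+1$ for paths sharing the same coordinates, together with the endpoint choices, supplies the two remaining factors of $(\ell/w+1)$, giving exponent $3g-1$. The factors $3^{3g-3}$ coming from $s_i$ should be absorbed into the second factor, for instance by letting the twist-type count range over a signed, larger interval.

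\textbf{The $t_i$ (main obstacle).} The $t_i$ play the role of twist parameters, and we must show that $t_i$ ranges over at most $6\ell/\sys+1$ values for each $i$. The plan is to bound the intersection of $p$ with the dual curve $\delta_i$, which is a shortest curve in the four-holed sphere or one-holed torus around $\gamma_i$. Each time $p$ twists once around $\gamma_i$ inside the collar, it travels at least $\ell(\gamma_i)\ge\sys$. I would make this precise by lifting to the annular cover of $\gamma_i$, or by comparing with the collar geometry from Theorem~\ref{thm:collar}. The expected outcome is that each unit of $t_i$ beyond the minimal crossings costs a fixed fraction of $\sys$ in length, so $t_i\le c\,\ell/\sys$ with $c\le 3$, up to sign. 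Allowing both signs, or the alternative relations $t=m+s$, $s=m+t$, yields at most $6\ell/\sys+1$ values.

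The delicate point is justifying that twisting costs length proportional to $\sys$, rather than to $\ell(\gamma_i)$ minus boundary effects, when $p$ crosses the collar obliquely. I would first try the standard estimate: a geodesic arc crossing a collar and winding $k$ times around it has length at least $k\,\ell(\gamma_i)$ minus a constant, the constant being absorbed by the $m_i$ crossings.

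\textbf{Conclusion.} Multiplying over $i$ gives $(6\ell/\sys+1)^{3g-3}$ choices for the $t_i$. Combined with the bound for the $m_i$, this yields the claimed total $(\ell/w+1)^{3g-1}(6\ell/\sys+1)^{3g-3}$.
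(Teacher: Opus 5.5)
Your strategy matches the paper's: count the $m_i$ via the intersection lemma, the $t_i$ via the length cost of twisting, a factor $3$ per curve for the $s_i$, and $(\ell/w+1)^2$ for the endpoints. However, the twist step and the bookkeeping around it do not close.

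First, you never establish how much length one unit of $t_i$ costs. The constant $c\le 3$ is chosen so that $2c\,\ell/\sys+1$ matches the target, not derived. The paper's estimate is this: once the $m_j$ and the combinatorics of the path relative to $\mathcal{B}$ are fixed, there is a minimal forced value of $t_i$, and each further unit lengthens the path by at least $\sys/2$. The reason is that one extra crossing of the dual curve $\delta_i$, which meets $\gamma_i$ at most twice, requires at least half a turn around $\gamma_i$. Moreover, $t_i$ is an intersection number, hence non-negative, so there is no sign to allow for. Thus $t_i$ lies in an interval of length $2\ell/\sys$ and takes at most $2\ell/\sys+1$ values.

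Second, the factor $3^{3g-3}$ from the $s_i$ cannot be ``absorbed'' by enlarging the twist interval. In your accounting the whole factor $6$ is already spent on the twist ($2$ for the sign times $c=3$). Taken literally, your argument therefore gives $3^{3g-3}$ times the claimed bound. Also, the alternative relations $t=m+s$, $s=m+t$ are exactly the source of that factor $3$, so you cannot use them a second time to widen the $t_i$ range. The factor must be multiplied in, as the paper does. This gives $3(2\ell/\sys+1)$ choices per curve and a total of $3^{3g-3}(\ell/w+1)^{3g-1}(2\ell/\sys+1)^{3g-3}$.

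Note that $3(2\ell/\sys+1)=6\ell/\sys+3$. So what this method, and the paper's own proof, actually yields is $(\ell/w+1)^{3g-1}(6\ell/\sys+3)^{3g-3}$. The ``$+1$'' in the statement is a harmless slip for its later use in Lemma~\ref{lem:nbflips}. Do not distort the twist estimate to hit it exactly.
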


\begin{proof}
We count paths by fixing their intersection pattern with $\mathcal{B}$, that is, by fixing the values of $m_i$ and the combinatorics of how the path connects to $\mathcal{B}$.

Consider first the twist parameters $t_i$. In each pair of adjacent pairs of pants, there is a minimal contribution to $t_i$: each time a path connects two boundary components lying on different sides of the curve corresponding to $t_i$, it contributes at least one unit. Increasing $t_i$ by one increases the length of the path by at least $\sys/2$. It follows that, for fixed $m_i$, the number of possible values of each $t_i$ is bounded by $2\ell/\sys+1$.

We now estimate the total number of possibilities. First, the number of choices for the $m_i$ is bounded by $(\ell/w+1)^{3g-3}$, since each $m_i$ is non-negative and $\sum m_i \le \ell/w+1$. For the $t_i$, we obtain at most $(2\ell/\sys+1)^{3g-3}$ possibilities. For each pair of adjacent pairs of pants, there are at most three possible values of $s_i$, yielding a factor $3^{3g-3}$.

It remains to account for the way the endpoints $u$ and $v$ are connected to $\mathcal{B}$. This contributes at most $(\ell/w+1)^2$ additional choices.

Putting everything together, we obtain at most
$3^{3g-3}(\ell/w+1)^{3g-1}(2\ell/\sys+1)^{3g-3}$
homotopy classes of simple paths of length at most $\ell$. This completes the proof.
\end{proof}

\section{Construction of the $\e$-net}\label{sec:firstalgo}
\subsection{Data-structure and algorithm}
The input of the algorithm consists of the Delaunay triangulation of $\surf$ with a single vertex $b$, together with the Dirichlet domain $\D$ of a lift $\h b$ and the group $\Gamma$ generated by side-pairings. As mentioned previously, we are readily able to move between certain ways of representing our surface, so this does not induce a loss of generality.

Our algorithm is inspired by Shewchuk's Delaunay refinement~\cite{shewchukDelaunayRefinementAlgorithms2002}. The general idea is to break each Delaunay triangle whose circumcircle has a radius greater than $\e$ by inserting its circumscribing center in the triangulation. 

We reuse the data structure proposed by Despré \emph{et al.} for computing the Delaunay triangulation of a surface by edge flips~\cite{despreFlippingGeometricTriangulations2020}. A triangulation of $\surf$ is represented by 
\begin{itemize}
	\item its vertices: a vertex $p$ has constant-time access to its lift $\h{p_b}$ in $\D$ and one of its incident triangles;
	\item and its triangles: a triangle $\Delta$ has constant-time access to its three vertices $p_0^\Delta, p_1^\Delta, p_2^\Delta$, its three adjacent triangles, and three isometries $\g_0^\Delta=\id, \g_1^\Delta, \g_2^\Delta$ in $\Gamma$ defined as follows. 
\end{itemize}
A triangle $\Delta=(p_0^\Delta; p_1^\Delta; p_2^\Delta)$ does not always have a lift entirely included in $\D$. However, it always has at least one lift with at least one vertex in $\D$ (see Figure~\ref{fig:data_structure}). Let us choose such a lift and denote it as $\h{\Delta_0}$; up to a re-indexing of its vertices, $\h{p_0^\Delta}\in\D$. Then $\g_1^\Delta$ and $\g_2^\Delta$ are the isometries such that the other two vertices of $\h{\Delta_0}$ are $\g_1^\Delta\h{p_1^\Delta}$ and $\g_2^\Delta\h{p_2^\Delta}$. Note that the other lifts of $\Delta$ having at least one vertex in $\D$ can be retrieved by applying the inverses of these isometries to $\h{\Delta_0}$.
The union, on all triangles of the triangulation of $\surf$, of their lifts with at least one vertex in $\D$ covers the fundamental domain $\D$.
 
\begin{figure}[!ht]
	\centering
	\includegraphics{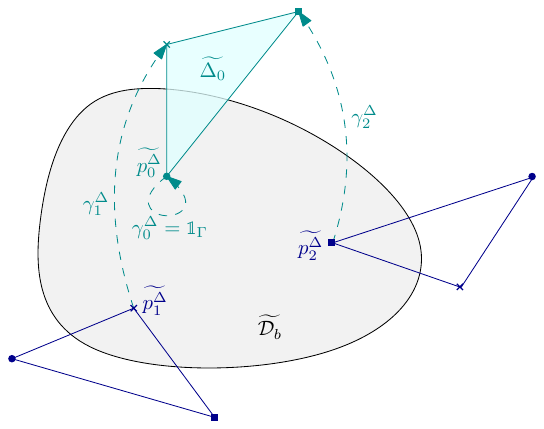}
	\caption{Example of a triangle $\Delta$ having three lifts with one vertex in $\D$ (the hyperbolic triangles are schematically represented with straight edges).}\label{fig:data_structure}
\end{figure}

We denote by $DT(\cdot)$ a Delaunay triangulation of a set of points on $\surf$. 

Let us fix  $\e > 0$. In a first step, the set of points is initialized as $P_1=\{b\}$.

At each step $i\geq 2$, the algorithm inserts the circumscribing center $c$ of a triangle $\Delta^\e$ whose radius is greater than $\e$. The set of points is updated as $P_i=P_{i-1}\cup \{c\}$ as well as the Delaunay triangulation $DT(P_i)$. To do so, several operations are needed. 

We first compute the radius of $\h{\Delta_0}$ for every triangle $\Delta$ of $DT(P_{i-1})$, until a triangle $\Delta^\e$ whose radius is greater than $\e$ is found.\footnote{Of course a priority queue could be used to improve the complexity of this search. We accept a linear complexity for simplicity, as this is not the dominant operation in the algorithm.} The circumcenter $\h c$ of the lift $\h {\Delta^\e_0}$ is a lift of $c$, but it does not necessarily lie in $\D$. This can be checked by testing whether $\h {b}$ and $\h c$ lie on the same side of the supporting line of each side of $\D$.

To actually insert $c$ into $DT(P_{i-1})$, we need to find the lift $\h c_b$ of $c$ that lies in $\D$. 
If $\h c$ lies in $\D$, then $\h {c_b}= \h c$. Otherwise, the algorithm walks in the tiling $\{\g\D\}_{\g\in\Gamma}$ of $\HH$ along the geodesic segment $\h{p_0^{\Delta^\e}}\h{c}$.
The first copy of $\D$ traversed by $\h{p_0^{\Delta^\e}}\h{c}$ is found by looking for the side $s_{j_1}, j_1\in\{0,\ldots,k-1\}$ of $\D$ intersecting it.\footnote{To check whether two geodesic segments $\h x_1 \h x_2$ and $\h y_1 \h y_2$ intersect, we check whether $\h x_1$ and $\h x_2$ lie on opposite sides of the supporting line of $\h y_1 \h y_2$, and we run the same test, swapping the roles of $x$ and $y$.}
The walk along $\h{p_0^{\Delta^\e}}\h{c}$ continues in $\g_{j_1}\D$, and so on, until the copy $\g_{j_n}\ldots \g_{j_1}\D$ containing $\h c$ is found. Then $\h c_b = \g_{j_1}^{-1}\ldots \g_{j_n}^{-1}\h c$. Note that the walk still works when $\h{p_0^{\Delta^\e}}\h{c}$ goes through a vertex of a copy of $\D$.

The Delaunay triangulation $DT(P_i)$ of $P_i=P_{i-1}\cup\{c\}$ can then be computed. First, the triangle $\Delta_c$ of $DT(P_{i-1})$ containing $c$ is found by naively checking if $\h{c_b}$ lies in one of the (at most three) lifts of each triangle $\Delta$ in $DT(P_{i-1})$ having a vertex in $\D$. This can be done by testing, for each edge, whether $\h c_b$ and the third vertex of the triangle lie on the same side of its supporting line. Then $\Delta_c$ is split into three by creating an edge between $c$ and its three vertices. In the data structure, the three isometries stored in each new triangle are $\id$ for $c$, and the corresponding isometries in $\Delta_c$ for the other two vertices. Then $DT(P_i)$ is computed with a sequence of flips and the data structure is updated~\cite{despreFlippingGeometricTriangulations2020}.

The termination of the algorithm is quite obvious. At step $i=1$, the $\e$-packing $P_1$ consists of one point. At each step $i\geq 2$, the point added to $P_i$ is the circumcenter of a Delaunay triangle whose radius is at least $\e$. Because no vertex lies in the interior of a Delaunay disk, the center added is at distance at least $\e$ from any point of $P_i$. By induction, $P_i$ is an $\e$-packing containing $i$ points. By Proposition~\ref{prop:nb_points}, the algorithm must terminate after a finite number $N-1$ of insertions. It returns an $\e$-packing $P_N$ of cardinality $N$.
	
It remains to show that $P_N$ is an $\e$-covering of $\surf$. Let $x$ be a point on $\surf$. It lies in a triangle $\Delta$ of $DT(P_N)$. Let $\h \Delta$ be a lift of $\Delta$ and $\h x$ the lift of $x$ lying in $\h \Delta$. The circumdisk of $\h \Delta$ has a radius $r\leq\e$. There is a vertex of $\h{\Delta}$ whose distance to $\h{x}$ is at most $r$ (see Lemma~\ref{lem:point_triangle} in appendix). That vertex is a lift of a point of $P_N$ by definition of $\Delta$. It follows that $\dist(x, P_N)\leq \e$, therefore $P_N$ is an $\e$-net.
This establishes the correctness of our algorithm.

\subsection{Complexity of the location part}\label{sec:complexity}

We first explicit that the following operations take $O(1)$ time in the real RAM model and we consider them as elementary operations:
\begin{itemize}
	\item Computing $\h{\Delta_0}$  from a triangle $\Delta$ of the data structure;
	\item Computing the radius or the center of the circumcircle of a triangle in $\HH$;
	\item Deciding if a point lies on the right or the left side of an oriented geodesic segment in $\HH$;
	\item Flipping an edge of a triangle~\cite[Section  4.1]{despreFlippingGeometricTriangulations2020}. 
\end{itemize}

\begin{lemma}\label{lem:complexityLocate}
    The total cost of the computation and location of circumcenters is $O(N+g^2)$.
\end{lemma}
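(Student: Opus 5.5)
The plan is to split the cost into two parts: the computation of circumcenters, and the walk in the tiling $\{\g\D\}_{\g\in\Gamma}$ that brings $\h c$ back into $\D$. I take the lemma to cover only these two operations; the naive search for the triangle $\Delta_c$ containing $\h c_b$ and the flips would be charged elsewhere.

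\textbf{Circumcenters.} Each insertion computes one circumcenter from $\h{\Delta^\e_0}$, which is $O(1)$ by the list of elementary operations. There are $N-1$ insertions by the termination argument and Proposition~\ref{prop:nb_points}. This part therefore costs $O(N)$.

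\textbf{Walk: a geometric confinement.} Let $\h p=\h{p_0^{\Delta^\e}}\in\D$ and let $r$ be the circumradius. For a point $\h x$ of the segment $\h p\h c$, we have $\disth(\h x,\h p)=r-\disth(\h x,\h c)$. Every other lift $\h q$ of a point of $P_{i-1}$ lies outside the open circumdisk, so by the triangle inequality $\disth(\h x,\h q)\geq r-\disth(\h x,\h c)$. Hence the whole segment $\h p\h c$ lies in the Voronoi cell of $\h p$ in the lifted point set $\h P_{i-1}$. This cell is contained in the Dirichlet domain $\mathcal D_{\h p}$ of $\h p$, which projects injectively to $\surf$. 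Two consequences follow.
\begin{itemize}
\item If $\h p=\h b$, the segment stays inside $\D$ and the walk is free.
\item In general, the segment projects to an embedded geodesic arc on $\surf$ of length $r$. Moreover $\h c$ is the center of an empty disk, so $r$ is at most the covering radius of $P_{i-1}$, which is at most $\diam(\surf)$ and decreases as points are added.
\end{itemize}

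\textbf{Walk: accounting.} Each step of the walk scans the $k\leq 12g-6$ sides of the current copy of $\D$, so it costs $O(g)$. It remains to bound the total number of copies crossed, summed over all insertions, by $O(N/g+g)$. This would give $O(N+g^2)$ once the per-step factor $O(g)$ is absorbed, or even better if the scan for the exit side is made amortized. The intended mechanism has two parts.
\begin{itemize}
\item While few points have been inserted, circumradii can be large. But there are only $O(g)$ such "early" insertions, since the initial one-vertex triangulation has $O(g)$ triangles. Each of them crosses $O(g)$ copies, because an embedded arc inside $\mathcal D_{\h p}$ meets each of the $O(g)$ side-geodesics of $\D$ on $\surf$ a bounded number of times; I would prove that last point by injectivity of the projection together with the fact that geodesics in $\HH$ meet at most once. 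This costs $O(g^2)$ in total.
\item Once the Delaunay triangles have circumradius comparable to the injectivity scale, an empty disk of radius $r$ around $\h c$ with $\h p$ on its boundary and $\h p\in\D$ meets only $O(1)$ copies of $\D$. Each later insertion then costs $O(1)$ walk steps, for a total of $O(N)$.
\end{itemize}

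\textbf{Main obstacle.} The hard step is making this amortization rigorous: showing that the number of copies of $\D$ crossed by $\h p\h c$ is $O(1)$ except for $O(g)$ insertions. I would try first to charge each crossing of a side $s_j$ to a side of $\D$ that is "destroyed" as the Voronoi cell of $\h b$ shrinks. Since $\D$ has $O(g)$ sides, each contributing $O(g)$ total, this would give the $g^2$ term.
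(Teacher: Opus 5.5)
Your proposal has a genuine gap, and it starts from reading the lemma as a bound on the sum over all $N-1$ insertions. The paper's proof bounds the cost of a \emph{single} insertion step. That step consists of scanning $DT(P_{i-1})$, which has $2i+4g-6$ triangles, for $\Delta^\e$; walking to $\h c_b$ in $O(g^2)$; scanning again for the triangle $\Delta_c$ containing $\h c_b$; and splitting it. The total is $10(2i+4g-6)+(12g-6)^2+9=O(N+g^2)$ at step $i$. This per-step bound is all that is used later, since it is dominated by the flip count of Lemma~\ref{lem:nbflips}. The scan for $\Delta^\e$ (which you treat as free) and the search for $\Delta_c$ (which you set aside) belong to that accounting. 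With the algorithm's naive linear searches they are only bounded by $O(i+g)$ at step $i$.

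More importantly, you leave the amortization to $O(N/g+g)$ copies open, and both of its mechanisms fail.
\begin{itemize}
\item The insertions made from triangles of circumradius above a threshold $r_0$ are not bounded by the number of triangles of the initial one-vertex triangulation, because every insertion creates new triangles. The only available bound is that these points form an $r_0$-packing, so there are at most $16(g-1)(1/r_0^2+1/\sys^2)$ of them by Proposition~\ref{prop:nb_points}.
\item The claim that later segments cross $O(1)$ copies of $\D$ is false, because $\D$ does not refine as $\e$ decreases. If $\D$ has $4g$ sides, its vertices form a single cycle, so $4g$ copies of $\D$ meet at every vertex. Any circumdisk containing such a vertex meets all of them. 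A segment $\h{p_0^{\Delta^\e}}\h c$ starting from a net point close to that vertex can cross $\Theta(g)$ copies, however small $\e$ is.
\end{itemize}

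What you actually need is the per-insertion bound on the walk, and there your justification is incomplete. Your Voronoi-cell argument is correct. It amounts to the paper's observation that $\h{p_0^{\Delta^\e}}$ is the closest lift of $p_0^{\Delta^\e}$ to $\h c$, so the segment is a lift of a distance path. However, injectivity of its projection, together with the fact that two geodesics of $\HH$ meet at most once, only shows that the segment crosses each translate $\g s_j$ at most once. It says nothing about how many translates of $s_j$ the segment meets, and two embedded geodesic arcs on $\surf$ can cross many times.

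The paper closes this with two facts from~\cite{despreRepresentingInfiniteHyperbolic2021}:
\begin{itemize}
\item every side of $\D$ is a distance path or the concatenation of two distance paths;
\item two distance paths with no common subarc intersect at most once.
\end{itemize}
Hence $\h{p_0^{\Delta^\e}}\h c$ crosses at most $2k$ sides of copies of $\D$. Each visited copy costs $O(k)$ side tests, so the walk costs $O(k^2)=O(g^2)$ for \emph{every} insertion, not only early ones. Adding the $O(i+g)$ scans gives the stated $O(N+g^2)$.
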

\begin{proof}
At the beginning of a step $i\geq 2$, $P_{i-1}$ contains $i-1$ points, the Euler characteristic shows that $DT(P_{i-1})$ has $2i+4g-6$ triangles, which gives the cost of finding $\Delta^\e$.
	
Recall that the number $k$ of sides of $\D$ is at most $12g-6$ (see Section~\ref{sec:background}). Determining whether $\h c$ lies in (a given copy of) $\D$ thus requires at most $12g-6$ elementary operations.
The algorithm tests the copies of $\D$ that intersect the geodesic segment $\h{p_0^{\Delta^\e}} \h c$. Since $\h{\Delta^\e_0}$ is a triangle of $DT\left(\h{P_{i-1}}\right)$, its circumcircle does not contain any other lift of $p_0^{\Delta^\e}$, so $\h{p_0^{\Delta^\e}}$ is the closest lift of $p_0^{\Delta^\e}$ to $\h c$. The geodesic segment $\h{p_0^{\Delta^\e}} \h c$ is thus a lift of a distance path\footnote{A distance path on $\surf$ is a shortest path between two points. It is necessarily a geodesic segment, but not all geodesic segments are distance paths since they only locally minimize distances.} on $\surf$, what is called a \emph{distance path} in $\HH$. By~\cite[Proposition 14]{despreRepresentingInfiniteHyperbolic2021}, every side of $\D$ is either a distance path, or the concatenation of two distance paths. As two distance paths that do not have a subarc in common, which is the case here, can intersect at most once~\cite[Lemma 8]{despreRepresentingInfiniteHyperbolic2021}, $\h{p_0^{\Delta^\e}} \h c$ traverses at most $2k$ sides of copies of $\D$. If an intersection occurs at a vertex of degree $d$ of a copy of $\D$, then this counts for $d$ intersections. Searching the copy of $\D$ containing $\h c$ hence requires $k^2\leq (12g-6)^2$ elementary operations. Computing $\h{c_b}$ costs 1 operation. 
		
Finding $\Delta_c$ in $DT(P_{i-1})$ when $\h{c_b}$ is known requires at most $9(2i+4g-6)$ elementary operations since it amounts to checking the three edges of at most three lifts of each triangle. The update of the data structure when splitting the triangle containing $c$ into three is done in $8$ elementary operations (deleting the triangle that contains $c$, adding $c$ to the list of vertices, creating 3 triangles and 3 isometries). 
		
Adding the above costs for step $i$, locating $c$ in $DT(P_{i-1})$ and splitting the triangle containing it costs at most $10(2i+4g-6)+(12g-6)^2+9$ elementary operations. 
\end{proof}

\subsection{Complexity of the flip part} 
The total complexity of the algorithm comes from the sum of those of Lemma~\ref{lem:complexityLocate} and Lemma~\ref{lem:nbflips}, the second one dominating the first one.
	
\begin{lemma}\label{lem:nbflips}
	The total number of flipped edges during the execution of the algorithm is at most $f(g,\sys) 1/\e^4$, where $$f(g,\sys)=256(g-1)^2(3(g-1)\log(e^{13}/\sys)e^{2\pi(g-1)}+1)^{3g-1}(18\pi(g-1)\log(e^{13}/\sys)/\sys+1)^{3g-3}/\sys^4$$ if $\sys\ge2*\arcsinh(1)$ and $$f(g,\sys)=256(g-1)^2(8\pi(g-1)e^{2\pi(g-1)}/\arcsinh(1)+1)^{3g-1}(24\pi(g-1)/\arcsinh(1)^2+1)^{3g-3}$$ if $\surf$ is thick.
\end{lemma}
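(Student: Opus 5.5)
The plan is to bound the number of flips at each insertion step and then sum over all $N-1$ insertions. The tool I will use is the flip algorithm of~\cite{despreFlippingGeometricTriangulations2020}: when a new point $c$ is inserted and the triangle $\Delta_c$ is split, every edge created by a flip is a segment incident to $c$. I will check that this standard Euclidean fact, that flips only ever create edges incident to the new vertex, carries over to the surface setting through lifts to $\HH$. Also, an edge once flipped away never reappears during that insertion. So the number of flips at step $i$ is at most the number of distinct geodesic segments from $c$ to points of $P_{i-1}$ that can appear as edges of intermediate triangulations.

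The next step is to bound the length of such segments. Every intermediate triangle is contained in the union of the triangles of $DT(P_{i-1})$ whose circumdisks contain $c$, together with $\Delta_c$. I would argue, following the length-control argument of~\cite{despreFlippingGeometricTriangulations2020}, that any edge produced has length at most a constant times $\diam(\surf)$, say at most $2\diam(\surf)$. Since such an edge is a geodesic segment contained in a star-shaped region around $c$, it is simple. Lemma~\ref{lem:diam_sys} then gives $\ell\le 2(3g-3)(13+\log_+(1/\sys))$, which is of the form $6(g-1)\log(e^{13}/\sys)$ and matches the shape of $f$. In the thick case I would instead use the area bound $\diam(\surf)\le 8\pi(g-1)/\arcsinh(1)$ from the first part of that proof.

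I then count segments. Apply Proposition~\ref{prop:nb_curves} with $u=c$ and each $v\in P_{i-1}$, and with $\ell$ as above. Use $w>2e^{-b/2}\ge 2e^{-2\pi(g-1)}$, so that $\ell/w\le \ell e^{2\pi(g-1)}/2$. This bounds the number of candidate edges from $c$ to a fixed vertex by $(\ell/w+1)^{3g-1}(6\ell/\sys+1)^{3g-3}$, which is $\le f(g,\sys)\sys^4/(256(g-1)^2)$ after substitution; in the thick case I would replace $\sys$ by $2\arcsinh(1)$. Multiplying by $|P_{i-1}|\le N$ vertices and by $N$ steps gives a factor $N^2$. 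Proposition~\ref{prop:nb_points} gives $N\le 16(g-1)(1/\e^2+1/\sys^2)\le 16(g-1)\cdot 2/(\e^2\sys^2)$ for $\e,\sys\le 1$, so $N^2\le 256(g-1)^2/(\e^4\sys^4)$, up to the constant $2$, which I would absorb by adjusting constants. In the thick case $N\le 16(g-1)/\e^2$, which explains why there is no $1/\sys^4$ factor there.

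The main obstacle is the length bound on edges created by flips. It needs a genuine geometric argument that the new edges stay inside the conflict region and are simple. If that fails, I would fall back on bounding edges created by the flip algorithm by the diameter of the triangulation's domain, as in~\cite{despreFlippingGeometricTriangulations2020}, which might cost an extra constant factor in $\ell$.
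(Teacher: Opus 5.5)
Your skeleton matches the paper's. No flipped edge is ever re-created, every created edge has length $O(\diam(\surf))$, and one counts homotopy classes of simple arcs between pairs of vertices with Proposition~\ref{prop:nb_curves}, multiplied by $N^2$ via Lemma~\ref{lem:diam_sys} and Proposition~\ref{prop:nb_points}.

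Counting per insertion (created edges are incident to $c$, so at most $N$ endpoints per step over $N$ steps) is a legitimate variant of the paper's bookkeeping. The paper instead proves that the lifted polyhedral surface $\Sigma_j$ moves towards $\mathbb{S}^2$ at every split as well as at every flip. Hence no edge reappears during the whole run, and the paper simply counts vertex pairs. Justify the incidence to $c$ with that same convexity rather than by transporting the planar cavity argument: a segment between two old vertices that is not an edge of $DT(P_{i-1})$ lies strictly beyond the convex surface of $DT(P_{i-1})$, whereas every later surface lies between that one and $\mathbb{S}^2$. Also let $v$ range over $P_i$, not $P_{i-1}$. The split only creates edges from $c$ to old vertices, yet $DT(P_i)$ may contain a loop at $c$, so flips can create loops from $c$ to itself.

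The genuine gap is the one you flag: the length bound. Your justification is planar and does not transfer to $\surf$:
\begin{itemize}
\item In $\HH$ a single lifted circumdisk can contain several lifts of $c$, so the conflict regions of distinct lifts of $c$ overlap.
\item A flip can then have new triangles on both sides of the flipped edge; this is exactly how a loop at $c$ is created.
\item The endpoints of the created edge then need not lie in one empty circumdisk of radius at most $\diam(\surf)$, so your argument does not yield $2\diam(\surf)$.
\end{itemize}
There is also no meaningful star-shaped region here. Simplicity is automatic anyway, since these are edges of a triangulation.

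Your fallback to \cite{despreFlippingGeometricTriangulations2020} needs a fundamental domain with two properties. It must be made of triangles of a triangulation whose lifted surface is contained in every $\Sigma_j$ you flip through, and its diameter must be controlled by $\diam(\surf)$. You do not provide one. This is precisely what monotonicity across splits buys the paper: the single domain $\Omega_{\widetilde{b}}$, formed by one lift of each triangle of $T_1$ incident to $\widetilde b$, serves for the whole run. The bound then follows in three steps:
\begin{itemize}
\item Each point of $\Omega_{\widetilde{b}}$ lies in a Delaunay circumdisk of radius less than $\diam(\surf)$ with $\widetilde b$ on its boundary, hence within $2\diam(\surf)$ of $\widetilde b$.
\item Take the lift of the midpoint of a created edge lying in $\Omega_{\widetilde{b}}$. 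The disk of radius $4\diam(\surf)$ about it contains $\Omega_{\widetilde{b}}$.
\item The proof of \cite[Lemma~10]{despreFlippingGeometricTriangulations2020} then bounds the edge by $8\diam(\surf)$.
\end{itemize}
With $8\diam(\surf)$ instead of $2\diam(\surf)$, the inner constants exceed those displayed in $f$, and the paper does not adjust them either. So what the argument actually supports is the form of $f$, not the exact constants you tried to match.
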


The proof of this lemma mimicks the proofs in~\cite{despreFlippingGeometricTriangulations2020}.
The situation is quite different here, as the points are inserted incrementally and the flips are done at each insertion, whereas all points are know in advance in~\cite{despreFlippingGeometricTriangulations2020}, which requires to rewrite a complete proof.

\begin{proof}
	Denote as $T_1=DT(P_1), T_2, \ldots, T_K$ the sequence of triangulations appearing during our algorithm. For $j\geq 1$, $T_{j+1}$ is obtained from $T_j$ either by flipping an edge, or by splitting a triangle into three from a new vertex. Every triangulation is \emph{geometric}: it is equivalent to a triangulation whose edges are geodesic segments that do not intersect in their interior. Flipping an edge maintains the property~\cite{despreFlippingGeometricTriangulations2020}; splitting a triangle clearly maintains it, too.
	  
	We associate to any triangulation $T$ of $\surf$ a polyhedral surface $\Sigma$ in $\mathbb{R}^3$ as in~\cite[Section 2.3]{despreFlippingGeometricTriangulations2020}: the vertices of $\Sigma$ are obtained from the vertices of the (infinite) lift $\h T$ of $T$ by the stereographic projection onto the unit sphere $\mathbb{S}^2$. 
	The (infinite) surface $\Sigma$ is convex if and only if $T$ is a Delaunay triangulation of $\surf$. 
	Let us show that $\Sigma_{j+1}$ \emph{contains} $\Sigma_j$ for each $j\geq 1$, i.e., $\Sigma_{j+1}$ lies between $\Sigma_j$ and $\mathbb{S}^2$. 
	The case when $T_{j+1}$ is obtained by flipping a non-Delaunay edge $e$ of $T_j$ is studied in~\cite[Section 2.3]{despreFlippingGeometricTriangulations2020}: $\Sigma_j$ is concave at each edge projected from a lift of $e$, and after the flip $\Sigma_{j+1}$ is convex at all the new edges. Let us now examine the case when $T_{j+1}$ is obtained by splitting a triangle of $T_j$ into three from a new vertex. In this case, $T_j$ is a Delaunay triangulation, so $\Sigma_j$ is convex, but the edges left in $\Sigma_{j+1}$ from each triangle of $T_j$ that is split from the new vertex on $\mathbb{S}^2$ are generally not convex. The surface $\Sigma_j$ is thus contained in $\Sigma_{j+1}$ as well. As a result, every $\Sigma_{j'}$ with $j'>j$ contains $\Sigma_j$. If an edge is flipped at a step $i$ of the algorithm, the corresponding line segment becomes interior to the polyhedral surface and all the following, and it can never reappear.
			
	We can now observe that no flip will ever create an edge longer than $8\diam(\surf)$.
	Consider the fundamental domain $\Omega_{\widetilde{b}}$ consisting of one lift of each triangle of $T_1$ incident to $\h b$.
	For all $\h x\in\Omega_{\widetilde{b}}$, $\disth(\h x,\h b)<2\diam(\surf)$, where $\disth$ is the distance in $\HH$. This is because $\h x$ belongs to the circumdisk of the triangle $\h t \in\h{T_1}$ it lies in, and $\h b$ lies on its boundary. That circumdisk must have a radius $r<\diam(\surf)$, otherwise it would contain at least one lift of every point of $\surf$. In particular, it would contain a lift of a vertex of $\h t$ in its interior, which is impossible since $\h t$ is a Delaunay triangle.
	Let $e$ be an edge created by a flip and let $\h v$ be the lift of its midpoint $v$ lying in $\Omega_{\widetilde{b}}$. The fundamental domain $\Omega_{\widetilde{b}}$ is strictly included in the disk of radius $4\diam(\surf)$ and centered at $\h v$: indeed, if $\h x\in\Omega_{\widetilde{b}}$ then $\disth(\h x, \h v)\leq \disth(\h x, \h b)+\disth(\h b, \h v) < 4\diam(\surf)$. The proof of~\cite[Lemma 10]{despreFlippingGeometricTriangulations2020}, replacing $2\Delta(T)$ by $8\diam(\surf)$ and $\Omega$ with $\Omega_{\widetilde{b}}$, shows that $e$ cannot be longer than $8\diam(\surf)$.

The proof of~\cite[Theorem 19]{despreFlippingGeometricTriangulations2020}, replacing $2\Delta(T)$ with $8\diam(\surf)$, proves that a Delaunay flip algorithm performed on a triangulation of $\surf$ with $N$ vertices flips at most $C_\surf N^2$ edges, where $C_\surf$ is the number of paths smaller than $8\diam(\surf)$. Putting the bound from lemma~\ref{lem:diam_sys} and the upper bound from $N$ from Proposition~\ref{prop:nb_points} into the formula of Proposition~\ref{prop:nb_curves} gives the claimed $f(g,\sys)$.
\end{proof}

\section{Handling $\e$-thin parts}\label{sec:thinparts}
We have shown that the number of points in an $\e$-net of a hyperbolic surface $\surf$ depends on the systole $\sys$ of the surface. Moreover, it becomes arbitrarily large as $\sys$ decreases to zero. This is due to the width of a collar of a small curve, which grows when the curve becomes shorter. It is thus desirable to avoid adding points on these collars. In this section, we explain how to restrict the computation of an $\e$-net to the $\e$-thick part $\thick$ of $\surf$, resulting in what we call a pseudo $\e$-net. The number of points in a pseudo $\e$-net then only depends on the genus of the surface and not on its systole. Since collars are isometric to cylinders, using a pseudo $\e$-net instead of an $\e$-net for approximation algorithms would not result in a significant loss of geometric information.

\begin{definition}[Pseudo $\e$-net]
	Let $\surf$ be a hyperbolic surface and $\e>0$. A \emph{pseudo $\e$-net} of $\surf$ is a finite set of points $P \subset \thick$ such that $P$ is an $\e$-net of the $\e$-thick part $\thick$.
\end{definition}

\subsection{Detection of collars}

The following lemma defines a quantity $l_\e$ that will help us detect $\e$-collars in the algorithm.

\begin{lemma}
\label{lem:ploop}
	Let $T$ be the Delaunay triangulation of a set of points $P$ on a hyperbolic surface $\surf$, $\e>0$ and $l_\e$ be the positive number such that $\cosh(\e)=\cosh^2(l_\e/2)$. If $P$ is an $\e$-packing and if there exists a geodesic loop on $\surf$ based at a point $p\in P$ that is shorter than $l_\e$, then $T$ contains a loop edge based at $p$ that is shorter than $l_\e$.
\end{lemma}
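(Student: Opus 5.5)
The plan is to work in the universal cover and use the empty-circle characterisation of Delaunay edges. Let $p\in P$ and pick, among all non-trivial geodesic loops based at $p$, one of minimal length $d$. By hypothesis $d<l_\e$. Lift $p$ to $\h p$. The loop lifts to the geodesic segment $\h p\,\g\h p$ for some $\g\in\Gamma\setminus\{\id\}$, and $\disth(\h p,\g\h p)=d$. Let $\h m$ be the midpoint of this segment and let $\overline{B}$ be the closed hyperbolic disk centred at $\h m$ of radius $d/2$; the segment is a diameter of $\overline B$. I will show that $\overline{B}$ contains no point of the lifted set $\h P=\Gamma\h P_b$ other than $\h p$ and $\g\h p$. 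By the empty-circle property, the segment $\h p\,\g\h p$ is then an edge of every Delaunay triangulation of $\h P$. Its projection is therefore a loop edge of $T$ based at $p$, of length $d<l_\e$, which is what we want.

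\textbf{Key metric estimate.} For any $\h x\in\overline{B}$,
\[
\cosh\big(\min\{\disth(\h x,\h p),\disth(\h x,\g\h p)\}\big)\le\cosh^2(d/2).
\]
To prove it, use symmetry: by the reflection across the perpendicular bisector of $\h p\,\g\h p$, we may assume $\h x$ lies on the side of $\h p$, so the minimum is $\disth(\h x,\h p)$. Drop the perpendicular from $\h x$ to the line through $\h p$ and $\g\h p$ and write its foot as $\h y$. Hyperbolic Pythagoras gives $\cosh\disth(\h x,\h p)=\cosh\disth(\h x,\h y)\cosh\disth(\h y,\h p)$. Compare this with the point $\h z$ on the bisector at distance $d/2$ from $\h m$, for which the same formula gives exactly $\cosh^2(d/2)$. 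A monotonicity argument shows the maximum of the minimum over the disk is attained at $\h z$: the candidates are restricted to the half-disk, and the distance from $\h p$ is maximised on the boundary arc, at its far end. I expect this step to be the main obstacle. One could alternatively parametrise $\partial\overline B$ and use the hyperbolic law of cosines in the triangle $\h p\,\h m\,\h x$ together with $\cosh(a)\cosh(b)$ type bounds. Either way the conclusion is the strict inequality $\cosh^2(d/2)<\cosh^2(l_\e/2)=\cosh\e$.

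\textbf{Emptiness.} Let $\h q\in\overline B\cap\h P$ with $\h q\notin\{\h p,\g\h p\}$. By the estimate, $\h q$ lies at distance strictly less than $\e$ from one of the endpoints; by the symmetry above, say from $\h p$. There are two cases. If $\h q$ projects to a point $q\neq p$ of $P$, then $\dist(p,q)<\e$, contradicting that $P$ is an $\e$-packing. If $\h q=\g'\h p$ is another lift of $p$, then $\disth(\h p,\g'\h p)$ is the length of a non-trivial geodesic loop at $p$. In the other case of the symmetry, $\disth(\g\h p,\g'\h p)=\disth(\h p,\g^{-1}\g'\h p)$ plays this role. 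This length is at most $\arccosh(\cosh^2(d/2))$. Since $\cosh^2(d/2)=(\cosh d+1)/2<\cosh d$ for $d>0$, the length is strictly less than $d$, contradicting the minimality of $d$.

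Hence $\overline B$ meets $\h P$ only in the two endpoints. The closed disk being empty, with strict inequalities, also rules out cocircular degeneracies, so the edge belongs to $T$ whichever Delaunay triangulation is chosen. This concludes the argument.
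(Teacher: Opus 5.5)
Your proof is correct and follows essentially the same route as the paper: take a shortest loop at $p$, consider the disk having a lift of it as diameter, and use Pythagoras at the bisector--circle intersection point to get $\cosh^2(d/2)<\cosh\e$. Emptiness then follows from the packing property and minimality, so the segment is a Delaunay edge. Your explicit use of $\cosh^2(d/2)<\cosh d$ to rule out other lifts of $p$, and your remark on cocircular degeneracies, supply details that the paper only asserts.
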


\begin{proof}
	Let $\g$ be the shortest geodesic loop based at $p$. In $\HH$, we consider a lift $\h \g$ between two lifts $\h p_1$ and $\h p_2$ of $p$. We will show that the circle $\h C$ of diameter the geodesic segment $\h p_1\h p_2$ is empty.

\begin{figure}[!ht]
	\centering
	\includegraphics{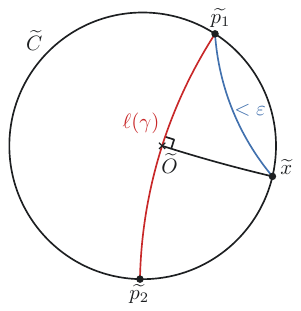}
	\caption{Illustration of the proof of Lemma~\ref{lem:ploop}.}
	\label{fig:ploop}
\end{figure}

	The points in the disk bounded by $\h C$ that are further away from $\h p_1$ and $\h p_2$ are those that lie both on the bisector of $\h p_1$ and $\h p_2$ and on $\h C$. Let $\h x$ be one of these points. The triangle formed by $\h x$, $\h p_1$ and the center $\h O$ of $\h C$ has a right angle at $\h O$. By Hyperbolic Pythagoras' Theorem,
	\[
	\cosh(\disth(\h x, \h p_1))
	= \cosh^2\left(\frac{\ell(\gamma)}{2}\right)
	< \cosh^2\left(\frac{l_\e}{2}\right)
	= \cosh \e,
	\]
	so $\disth(\h x,\h p_1)<\e$. Since $\g$ is the shortest curve based at $p$ and $P$ is an $\e$-packing, $\h C$ cannot contain another lift of $p$ nor a lift of another vertex of the Delaunay triangulation. Therefore, a Voronoi edge passes through the midpoint of $\h p_1$ and $\h p_2$ in the dual Voronoi diagram of the lifted triangulation, so $\g$ is an edge of $T$.
\end{proof}

Let us rewrite the equation $\cosh^2(l_\e/2)=\cosh \e$. Since $\cosh^2(l_\e/2)=\sinh^2(l_\e/2)+1$ and $\cosh \e=2\sinh^2(\e/2)+1$, we obtain $\sinh^2(l_\e/2)=2\sinh^2(\e/2)$, so $\sinh(l_\e/2)=\sqrt 2\sinh(\e/2)$. In particular, this equation allows us to easily see that $l_\e > \e$. Note that we implicitly used the fact that $\e$ and $l_\e$ are positive. We will reuse this fact in the following computations without further mention.

The following lemma will serve as a security tool to ensure the $\e$-packing property during the algorithm.

\begin{lemma}
\label{lem:lnsqrt}
	Let $\rho$ be a small curve. If $\e \leq \ln \sqrt 2$, then $\mathcal C(\rho, \e) \subset \mathcal C(\rho, l_\e) \subset \mathcal C(\rho)$ and the distance between the boundaries of $\mathcal C(\rho, l_\e)$ and $\mathcal C(\rho, \e)$ is greater than $\e$ (Figure~\ref{fig:lem_lnsqrt}).
\end{lemma}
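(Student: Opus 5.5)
The plan is to read every set in the statement through the collar lemma for small curves (Theorem~\ref{thm:collar}(iii)), which turns all the claims into inequalities between one-variable functions of the distance $d$ to the core geodesic $\rho$. I read $\mathcal C(\rho,x)$ as the set of points $p$ of the collar $\mathcal C(\rho)$ whose geodesic loop freely homotopic to $\rho$ has length at most $x$, i.e.\ $2r_p(\surf)\le x$. Write $s=\sinh(\ell(\rho)/2)$. By (iii), $p\in\mathcal C(\rho,x)$ if and only if $s\cosh(d)\le\sinh(x/2)$. So $\mathcal C(\rho,x)$ is the sub-cylinder of half-width $d_x=\arccosh(\sinh(x/2)/s)$, and it is empty when $\ell(\rho)>x$. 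The whole collar has half-width $w=\arcsinh(1/s)$, and at its boundary $s\cosh w=\sqrt{s^2+1}=\cosh(\ell(\rho)/2)\ge 1$.

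\textbf{Inclusions.} Since $l_\e>\e$ (from $\sinh(l_\e/2)=\sqrt2\sinh(\e/2)$) and $\sinh$ is increasing, we get $d_\e\le d_{l_\e}$, hence $\mathcal C(\rho,\e)\subset\mathcal C(\rho,l_\e)$. For $\mathcal C(\rho,l_\e)\subset\mathcal C(\rho)$ it suffices that $\sinh(l_\e/2)\le 1\le\cosh(\ell(\rho)/2)$, which forces $d_{l_\e}\le w$. This amounts to $\sqrt2\sinh(\e/2)\le1$. That holds comfortably for $\e\le\ln\sqrt2$, since then $\sinh(\e/2)<0.18$.

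\textbf{Distance between boundaries.} Both sets are metric sub-cylinders concentric around $\rho$, and a shortest path from the boundary of one to the boundary of the other runs along a geodesic orthogonal to $\rho$. Hence the distance is $d_{l_\e}-d_\e$, where
\[
\cosh d_{l_\e}=\sqrt2\,\cosh d_\e .
\]
This identity is the key simplification: the ratio is exactly $\sqrt2$, independently of $\ell(\rho)$ and $\e$. I then study $h(d)=\arccosh(\sqrt2\cosh d)-d$ for $d\ge0$. I expect to show that $h$ is decreasing, for instance by differentiating or by writing $\arccosh(y)=\ln(y+\sqrt{y^2-1})$, and that $h(d)\to\ln\sqrt2$ as $d\to\infty$, because $\arccosh(\sqrt2\cosh d)=\ln(\sqrt2\,e^d)+o(1)$. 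Therefore $h(d)>\ln\sqrt2\ge\e$ for every $d$, and in particular $d_{l_\e}-d_\e>\e$.

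\textbf{Degenerate case.} If $\mathcal C(\rho,\e)$ is empty or reduces to $\rho$ itself (the case $\ell(\rho)\ge\e$), I measure from $\rho$, i.e.\ take $d_\e=0$. Then the distance is at least $h(0)=\arccosh\sqrt2=\ln(1+\sqrt2)>\ln\sqrt2$; alternatively this case can be treated as vacuous. The main obstacle is the monotonicity and limit of $h$, and in particular getting a strict inequality even though the infimum of $h$ is exactly $\ln\sqrt2$. This explains the threshold $\e\le\ln\sqrt2$ in the statement, and strictness holds because the infimum is not attained.
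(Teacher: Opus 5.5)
Your proof is correct and is essentially the paper's. Both arguments use Theorem~\ref{thm:collar}(iii) to obtain $\cosh d_{l_\e}=\sqrt2\cosh d_\e$, and both get the inclusions from $\sinh(l_\e/2)<1$. The paper then expands $\arccosh y=\ln y+\ln(1+\sqrt{1-1/y^2})$, which gives $d_{l_\e}-d_\e=\ln\sqrt2+\ln(1+\sqrt{1-1/(2\cosh^2 d_\e)})-\ln(1+\sqrt{1-1/\cosh^2 d_\e})$. Comparing the last two terms yields the strict bound at once, without your monotonicity-plus-limit argument. Your argument also works, since $h'(d)=\sqrt2\sinh d/\sqrt{2\cosh^2 d-1}-1<0$.

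There is one small slip in your degenerate case. When $\ell(\rho)>\e$, measuring from $\rho$ does not give the bound $h(0)$: there $\cosh d_{l_\e}<\sqrt2$, and $d_{l_\e}$ can be close to $0$. That case has to be treated as vacuous, which is what the paper implicitly does.
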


\begin{figure}[!ht]
	\centering
	\includegraphics{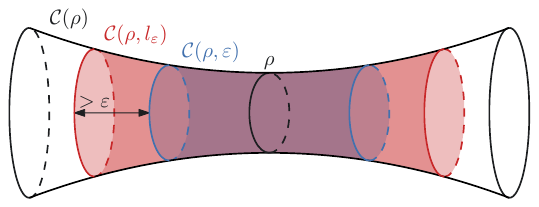}
	\caption{Illustration of Lemma~\ref{lem:lnsqrt}.}
	\label{fig:lem_lnsqrt}
\end{figure}

\begin{proof}
	The collar inclusion follows directly from Theorem~\ref{thm:collar} combined with the observation that $l_\e/2 < \arcsinh 1$ whenever $\e \leq \ln \sqrt 2$, completes the argument.

	Let $d_1$ be the distance between $\rho$ and the boundary of $\mathcal C(\rho, l_\e)$, and $d_2$ be the same distance for $\mathcal C(\rho, \e)$. By Theorem~\ref{thm:collar}, we have $\sinh(l_\e/2)=\sinh(\ell(\rho)/2)\cosh(d_1)$ and $\sinh(\e/2)=\sinh(\ell(\rho)/2)\cosh(d_2)$. It follows that:
	\[
		\left\{
		\begin{array}{l}
		d_1 = \arccosh\left(\dfrac{\sinh(l_\e/2)}{\sinh(\ell(\rho)/2)}\right)=\arccosh\left(\dfrac{\sqrt{2}\sinh(\e/2)}{\sinh(\ell(\rho)/2)}\right), \\
		d_2 = \arccosh\left(\dfrac{\sinh(\e/2)}{\sinh(\ell(\rho)/2)}\right).
		\end{array}
		\right.
	\]
	Since $\arccosh x = \ln\left(x+\sqrt{x^2-1}\right)=\ln x +\ln\left(1+\sqrt{1-1/x^2}\right)$ for all $x>1$, we obtain
	\[
		\left\{
		\begin{array}{l}
		d_1 = \ln \sqrt 2 + \ln\left(\dfrac{\sinh(\e/2)}{\sinh(\ell(\rho)/2)}\right)+\ln\left(1+\sqrt{1-\dfrac{\sinh^2(\ell(\rho)/2)}{2\sinh^2(\e/2)}}\right), \\
		d_2 = \ln\left(\dfrac{\sinh(\e/2)}{\sinh(\ell(\rho)/2)}\right)+\ln\left(1+\sqrt{1-\dfrac{\sinh^2(\ell(\rho)/2)}{\sinh^2(\e/2)}}\right).
	\end{array}
		\right.
	\]
	Combining both equations yields
	\[
	d_1-d_2 = \ln \sqrt 2+\ln\left(1+\sqrt{1-\frac{\sinh^2(\ell(\rho)/2)}{2\sinh^2(\e/2)}}\right)-\ln\left(1+\sqrt{1-\frac{\sinh^2(\ell(\rho)/2)}{\sinh^2(\e/2)}}\right)
	\]
	Therefore, $d_1-d_2>\ln \sqrt 2 \geq \e$.
\end{proof}

\subsection{Description of the algorithm}

Thanks to Lemmas~\ref{lem:ploop} and~\ref{lem:lnsqrt}, we have a tool to detect collars assuming that $\e \leq \ln \sqrt 2$. To compute a pseudo $\e$-net, we modify the standard $\e$-net algorithm: the core idea is to halt the algorithm when a point is added on an $l_\e$-collar and then "remove" the corresponding $\e$-collar instead of inserting this point. We use a tag to handle what happens in the $l_\e$-collars: the triangles passing through an $\e$-collar are tagged to both indicate that the collar has already been processed and to prevent the algorithm from destroying these triangles.

The input is the same as the $\e$-net algorithm: it is a Delaunay triangulation of the single vertex $b$. The case where $b$ is in a $l_\e$-collar requires a preprocessing step that is detailed in the remark after the algorithm.

\begin{figure}[!hb]
	\centering
	\includegraphics{algo_thin_part}
	\caption{Step~\ref{step3} of the algorithm. The two colored triangles are the tagged triangles.}
	\label{fig:step3}
\end{figure}

\begin{algorithm}
\textcolor{white}{.}

\textbf{Input}: Delaunay triangulation of a single vertex $b$ and $0 < \e \leq \ln \sqrt 2$.

\textbf{Output}: Delaunay triangulation of a pseudo $\e$-net of $\surf$.

\begin{enumerate}
	\item\label{step1} Follow the $\e$-net algorithm until a point $p$ with an attached loop edge shorter than $l_\e$ is found. This is verified at each flip in constant time. The flip algorithm continues even if such a loop edge is found. This loop edge is homotopic to a closed geodesic $\rho$ and $p\in\mathcal C(\rho, l_\e)$.
	\item\label{step2} If $\rho$ is longer than $\e$, go back to Step~\ref{step1}. Note that this case could appear only if $\rho$ has length between $\e$ and $l_\e$. Else, compute the projection $p^\rho$ of $p$ on $\rho$ and locate it in the triangulation.
	\begin{enumerate}
		\item If $p^\rho$ lies in a tagged triangle, then $\mathcal C(\rho, \e)$ has already been processed\footnote{Of course, we could design an optimized method to verify whether a collar has already been processed. Since this method only adds a point location step, we accept this complexity for simplicity.}. Go back to Step~\ref{step1}.
		\item Else, cancel the step where $p$ was inserted. Insert consecutively two points $p^1$ and $p^2$ on each boundary component of $\mathcal C(\rho, \e)$. Go to Step~\ref{step3}.
	\end{enumerate}
	\item\label{step3} After the insertion of $p^1$ and $p^2$, the Delaunay triangulation contains two loop edges $\rho^1$ and $\rho^2$, respectively based at $p^1$ and $p^2$, which are both homotopic to $\rho$ (Figure~\ref{fig:step3}). Tag the two triangles in $\mathcal C(\rho, \e)$ incident to these loop edges. These triangles are then ignored by the algorithm when searching for a large triangle. Go back to Step~\ref{step1}.
\end{enumerate}
\end{algorithm}

\begin{remark}
  The initial point $b$ is in a $l_\e$-collar if and only if one of the edges of the initial Delaunay triangulation is shorter than $l_\e$. Without loss of generality, we can suppose that, in this case, it lies on the corresponding small curve. Indeed, if the base point $b$ of the Dirichlet domain $\D b$ is in the $l_\e$-collar of a small curve $\rho$, we can apply the algorithm proposed in~\cite{despreComputingDirichletDomain2023} to obtain a new Dirichlet domain $\D{b'}$ with $b'\in\rho$. To deal with the case where $b$ lies on a closed geodesic shorter than $\e$, we first insert the points $b^1$ and $b^2$ corresponding to the $\e$-collar of $\rho$ such that the geodesic line $b^1 b^2$ is orthogonal to $\rho$ at $b$. Then, we remove $b$ from the triangulation while maintaining the Delaunay property. The loop edges based at $b^1$ and $b^2$ are already in the triangulation. It suffices to remove the edges incident to $b$ and add the two edges joining $b^1$ and $b^2$ (like the purple ones in Figure~\ref{fig:step3}). Tag the two corresponding triangles as in Step~\ref{step3}.
\end{remark}

We need to verify that this algorithm terminates, outputs a pseudo $\e$-net of $\surf$ and that all the required computations can be performed in our model. Computations will be handled in the Poincaré disk $\HH$, in which a lift of a collar $\mathcal C(\rho)$ is a region $\{\h x\in\HH : \disth(\h x, \h \rho) \leq w(\rho)\}$, where $\h \rho$ is a lift of $\rho$. Since $\rho$ is a closed geodesic curve on $\surf$, $\h \rho$ is a geodesic segment of $\HH$. The two boundary components of the collar lift to two (non-geodesic) circular arcs that are equidistant to $\h \rho$. Their supporting circles both intersects the supporting geodesic line of $\h \rho$ on the boundary of the Poincaré disk. A lift of $\mathcal C(\rho)$ is therefore a piece of what Thurston called a \emph{banana} for obvious visual reasons. The same applies to lifts of $\e$-collars.

\begin{figure}[!ht]
	\centering
	\includegraphics[page=1]{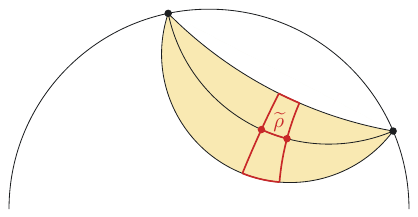}
	\caption{A banana (yellow) and a lift of a collar of a small curve $\rho$.}
	\label{fig:banane}
\end{figure}

\subsection{Correctness of the algorithm}

At each step, the set of vertices of the Delaunay triangulation is an $\e$-packing of $\thick$. Similarly to the standard algorithm, when a circumcenter of a large triangle is inserted on $\thick$, it is $\e$ far from all the vertices of the current Delaunay triangulation. Additionally, when a couple $\{p^1, p^2\}$ is inserted, both points are at distance at least $\e$ from the other vertices of the triangulation by Lemma~\ref{lem:lnsqrt}. Note that it is possible that a couple $\{p^1, p^2\}$ is separated by a distance smaller than $\e$ on $\surf$ if the corresponding small curve has length close to $\e$. However, on $\thick$, their distance is at least $\e$ because a path joining them would first have to go out of the corresponding $l_\e$-collar, whose boundary is at distance greater than $\e$ from either points by Lemma~\ref{lem:lnsqrt}. Moreover, in this case, Lemma~\ref{lem:ploop} still applies because every vertex that is not $p^1$ or $p^2$ is at distance $\e$ from the other vertices.

The following lemma shows that each large triangle that is not tagged will be processed. Together with the same arguments as for the standard $\e$-net algorithm, this proves that the procedure terminates and that the output is a pseudo $\e$-net of $\surf$.

\begin{lemma}
\label{lem:center}
	Let $\e \leq \ln \sqrt 2$ and $\surf$ be a hyperbolic surface. Let $T$ be a Delaunay triangulation of $\surf$ obtained at any step of the pseudo $\e$-net algorithm. Then the circumcenter of any triangle that is not tagged is either in $\thick$ or in an undiscovered thin part.
\end{lemma}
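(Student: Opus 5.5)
Suppose the statement fails. Then some untagged triangle $\Delta$ of $T$ has its circumcenter $c$ in the thin part $\thin$, inside the $\e$-collar $\mathcal C(\rho,\e)$ of a small curve $\rho$ of length less than $\e$ that has already been processed. The plan is to derive a contradiction by showing that $\Delta$ must then be one of the two tagged triangles of $\mathcal C(\rho,\e)$.

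The first step uses what processing $\rho$ leaves behind. Once $\rho$ has been processed, Step~\ref{step3} guarantees that $T$ contains the loop edges $\rho^1$ and $\rho^2$, based at $p^1$ and $p^2$ on the two boundary components of $\mathcal C(\rho,\e)$. It also contains the two tagged triangles between them, and together these triangles cover $\mathcal C(\rho,\e)$ (Figure~\ref{fig:step3}). We must check that later flips never destroy this configuration. All other vertices are inserted in $\thick$, at distance greater than $\e$ from $\mathcal C(\rho,\e)$ outside the $l_\e$-collar by Lemma~\ref{lem:lnsqrt}, so they never fall inside the circumdisks of the tagged triangles. Those circumdisks contain no vertex now and, being contained in a slightly larger region of $\mathcal C(\rho,l_\e)$, none later. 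The tagged triangles therefore remain Delaunay, and the loop edges $\rho^1$ and $\rho^2$ are never flipped.

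The second step reasons in the universal cover. Lift to $\HH$ and consider the banana $\h B$, the lift of $\mathcal C(\rho,\e)$ around a lift $\h\rho$. Its two boundary curves carry the lifts of $p^1$ and $p^2$, and these are joined by the lifted loop edges $\h\rho^1$ and $\h\rho^2$, which are geodesic chords lying inside the banana near its boundary. Take a lift $\h\Delta$ whose circumcenter $\h c$ lies in $\h B$. Since $\h\Delta$ is untagged, it lies on the far side of $\h\rho^1$ or of $\h\rho^2$ from $\h\rho$, say $\h\rho^1$. Its circumdisk is empty, contains $\h c$, and has the vertices of $\h\Delta$ on its boundary, all of them outside the open region bounded by the chord $\h\rho^1$.

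The third step turns this into a contradiction. The circumdisk is a disk centered at $\h c$ whose boundary passes through points separated from $\h c$ by the chain of chords $\h\rho^1$. I would show that such a disk must contain a lift of $p^1$ in its interior. The lifts of $p^1$ are spaced at distance $\ell(\rho^1)<l_\e$ along the equidistant curve, and the chord is short compared with the distance from $\h c$. Concretely, a disk centered at a point strictly inside the region bounded by a convex chain of chords, reaching beyond one chord, must swallow one of that chord's endpoints. I would prove this with hyperbolic convexity: the chain is a convex polygonal line, and the disk intersects the line of $\h\rho^1$ in a segment that must contain an endpoint, because the perpendicular foot from $\h c$ lies within the chord. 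This contradicts emptiness.

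The main obstacle is this last geometric claim, which amounts to a hyperbolic Pythagoras estimate as in Lemma~\ref{lem:ploop}. We need the foot of the perpendicular from $\h c$ onto the line of $\h\rho^1$ to lie within the segment between consecutive lifts of $p^1$; this is what forces the circumdisk to contain one of those lifts. Should that fail, a fallback is to use the bound $\e\leq\ln\sqrt2$ together with the computation in Lemma~\ref{lem:lnsqrt} to control the radius of the disk.
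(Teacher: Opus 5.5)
\textbf{There is a genuine gap in your third step: its geometric claim is false.}

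Take a disk centred at $\h c$ on the $\h\rho$-side of the chord $\h\rho^1=[\h p_1,\h p_2]$. It meets the line of that chord in a segment centred at the foot of the perpendicular from $\h c$. Suppose the foot lies inside the chord, which is exactly the situation you say you need, and the disk only barely pokes through. Then the segment lies strictly inside $[\h p_1,\h p_2]$ and neither endpoint is swallowed. So the foot lying within the chord does not force containment of an endpoint, and emptiness of the circumdisk alone gives no contradiction.

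Your fallback of controlling the radius does not help either. By Pythagoras, $\cosh R=\cosh h\cosh s$, so a disk of any radius $R$ whose centre is at distance $h$ slightly below $R$ from the line cuts it in a segment of half-length $s$ as small as you like.

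What excludes this configuration is the $\e$-packing property, which you never invoke. The untagged triangle has a vertex other than $p^1$ lying outside the collar. That vertex sits on the circumcircle beyond the chord, hence in the thin cap. The paper enlarges the circle concentrically until it passes through $\h p_1$ (say). It observes that the enlarged circle cuts the lifted collar boundary in a chord shorter than $\e$. So the vertex lies within distance $<\e$ of $\h p_1$, contradicting that the vertices form an $\e$-packing of $\thick$.

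\textbf{The lens case is missing.} Your argument assumes $\h c$ is separated from the triangle's vertices by the chord. This leaves out the case where $\h c$ lies in the thin lens between $\h\rho^1$ and the lifted boundary of $\mathcal C(\rho,\e)$. The paper treats this case separately. The lens lies in the disk of diameter $\h p_1\h p_2$, whose diameter is $\e$. Since the triangle is large (circumradius $>\e$), the circumdisk then contains $\h p_1$ and $\h p_2$. Largeness is the hypothesis implicit in the lemma, which concerns the triangles the algorithm would refine, and your proof never uses it.

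\textbf{A smaller point on your first step.} Vertices can later be inserted into $\mathcal C(\rho,l_\e)\setminus\mathcal C(\rho,\e)$, since Step~\ref{step2}(a) keeps them. So containment of the tagged circumdisks in $\mathcal C(\rho,l_\e)$ does not by itself keep them empty. Again, it is the packing distance to the lifts of $p^1$ and $p^2$ that does.
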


\begin{proof}
	Let $C$ be the circumcircle of a large triangle $t$ of $T$ that is not tagged such that its circumcenter is not in an undiscovered thin part. Suppose also that its center is not in $\thick$. In other words, it is in an already handled $\e$-collar. The circle $C$ intersects the boundary of this $\e$-collar. Let $p$ be the vertex with an attached loop edge of length $\e$ on this boundary component. Consider a lift $\h C$ of $C$. Since it is an empty circle, it passes between the two lifts $\h p_1$ and $\h p_2$ of $p$ such that the geodesic segment $\h p_1\h p_2$ is a lift of the loop edge.

	\begin{figure}[!ht]
		\begin{subfigure}
			{0.5\textwidth}
			\centering
			\includegraphics[page=2]{covering_proof}
		\end{subfigure}%
		\begin{subfigure}
			{0.5\textwidth}
			\centering
			\includegraphics[page=1]{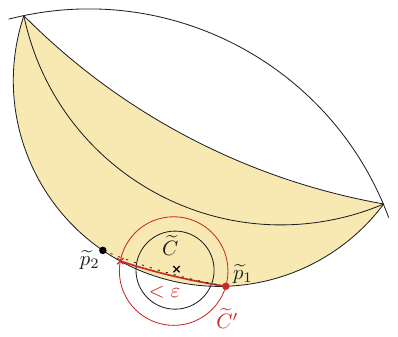}
		\end{subfigure}
		\caption{The two cases of the proof of Lemma~\ref{lem:center}. Left: $\h C$ is not drawn, but its center is represented as the black cross.}
		\label{fig:covering_proof}
	\end{figure}

	If the center $\h c$ of $\h C$ lies between the lift of the boundary component of $p$ and the geodesic segment $\h p_1 \h p_2$, then it lies in the disk $\h D$ of diameter $\h p_1 \h p_2$ (Figure~\ref{fig:covering_proof}, left). The diameter of $\h D$ is $\e$, and since $\h C$ has radius larger than $\e$, it must contain $\h p_1$ and $\h p_2$ in its interior. This contradicts the fact that $C$ is empty.

	Else, let $\h C'$ be the circle obtained by increasing the radius of $\h C$ until it touches $\h p_1$ or $\h p_2$ (Figure~\ref{fig:covering_proof}, right). Up to renumbering, suppose that it is $\h p_1$. The intersection between $\h C'$ and the lift of the boundary component of the $\e$-collar yields a chord shorter than $\e$. The triangle $t$ has at least one vertex that is not $p$. This vertex has a lift on $\h C$ that is in the smaller piece of the interior of $\h C'$ delimited by the chord. It is thus at distance less than $\e$ from $\h p_1$, which contradicts the fact that the vertices of $T$ form an $\e$-packing of $\thick$.
\end{proof}

\subsection{Computation details \& complexity analysis}

\begin{lemma}
\label{lem:bananasplit}
	Let $T$ be the Delaunay triangulation of a set of points $P$ on a hyperbolic surface $\surf$ and $\e > 0$. If\hspace{2pt} $T$ contains a loop edge shorter than $l_\e$ based at a vertex $p$, then the lift of an $\e$-collar containing $\h p_o$ can be computed in constant time.
\end{lemma}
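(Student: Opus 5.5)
The idea is to read everything off the lift of the short loop edge. The data structure stores, for a triangle incident to the loop edge at $p$, the lift $\h p_o$ of $p$ (the one in $\D$, or the vertex $\h{p_0^\Delta}$) together with the isometry $\g\in\Gamma$ sending $\h p_o$ to the other endpoint $\g\h p_o$ of the lifted loop edge. Since the edge is shorter than $l_\e$, the lifted edge has length less than $l_\e$, and $\g$ is a hyperbolic isometry. The primitive closed geodesic $\rho$ freely homotopic to the loop lifts to the axis $\h\rho$ of $\g$, and $\ell(\rho)$ is the translation length of $\g$.

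All of these quantities are explicit in the matrix of $\g$. The axis endpoints on the unit circle are the two fixed points of the Möbius transformation, obtained by solving a quadratic in the entries. The length is $\ell(\rho)=2\arccosh(|\tr\g|/2)$. Both cost $O(1)$ operations in the real RAM model.

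The $\e$-collar is given by Theorem~\ref{thm:collar}(iii). Its boundary lies at distance $d_2$ from $\rho$, where $\cosh d_2=\sinh(\e/2)/\sinh(\ell(\rho)/2)$, which is the formula already used in the proof of Lemma~\ref{lem:lnsqrt}. In the Poincaré disk, the set of points at distance $d_2$ from the geodesic $\h\rho$ consists of two circular arcs through the endpoints of $\h\rho$, making a known angle with $\h\rho$ at those ideal points (the angle is a closed function of $d_2$). Each arc is therefore determined by the two ideal points and that angle, again in $O(1)$. The lift of the $\e$-collar containing $\h p_o$ is the banana region between these arcs, restricted to a fundamental segment of $\h\rho$, for instance the segment from the orthogonal projection $\h p^\rho$ of $\h p_o$ to $\g\h p^\rho$. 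The projection is also computed in constant time.

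The main obstacle is justifying that this banana is really the lift containing $\h p_o$, and that $\rho$ is a small curve with $\ell(\rho)\le\e$ so that the $\e$-collar exists. For the first point, I use that $\h p_o$ and $\g\h p_o$ both lie on the same lift of the collar, namely the one around the axis of $\g$. When $\ell(\rho)\le\e$ the loop at $p$ has length less than $l_\e$, so $r_p<l_\e/2<\arcsinh 1$, and by Theorem~\ref{thm:collar}(ii) $p$ lies in the collar of a small curve. The loop edge must then be homotopic to that core curve, because the injectivity radius there is realized by the loop homotopic to $\gamma_i$. If $\ell(\rho)>\e$, the $\e$-collar is empty and the algorithm simply returns to Step~\ref{step1}. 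I would also check that the loop edge is primitive, so that $\g$ generates the stabilizer of $\h\rho$; otherwise the computed translation length would be a multiple of $\ell(\rho)$. Primitivity should follow because the edge is a geodesic loop of length less than $l_\e$, which is less than twice the shortest possible loop, so it cannot be a proper power.
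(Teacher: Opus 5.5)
Your computational argument is the paper's own proof. You read the stored isometry $\g$ with $\g\h p_o$ the other endpoint of the lifted loop edge, get its axis from the fixed-point quadratic, and set $\ell(\rho)=2\arccosh(|\tr\g|/2)$. You take the offset $d_2=\arccosh\bigl(\sinh(\e/2)/\sinh(\ell(\rho)/2)\bigr)$ from Theorem~\ref{thm:collar}(iii) and cut the banana along a fundamental piece. Your segment from $\h p^\rho$ to $\g\h p^\rho$ is exactly the paper's piece between the perpendiculars through $\h p_o$ and $\g\h p_o$, because $\g$ translates along its axis and so $\g\h p^\rho$ is the foot of $\g\h p_o$. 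You describe the equidistant arcs by their angle at the ideal points rather than by the four boundary points on the perpendiculars; that difference is immaterial. You also rightly raise a point the paper passes over silently: the trace formula gives $\ell(\rho)$, and not a multiple of it, only if $\g$ is primitive.

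\textbf{The gap: your primitivity argument fails.} Length cannot exclude proper powers. If $p$ lies on or near a geodesic $\rho$ much shorter than $l_\e$, the geodesic loops at $p$ in the classes $\rho,\rho^2,\rho^3,\dots$ have lengths $\ell_k$ with $\sinh(\ell_k/2)=\sinh(k\ell(\rho)/2)\cosh d$. Several of them are below $l_\e$, so ``$l_\e$ is less than twice the shortest loop'' is simply false there. This does occur in the algorithm: by Lemma~\ref{lem:center}, the inserted circumcenter may lie deep inside an undiscovered thin part.

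\textbf{The correct reason is embeddedness.} A triangulation edge is a simple essential closed curve, hence represents a primitive element of $\Gamma$. Concretely, suppose $\g=\tau^k$ with $k\geq 2$. Then $[\h p_o,\tau^k\h p_o]$ and its translate $[\tau\h p_o,\tau^{k+1}\h p_o]$ have interleaved endpoints along the equidistant curve through $\h p_o$, so they cross (or overlap, if $\h p_o$ is on the axis). Their projection therefore cannot be an edge.

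\textbf{Your detour through Theorem~\ref{thm:collar}(ii) is unnecessary.} The same observation makes it redundant, and it has two problems of its own:
\begin{itemize}
\item It needs $\e\leq\ln\sqrt2$, which is not a hypothesis of the lemma.
\item Your sentence about the injectivity radius concerns only the shortest loop at $p$, which need not be the loop edge.
\end{itemize}
Instead, argue directly: the simple loop edge is freely homotopic to a simple closed geodesic $\rho$. If $\ell(\rho)\leq\e<2\arcsinh 1$, then $\rho$ is automatically small, so its $\e$-collar is well defined and lifts to the region you compute.

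\textbf{A minor point.} $\g$ is hyperbolic because $\Gamma$ is torsion-free and cocompact, not because of the length bound.
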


\begin{proof}
	All the notations of this Lemma are illustrated in Figure~\ref{fig:bananasplit}.
	\begin{figure}[!ht]
		\centering
		\includegraphics[page=2]{banane}
		\caption{Illustration of the notations of the proof of Lemma~\ref{lem:bananasplit}.}
		\label{fig:bananasplit}
	\end{figure}

	Let $\g$ be the loop edge around $p$, $\h \g$ be its lift with vertex $\h p_1=\h p_o$ and $\h p_2$ be the second vertex of $\h \g$. The translation $\tau$ that sends $\h p_1$ to $\h p_2$ is stored in the data structure. Its fixed points can then be computed in constant time as the two solutions of the quadratic equation $\tau(z) = z, z\in\mathbb{C}$. From this, its axis is computed as the geodesic line whose points at infinity are the fixed points of $\tau$. The length of the associated curve $\rho$ on $\surf$ is computed as well: $\ell(\rho)=\ell(\tau)=2\arccosh(|\tr(\tau)|/2)$.

	The two geodesic lines in the Poincaré disk $\HH$ that are orthogonal to $\rho$ and go through $\h p_1$ and $\h p_2$ are computed. On those lines, we want to compute the points $\h p_1^1$, $\h p_1^2$, $\h p_2^1$ and $\h p_2^2$ that are on the boundary of the banana corresponding to $\mathcal C(\rho, \e)$. By Theorem~\ref{thm:collar}, the distances from these points and the points $\h p_1^\rho$, $\h p_2^\rho$ that lie on $\h \rho$ and on the two orthogonal lines is $\arccosh(\sinh(\e/2)/\sinh(\ell(\rho)/2))$.

	All the computations are performed in $\HH$ in constant time in the real RAM model.
\end{proof}

We keep the notations from the proof of the previous lemma and show the following:
\begin{lemma}
\label{lem:locatecollar}
	Let $i$ be the number of vertices in the Delaunay triangulation at step~\ref{step2} of the algorithm. If $\e < 2\arcsinh 1$, the points $p^1$ and $p^2$ that lift to $\h p_1^1$ and $\h p_1^2$ (respectively) can be located in $O(i)$ time in the Delaunay triangulation.
\end{lemma}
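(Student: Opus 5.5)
The plan is to reuse the location machinery of Section~\ref{sec:complexity}, with one change: instead of walking through copies of $\D$, we walk through the lifted triangulation starting from $\h p_1$. This avoids the $O(g^2)$ walk. By Lemma~\ref{lem:bananasplit}, the points $\h p_1^1$ and $\h p_1^2$ are obtained in constant time. They lie on the geodesic through $\h p_1$ orthogonal to $\h\rho$, on either side of $\h\rho$, at distance $d_2=\arccosh(\sinh(\e/2)/\sinh(\ell(\rho)/2))$ from $\h p_1^\rho$.

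First, I will show that the geodesic segment from $\h p_1$ to $\h p_1^j$ ($j=1,2$) stays inside the lift of the collar $\mathcal C(\rho)$. Since $\e<2\arcsinh 1$, the $\e$-collar is contained in $\mathcal C(\rho)$ by Theorem~\ref{thm:collar}. By Lemma~\ref{lem:ploop}, $\h p_1$ lies in $\mathcal C(\rho,l_\e)$, which is contained in the collar when $l_\e/2<\arcsinh 1$. Therefore the whole orthogonal segment $\h p_1\h p_1^\rho\h p_1^j$ lies in a single lift of $\mathcal C(\rho)$.

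Next, I will bound how many triangles this segment crosses. The segment has length at most $d_1+d_2$, and it is a distance path in the collar, because the orthogonal projection onto the core is length-minimizing within the collar. Hence it projects to a shortest path on $\surf$. I will then argue that such a segment crosses each edge of the triangulation at most a bounded number of times, and so enters each triangle $O(1)$ times.

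The argument for this rests on edges of a Delaunay triangulation being distance paths, or concatenations of two distance paths, for geodesic triangulations whose circumdisks are empty. Two distance paths without a common subarc meet at most once, by~\cite[Lemma 8]{despreRepresentingInfiniteHyperbolic2021}. If it is simpler, one can instead bound crossings directly inside the cylinder: an edge of length less than the collar width crosses the orthogonal segment a bounded number of times, because the segment is orthogonal to the core. Since the triangulation has $2i+4g-6$ triangles, the walk visits $O(i+g)$ triangles. Each visit costs $O(1)$ orientation tests, using the stored isometries $\g_j^\Delta$ to pass from one triangle lift to its neighbour. The $g$ term is absorbed, because the initial triangulation already has $4g$-ish triangles, so $i$ is counted to include them. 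If that is not acceptable, I will simply state the bound as $O(i+g)$, which is $O(i)$ once $i\geq 1$ under the triangle count.

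The main obstacle is justifying the bounded number of crossings of each edge by the walk segment. The crossings-at-most-once fact applies to distance paths, but a long Delaunay edge may wrap within the collar. The first thing to try is to split the segment at $\h p_1^\rho$. Each half lies on one side of the core and is a distance path from a point to the core, so each edge crosses each half at most twice. This gives at most $4$ crossings per edge, and hence the linear bound.

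Finally, once the containing triangle lifts are found, the representatives in $\D$ are computed by applying the accumulated isometry in constant time. This completes the location of $p^1$ and $p^2$ in $O(i)$ time.
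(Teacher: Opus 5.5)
\textbf{The gap is in the crossing bound.} Your walk is only $O(i+g)$ if every Delaunay edge crosses each half of the orthogonal segment $O(1)$ times. You derive this from the claim that every Delaunay edge is a distance path or a concatenation of two distance paths.

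That claim is not what the cited result gives. The two-distance-path decomposition quoted in the proof of Lemma~\ref{lem:complexityLocate} concerns the sides of $\D$, i.e.\ Voronoi edges. What holds for Delaunay triangles is weaker: a segment from a vertex to the circumcenter is a distance path. The claim is also false in general:
\begin{itemize}
\item A segment $[\h u,x]$ projects to a distance path iff $x$ lies in the Dirichlet cell of $\h u$ for the orbit $\Gamma\h u$. So an edge $[\h u,\h v]$ splits into two distance paths iff the Dirichlet cells of $\h u$ and $\h v$ cover it.
\item Take the Bolza surface with $b$ at the centre of the regular octagon $\D$. Around a vertex $c$ of $\D$, the eight lifts $h_0,\dots,h_7$ of $b$ lie on the circle of radius $R$ centred at $c$. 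Any triangulation of this octagonal cell is Delaunay, and every one has an ear $h_{k-1}h_kh_{k+1}$.
\item The midpoint $f$ of $[h_{k-1},h_{k+1}]$ lies on $[c,h_k]$. From $\cosh R=\cosh d(c,f)\cosh d(f,h_{k-1})$ one gets $d(f,h_k)=R-d(c,f)<d(f,h_{k\pm1})$.
\item Since $h_k$ is a lift of the same vertex, this edge passes through a third Dirichlet cell. The inequality is strict, so it survives small generic deformations where the Delaunay triangulation is unique.
\end{itemize}
Nothing specific to step~\ref{step2} rescues the argument. When a collar is discovered early, the triangulation still has few vertices and long unrefined edges. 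Your fallback about edges shorter than the collar width therefore does not apply. The segment can have length of order $2\log(1/\ell(\rho))$, so nothing you wrote bounds the number of triangles visited.

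\textbf{How the paper avoids this.} It walks along the orthogonal segments in the tiling $\{\g\D\}_{\g\in\Gamma}$ instead of in the lifted triangulation. Each half-segment (from $\h p_1$ to $\h p_1^\rho$, and from $\h p_1^\rho$ to $\h p_1^j$) is a distance path. Each of the $k\le 12g-6$ sides of $\D$ is at most two distance paths. Hence each half crosses $O(k)$ sides, and the walk costs $O(g^2)$. The points are then pulled back into $\D$ by the accumulated isometry and located by the naive scan over the $2i+4g-6$ triangles. That route honestly gives $O(i+g^2)$; the statement's $O(i)$ hides the $g$-dependence. Your absorption of the $g$-term by saying $i$ ``includes'' the initial triangles is not legitimate either, since $i$ counts vertices.

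\textbf{Two smaller points.}
\begin{itemize}
\item To show the orthogonal segment is a shortest path on $\surf$, you also need the observation that a path leaving the collar must reach its boundary and come back, so it is longer.
\item The inclusion $p\in\mathcal C(\rho,l_\e)\subset\mathcal C(\rho)$ comes from the short loop edge found in step~\ref{step1} together with $\e\le\ln\sqrt2$ (Lemma~\ref{lem:lnsqrt}). It does not come from Lemma~\ref{lem:ploop}. Nor does it follow from $\e<2\arcsinh 1$ alone, which only gives $\sinh(l_\e/2)<\sqrt2$.
\end{itemize}
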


\begin{proof}
\begin{figure}[!ht]
	\centering
	\includegraphics[page=1]{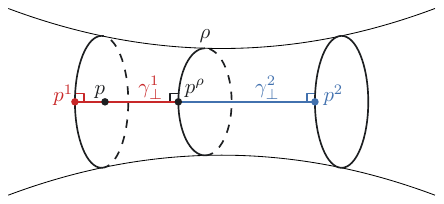}
	\caption{Illustration of the notations of the proof of Lemma~\ref{lem:locatecollar}.}
	\label{fig:gamma_perp}
\end{figure}
	Let $p^1$ and $p^2$ be the points of $\surf$ on which $\h p_1^1$ and $\h p_1^2$ respectively project. Let $p^\rho$ be the projection of $p$ on $\rho$, which projects on $p_1^\rho$. Like for the complexity of the location of a circumcenter, we want prove that the orthogonal geodesic segment $\h \gamma_\perp^1$ from $\h p_1^1$ to $\h p_1^\rho$ projects on a shortest path $\gamma_\perp^1$ on $\surf$ (Figure~\ref{fig:gamma_perp}).

	By Theorem~\ref{thm:collar}, and since $\e < 2\arcsinh 1$, we know that the $\e$-collars are embedded on $\surf$. Then, $\gamma_\perp^1$ is a shortest path inside the collar and no shortest path goes outside, else it would have to pass through the other end of the collar and therefore be longer than $\gamma_\perp^1$. The same holds for the symmetric path $\gamma_\perp^2$. Then, $\h p_1^1$ and $\h p_1^2$ can be located in the tiling $\{\g\h D_o\}_{\g\in\Gamma}$ from $\h p_1$ in $O(g^2)$ time by walking along the geodesic segments $\gamma_\perp^1$ and $\gamma_\perp^2$. Note that the copy in which $\h p_1$ lies is already known since it is the lift of the circumcenter $p$ that has been inserted in the triangulation. Their lifts in $\h D_o$ are then computed in constant time. Finally, the triangles in which they lie are found in $O(i)$ time.
\end{proof}

\begin{theorem}
\label{thm:complexite_bananes}
	Let $\surf$ be a hyperbolic surface given by a Delaunay triangulation on a single vertex (or, equivalently, by a Dirichlet domain) and let $\e\leq\ln\sqrt 2$ be a positive number. A pseudo $\e$-net of $\surf$ can be computed in $O(1/\e^4)$ time, where the hidden constant depends on $\surf$.
\end{theorem}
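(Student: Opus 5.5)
The plan is to charge every operation of the pseudo $\e$-net algorithm to one of three budgets: point insertions, collar-handling events, and edge flips. Each is bounded by a quantity depending only on $\surf$ (through $g$ and $\sys$) times a power of $1/\e$, and the flips give the dominant $1/\e^4$ term.

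\textbf{Size of the vertex set and collar events.} The argument in the correctness subsection shows that at every step the vertex set is an $\e$-packing of $\thick$. Lemma~\ref{eq:nb_points_epaisse} then bounds the number of vertices that ever coexist by $N\le 16(g-1)/\e^2+O(g)$. The additive $O(g)$ accounts for the at most $2(3g-3)$ collar points $p^1,p^2$; by Theorem~\ref{thm:collar}(i) there are at most $3g-3$ small curves, hence at most $3g-3$ processed collars.

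Insertions can be cancelled in Step~\ref{step2}, so I also bound cancellations. A cancellation happens only when a new collar is discovered, because a processed collar is recognised by the tag test and sent back to Step~\ref{step1}. Hence there are at most $3g-3$ cancellations. The other returns to Step~\ref{step1}, where $\rho$ has length in $[\e,l_\e)$ or is already tagged, do not undo the insertion and are therefore ordinary insertions. So the total number of insertions is $O(N+g)$.

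\textbf{Per-insertion costs.} For each insertion, computing and locating the circumcenter costs $O(i+g^2)$ by Lemma~\ref{lem:complexityLocate}. Each collar event costs $O(1)$ to compute the banana (Lemma~\ref{lem:bananasplit}) and $O(g^2+i)$ to locate $p^{\rho},p^1,p^2$ (Lemma~\ref{lem:locatecollar}). Summing over all insertions and collar events gives $O(N^2+Ng^2)=O(1/\e^4)$, with constants depending on $g$.

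\textbf{Flips.} This is the main obstacle. I would reuse the proof of Lemma~\ref{lem:nbflips}: associate to each triangulation the lifted polyhedral surface $\Sigma_j$, show monotonicity (each $\Sigma_{j+1}$ contains $\Sigma_j$), and deduce that a flipped edge never reappears. Combined with the bound $8\diam(\surf)$ on the length of created edges and Proposition~\ref{prop:nb_curves}, this gives at most $C_\surf N^2$ flips.

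Two new operations break the monotonicity argument as stated: cancelling an insertion, and inserting $p^1,p^2$ on collar boundaries (including the preprocessing that removes $b$). My plan is to cut the run into at most $3g-2$ phases, separated by collar events. Within a phase, the only operations are circumcenter insertions and Delaunay flips, so monotonicity holds verbatim and the phase contributes at most $C_\surf N^2$ flips. The cancellation and restart at a phase boundary cost at most the flips already made in that phase, so they at most double the count. This yields a total of $O(g\,C_\surf N^2)=O(1/\e^4)$.

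I also need that tagged triangles never need to be flipped, so that freezing them is consistent with Delaunayhood. This should follow from Lemma~\ref{lem:center} together with the emptiness of the disk of diameter $\h p_1\h p_2$ used in Lemma~\ref{lem:ploop}.

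Adding the three budgets gives $O(1/\e^4)$ with a constant depending on $\surf$, which proves the theorem. The phase decomposition with bounded restart cost is the step I expect to be delicate. If bounding the restart cost fails, the fallback is to redo the Despré et al.\ counting directly: bound the number of distinct segments shorter than $8\diam(\surf)$ between any two of the at most $N$ vertices that ever appear, and charge each flip to one such segment, at most once per phase.
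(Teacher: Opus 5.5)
Your proposal is correct and follows the paper's skeleton. Like the paper, you bound the vertices by $16(g-1)/\e^2+3g-3$, charge each collar event like an ordinary insertion via Lemmas~\ref{lem:bananasplit} and~\ref{lem:locatecollar}, and reuse the flip count of Lemma~\ref{lem:nbflips}.

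The difference is in that last step. The paper only says that the computation of Lemma~\ref{lem:nbflips} can be repeated. It does not address that cancelling an insertion, or deleting $b$ in the preprocessing, breaks the containment of $\Sigma_j$ in $\Sigma_{j+1}$, which is what ``a flipped edge never reappears'' relies on. Your split into at most $3g-2$ phases handles this rigorously, at the cost of a factor $O(g)$ that is absorbed into the constant depending on $\surf$.

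Two refinements to your version:
\begin{itemize}
\item Inserting $p^1,p^2$ is a plain vertex insertion, already covered by the argument of Lemma~\ref{lem:nbflips}. Only vertex removals force a phase cut.
\item The $8\diam(\surf)$ edge bound in later phases needs more than monotonicity within the phase. A cancellation returns to the Delaunay triangulation of $P_{i-1}$, whose lifted convex surface still contains $\Sigma_1$ (or, after the preprocessing, the surface of the two-vertex triangulation). So the reference fundamental domain used in Lemma~\ref{lem:nbflips} stays valid throughout, up to a constant factor.
\end{itemize}
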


\begin{proof}
	We keep the same notations as in the previous proofs.

	We first prove an upper bound on the number of inserted points. If the algorithm terminates without reaching step two, $\surf$ is $\e$-thick by Lemma~\ref{lem:ploop}. The total number of inserted points is then smaller than $16(g-1)/\e^2$ by Proposition~\ref{prop:nb_points}. Otherwise, the situation is similar except that there are no vertices in the $\e$-collars. For each $\e$-collar, a point have been inserted and then removed. Two other points have instead been inserted on the two boundary components. Since there are at most $3g-3$ $\e$-collars, the number of inserted points is then upper bounded by $16(g-1)/\e^2 + 3g - 3$.

	Lemmas~\ref{lem:bananasplit} and~\ref{lem:locatecollar} imply that inserting a couple $\{p^1, p^2\}$ for a given $\e$-collar is as costly as inserting a point like in the $\e$-net algorithm (only the constant changes). Similarly, locating $p^\rho$ at step~\ref{step2} of the algorithm only increases the constant. We can repeat similar computations as in the proof of Lemma~\ref{lem:nbflips} to obtain a total complexity of the algorithm of $O(1/\e^4)$ time.
\end{proof}

\begin{remark} The algorithm can be used with $\e=2\arcsinh 1$ to compute a thick-thin decomposition of the surface. The proof of Theorem~\ref{thm:complexite_bananes} applies to this situation but the result might not be an $\e$-packing around the boundaries of the $\e$-collars because Lemma~\ref{lem:lnsqrt} does not apply in this case.
\end{remark}
\begin{figure}
    \centering
    \includegraphics[width=0.5\linewidth]{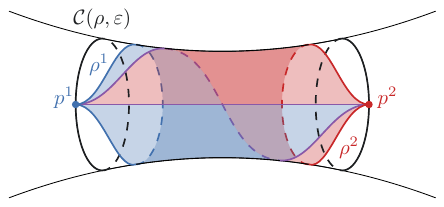}
    \caption{Enter Caption}
    \label{fig:placeholder}
\end{figure}

\section{Computing the systole of $\surf$ and its length spectrum}

Short geodesics on $\surf$ can be identified directly from a Delaunay triangulation by exploiting the fact that a short curve cannot cross Delaunay edges too many times, following an approach closely related to that of Akrout~\cite{akrout2006}. We describe this method in Appendix~\ref{sec:Akrout}.

In this work, we propose a different approach based on the following observation. Every curve $\gamma$ of length less than $L$ is passes close to a point in the $\e$-net. This point determines a based geodesic loop homotopic to $\gamma$, and whose length is comparable to that of $\gamma$ and is therefore expected to remain below $L$ (up to a small error). Consequently, computing the length spectrum of $\surf$ up to length $L$ reduces to finding pairs of lifts of a common point on $\surf$ whose mutual distance is at most $L$, up to a controlled error.

The existence of an $\e$-net alone is not sufficient for this purpose, since there is no immediate way to identify all such pairs of lifts. The crucial additional ingredient is that our $\e$-nets are equipped, by construction, with Delaunay triangulations. These triangulations satisfy the key property that each edge has length at most $2\e$, providing enough combinatorial structure to locate the relevant pairs efficiently.

Another difficulty stems from the fact that the multiplicities in the length spectrum of a hyperbolic surface are unbounded (see \cite{randol1980} and Remark \ref{rem:multiplicity} below). Consequently, distinct free homotopy classes may correspond to closed geodesics of the same length. To handle this phenomenon, we rely on the optimal homotopy test of Lazarus and Rivaud~\cite{lazarus2012}.

The main result underlying this approach is the following theorem.

\begin{theorem}\label{thm:spectrum}
Let $\e>0$, let $\surf$ be an $\e$-thick hyperbolic surface, and let $T$ be a Delaunay triangulation over an $\e$-net of $\surf$. The length spectrum of $\surf$ up to length $L$ can be computed in $O(g\,m\,L\,e^L)$ time, where $m$ denotes the maximum multiplicity of a length in the spectrum below $L$.
\end{theorem}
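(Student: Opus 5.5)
The plan is to reduce the enumeration of closed geodesics of length at most $L$ to an enumeration of short geodesic loops based at net points, using the Delaunay triangulation $T$ to generate candidates, and then to remove repeated free homotopy classes.

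\textbf{Step 1 (every short geodesic passes near a vertex).} Let $\gamma$ be a closed geodesic with $\ell(\gamma)\le L$, and let $x$ be any point of $\gamma$. Since the vertices of $T$ form an $\e$-net, some vertex $p$ satisfies $\dist(x,p)\le\e$. Concatenating a shortest arc from $p$ to $x$, then $\gamma$, then the reverse arc gives a loop at $p$ freely homotopic to $\gamma$. Its geodesic representative based at $p$ therefore has length at most $L+2\e$.

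In the universal cover, this means there is a lift $\h p$ and an element $\g\in\Gamma$ whose axis is a lift of $\gamma$, with $\disth(\h p,\g\h p)\le L+2\e$. Conversely, every such pair determines a hyperbolic $\g$. Its translation length is $2\arccosh(|\tr(\g)|/2)$, which is computable in constant time, and this gives the length of the corresponding closed geodesic.

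\textbf{Step 2 (enumerating candidate loops).} I would run a breadth-first exploration of the lifted triangulation $\h T$ from the lift $\h p_b$ of each vertex $p$. We only keep triangles whose vertices lie within distance $L+2\e$ of $\h p_b$, and we track the isometries stored in the data structure as we go. Every time we reach a lift $\g\h p_b$ with $\g\ne\id$, we record $\g$.

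To control the cost, note that every Delaunay edge has length at most $2\e$, since circumradii are at most $\e$. Moreover, thickness guarantees that the vertices of $\h T$ within distance $R$ of $\h p_b$ are $\e$-separated. A disk-packing argument, using $\e/2$-disks and the area $2\pi(\cosh R-1)$ of a disk of radius $R$, bounds their number by $O(e^{R}/\e^2)$. Summing over the $N=O(g/\e^2)$ vertices gives too much, so the summation has to be organised differently.

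\textbf{Step 3 (deduplication and cost).} Each candidate $\g$ corresponds to a closed curve. I would convert it to a word in the generators, of length $O(L)$, read off along the walk. I would then use the Lazarus--Rivaud linear-time homotopy test to group candidates by free homotopy class, so that each class is reported once even when several classes share a length.

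For the running time, I would argue that only $O(g\,e^L)$ distinct pairs arise after normalisation. The reasoning is that the number of closed geodesics of length at most $L$ is $O(g\,e^L)$, by the standard lattice-point counting from the area bound. Each class should then be compared only against the at most $m$ classes already found with the same length, and each comparison costs $O(L)$. This gives $O(g\,m\,L\,e^L)$.

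\textbf{Main obstacle.} The hard step is making the enumeration output-sensitive, so that the factor in $N$ and $\e$ disappears into $O(g\,e^L)$. Naively, each short geodesic is found from every net point near it, which is $\Theta(L/\e)$ times. To fix this, I would first try charging each discovered lift to a canonical base point. For example, I would only accept the pair $(\h p_b,\g\h p_b)$ if $p$ is the vertex closest to a chosen point of the axis of $\g$, or else consider only the vertices of a fixed triangle crossed by the geodesic. Making this canonical choice both correct and cheap to test is where I expect most of the work to lie.
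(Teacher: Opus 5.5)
You have the same ingredients as the paper: a net vertex within $\varepsilon$ of every closed geodesic, a breadth-first search in the lifted triangulation from each vertex, the packing bound on visited vertices, lengths from traces, and Lazarus--Rivaud canonical forms compared against at most $m$ stored classes of equal length. The gap is that you never establish the running time. You assert that summing the per-vertex search over the $N=O(g/\varepsilon^2)$ vertices ``gives too much'' and leave the resulting obstacle open.

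That obstacle comes from a misreading of the bound. In the paper, $\varepsilon$ is a fixed parameter absorbed into the $O(\cdot)$: the proof writes $16(g-1)/\varepsilon^2=O(g)$ and $\sinh^2(L_p/2+7\varepsilon/4)/\sinh^2(\varepsilon/4)=O(\sinh^2(L/2))$. Under that convention the naive summation is the whole proof:
\begin{itemize}
\item there are $O(g)$ start vertices;
\item each search visits $O(e^L)$ vertices, so it discovers $O(e^L)$ lifts of its start vertex;
\item each discovered pair costs $O(1)$ for the length, $O(L)$ for the canonical representative and $O(mL)$ for the comparisons.
\end{itemize}
This gives $O(g\,m\,L\,e^L)$, and no canonical choice of base point is needed.

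Conversely, if you wanted a bound uniform in $\varepsilon$, your charging idea could not provide it. The searches are launched from order $g/\varepsilon^2$ vertices, each visiting order $e^L/\varepsilon^2$ vertices. They therefore already cost order $g\,e^L/\varepsilon^4$, however the discovered pairs are charged afterwards.

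There are also smaller points to repair.
\begin{itemize}
\item Your Step~3 counts distinct free homotopy classes. But a canonical form must be computed for every discovered pair before you know whether it is a duplicate. The quantity to bound is therefore the number of discovered pairs, which is $O(g\,e^L)$ only because $\varepsilon$ is a constant.
\item Pairs whose geodesic is longer than $L$ should be discarded before comparing, so that the multiplicity bound $m$ applies.
\item Restricting the search to vertices within distance exactly $L+2\varepsilon$ of $\widetilde p_b$ does not guarantee that the target lift is in the same connected component as $\widetilde p_b$, since the target may sit at exactly that distance. The paper handles this with Lemma~\ref{lem:BFS}, searching the ball of radius $L_p+3\varepsilon$. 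Alternatively, you can use the chain of triangles crossed by the segment joining the two lifts; their vertices lie within $2\varepsilon$ of that segment.
\end{itemize}
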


\begin{remark}\label{rem:multiplicity}
    The maximum multiplicity in the above statement could be replaced by a bound which depends on $g$ and $L$. The most obvious bound would be to bound the multiplicity by the total number of geodesics up to length $L$. Using \cite[Lemma 6.6.4]{buserGeometrySpectraCompact2010}) provides a bound of $m\leq (g+1)e^{L+6}$. It is worth noting that, while this bound might seem particularly naive, multiplicities are always unbounded in a length spectrum and there are surfaces with multiplicities that grow on the order of $e^{\delta L}$ for $\delta > \frac{1}{2}$ \cite{haoLengthSetsClosed2026}.
\end{remark}

\subsection{Structural properties}
Before proving the theorem above, we establish several preliminary results.
\begin{lemma}\label{lem:eball}
      Let $B$ be a ball of $\HH$ of radius $R$ and $\e$ be a positive number. Then, $B$ contains at most $\frac{\sinh^2(R/2+\e/4)}{\sinh^2(\e/4)}$ points of a $\e$-net of $\HH$.
\end{lemma}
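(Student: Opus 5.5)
This is a standard packing argument, comparing the areas of disjoint small disks with the area of a slightly enlarged ball. Let $P$ be the set of points of the $\e$-net lying in $B$, with $B$ centered at $\h o$ and of radius $R$. The only property of $P$ we need is the packing part: distinct points of $P$ are at distance at least $\e$ from each other. Hence the open disks $B(p,\e/2)$, $p\in P$, are pairwise disjoint. They are genuine hyperbolic disks, since we work in $\HH$ itself and no injectivity radius issue arises. By the triangle inequality each of these disks lies in the concentric ball $B(\h o, R+\e/2)$.

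We then compare areas, using the formula already recalled in the paper from \cite[Theorem 7.2.2]{beardonGeometryDiscreteGroups1983}: a hyperbolic disk of radius $r$ has area $4\pi\sinh^2(r/2)$. Each small disk has area $4\pi\sinh^2(\e/4)$, and the enlarged ball has area $4\pi\sinh^2\bigl((R+\e/2)/2\bigr)=4\pi\sinh^2(R/2+\e/4)$. Disjointness and inclusion give
\[
|P|\cdot 4\pi\sinh^2(\e/4)\leq 4\pi\sinh^2(R/2+\e/4),
\]
and dividing by $4\pi\sinh^2(\e/4)$ yields exactly the claimed bound.

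No step is genuinely hard. The one point to check is that the enlarged radius is $R+\e/2$ and not $R+\e$; this is what produces the term $R/2+\e/4$ inside the $\sinh$ after halving for the area formula. We should also note that points on the boundary of $B$ (if $B$ is closed) are covered by the same argument, since only the inclusion of the disks in the enlarged ball is used.
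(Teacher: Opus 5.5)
Your proposal is correct and follows essentially the same route as the paper. It uses the pairwise disjoint disks $B(p,\e/2)$, their inclusion in the concentric ball of radius $R+\e/2$, and the area formula $4\pi\sinh^2(r/2)$ to get $N\cdot 4\pi\sinh^2(\e/4)\leq 4\pi\sinh^2(R/2+\e/4)$.
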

\begin{proof}
    We proceed similarly to Section~\ref{sec:bound}. Let $p$ be a point of the $\e$-net contained in $B$, then, $B(p,\e/2)$ is empty and its area is $4\pi\cdot\sinh^2(\e/4)$. $B(p,\e/2)$ is also included in a ball $B_\e$ that has the same center as $B$ and whose radius is $R+\e/2$ (see Figure~\ref{fig:eball}). The sum of the area of the $B(p,\e/2)$ for all $p\in B$ is then smaller than the area of $B_\e$. If $N$ is the number of points of the $\e$-net in $B$, then we obtain: $N\cdot4\pi\cdot\sinh^2(\e/4)\le4\pi\cdot\sinh^2(R/2+\e/4)$. The result follows.
\end{proof}

\begin{figure}[!ht]
    \centering
    \includegraphics{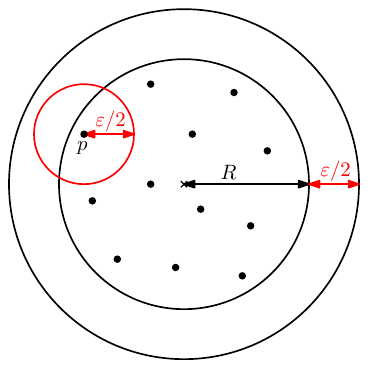}
    \caption{Illustration of the proof of Lemma~\ref{lem:eball}.}
    \label{fig:eball}
\end{figure}

\begin{lemma}\label{lem:BFS}
    Let $\e$ and $R$ be 2 positive numbers such that $R>3\e$, $T$ a Delaunay triangulation over an $\e$-net of $\HH$, and $p$ a vertex of $T$. We denote by $T_R$ the triangulation that is the restriction of $T$ to $B(p,R)$. Then, the connected component of $T_R$ that contains $p$ also contains all the points of $T$ in $B(p,R-3\e)$.
\end{lemma}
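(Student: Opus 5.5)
The plan is to show that the straight geodesic path between $p$ and any vertex $q$ of $T$ in $B(p,R-3\e)$ can be shadowed by a chain of Delaunay edges that stays inside $B(p,R)$. Two facts are used throughout. First, every triangle of a Delaunay triangulation over an $\e$-net has circumradius at most $\e$. Its circumdisk contains no net point in its interior, while every point of $\HH$ lies within $\e$ of the net, so the circumcenter is within $\e$ of some vertex on the circle. Second, every edge therefore has length at most $2\e$, and every point of a triangle lies within $2\e$ of each of its vertices, since the whole triangle sits in a disk of radius at most $\e$.

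Fix a vertex $q$ with $\disth(p,q)\leq R-3\e$ and consider the geodesic segment $\h s=[p,q]$. The triangles of $T$ crossed by $\h s$ form a sequence $t_0,t_1,\dots,t_n$, with $p$ a vertex of $t_0$ and $q$ a vertex of $t_n$, where consecutive triangles share an edge (or a vertex, in the degenerate case where $\h s$ passes through a vertex). Taking the union of the vertices of these triangles, together with the edges of each $t_j$ and the shared edges, gives a connected subgraph of $T$ that contains both $p$ and $q$. If $\h s$ goes through a vertex $v$, the two triangles on either side both contain $v$, so connectivity is preserved.

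It remains to check that this subgraph lies in $B(p,R)$. Every triangle $t_j$ meets $\h s$ at some point $x$ with $\disth(p,x)\leq R-3\e$. Each point of $t_j$ lies in the circumdisk of $t_j$, which has radius at most $\e$ and contains $x$, so it is at distance at most $2\e$ from $x$. Hence every vertex and every edge of $t_j$ lies in $B(p,R-\e)\subset B(p,R)$, and the chain is a path inside $T_R$ joining $p$ to $q$. Since $q$ was arbitrary, the component of $T_R$ containing $p$ contains every vertex of $T$ in $B(p,R-3\e)$. The argument gives slack of $\e$, which is consistent with the stated hypothesis $R>3\e$.

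The main obstacle is making precise what "restriction of $T$ to $B(p,R)$" means. I will read it as the subcomplex of edges and triangles entirely contained in $B(p,R)$, since the bound above places whole triangles inside the ball. A second point is justifying the circumradius bound for $\e$-nets of all of $\HH$, in particular excluding degenerate or unbounded Delaunay cells. Because the net is $\e$-dense, every empty disk has radius at most $\e$, which settles both issues. A final care is the degenerate crossing through vertices, which I handle as described above.
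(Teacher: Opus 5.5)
Your proof is correct, but it takes a different route from the paper's.

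\textbf{The paper's route.} It argues by contradiction on a component $C_x$ of $T_R$ not containing $p$. Every vertex on the boundary of $C_x$ has an edge leaving $B(p,R)$. Since edges are shorter than $2\e$, such a vertex is at distance at least $R-2\e$ from $p$. The point of $C_x$ closest to $p$ is such a vertex, or a point on an edge between two of them. Hence it is at distance at least $R-3\e$, so no foreign component reaches into $B(p,R-3\e)$.

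\textbf{Your route.} You argue directly. You shadow the segment $[p,q]$ by the closed triangles it meets, extract an edge path from $p$ to $q$, and confine all of it to $\overline{B}(p,R-\e)$. For this you use a stronger geometric input than edge lengths: each triangle lies in its circumdisk, whose radius is at most $\e$. A clean way to state your connectivity step, covering all degenerate crossings at once (through a vertex or along an edge), is this. The closed triangles meeting $[p,q]$ cover a connected segment, so their intersection pattern is connected, and two intersecting triangles share a vertex.

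\textbf{What your route buys.} It is constructive and fully rigorous. It avoids the somewhat informal notions of the ``boundary'' and the ``closest point'' of a component, on which the paper's terse write-up relies; that write-up even prints two of its inequalities in the wrong direction. It also gives a better margin: every vertex in $B(p,R-2\e)$ is reached.

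\textbf{What the paper's route buys.} It only uses edge lengths at the frontier of a foreign component. This component-frontier form is what the paper later reworks for pseudo-nets in Theorem~\ref{thm:spectrum2}. Your shadowing argument needs every triangle met by $[p,q]$ to have diameter at most $2\e$. That fails for the long tagged triangles crossing a thin cylinder, so your argument would need separate treatment in that setting.
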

\begin{proof}
Let $x$ be a point of $T_R$ that is not in the connected component of $p$ and let $C_x$ be the connected component of $x$. All the vertices in the boundary of $C_x$ admit at least an edge in $T$ that goes outside of $B(p,R)$. Then, since the edges of $T$ are smaller than $2\e$, all the vertices are distance smaller than $2\e$ from the boundary of $B(p,R)$. The closest point $x_p$ from $p$ in $C_x$ (considered globally) is either a vertex or a point on an edge that links two points on the boundary of $C_x$. If $x_p$ is a vertex of then $d(x_p, p)\leq R-2\e$ by the above. Else, it is a point on the interior of an edge $(u, v)$. Suppose that it is closer to $u$ than to $v$. Since $(u, v)$ has length at most $2\e$, $d(u, x_p)\geq \e$. The triangle inequality then yields $d(p, x_p)\geq d(p, u)-d(u, x_p) \geq (R-2\e) - \e = R-3\e$. See Figure~\ref{fig:BFS} for an illustration.
\end{proof}

\begin{figure}[!ht]
    \centering
    \includegraphics{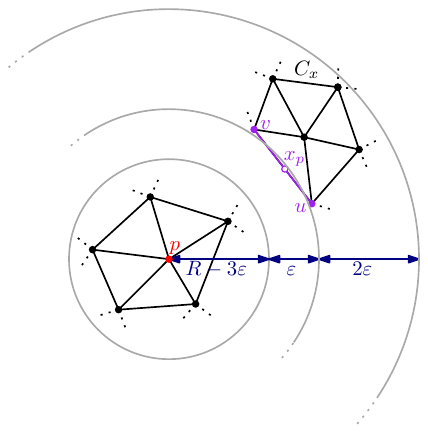}
    \caption{Illustration of the case where $x_p$ is a point on the interior of an edge.}
    \label{fig:BFS}
\end{figure}

\begin{lemma}\label{lem:approxlength}
    Let $\surf$ be a surface, $\e$ be a positive number and $T$ a Delaunay triangulation over an $\e$-net of $\surf$. If $\surf$ contains a curve $\gamma$ of length $L$ then, there is a point $p$ of $T$ that admits a curve homotopic to $\gamma$ smaller than $L_p$ verifying $\sinh\left(\frac{1}{2}L_p\right)=\sinh\left(\frac{1}{2}L\right)\cdot\cosh(\e)$.
\end{lemma}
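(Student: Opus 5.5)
We work with $\gamma$ taken to be the closed geodesic in its free homotopy class. Replacing $\gamma$ by this geodesic only decreases its length, so any bound proved for length $\ell(\gamma)\le L$ transfers to the original curve, by monotonicity of the formula defining $L_p$. The idea is to pick a point $q$ on $\gamma$ and a net point $p$ close to it. We then compare the length of the geodesic loop based at $p$, freely homotopic to $\gamma$, with the translation length of the corresponding deck transformation.

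First, since $T$ is a Delaunay triangulation over an $\e$-net, every point of $\surf$ lies within distance $\e$ of some vertex. So pick any point $q$ on $\gamma$ and a vertex $p$ of $T$ with $\dist(p,q)\le\e$. Let $d$ denote the distance from $p$ to the geodesic $\gamma$; then $d\le\e$.

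Next, lift to $\HH$. Let $\h\gamma$ be a lift of $\gamma$ contained in the axis of a hyperbolic isometry $\tau\in\Gamma$ with translation length $\ell(\gamma)$. Choose a lift $\h p$ at distance $d$ from this axis. The geodesic segment from $\h p$ to $\tau\h p$ projects to a loop based at $p$ freely homotopic to $\gamma$; its length is $\disth(\h p,\tau\h p)$. The standard displacement formula for a hyperbolic isometry, obtained by splitting the quadrilateral formed by $\h p$, $\tau \h p$ and their projections on the axis into two trirectangles (or by quoting the same computation underlying Theorem~\ref{thm:collar}(iii)), gives
\[
\sinh\left(\tfrac12\disth(\h p,\tau\h p)\right)=\sinh\left(\tfrac12\ell(\gamma)\right)\cosh(d).
\]

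Finally, using $d\le\e$, the monotonicity of $\cosh$ on $[0,\infty)$ and $\ell(\gamma)\le L$, we get $\sinh(\frac12\disth(\h p,\tau\h p))\le\sinh(\frac12 L)\cosh(\e)=\sinh(\frac12 L_p)$. Since $\sinh$ is increasing, the loop at $p$ has length at most $L_p$, which is the claim.

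The only genuinely geometric step is the displacement formula. I would derive it from the Lambert quadrilateral relation $\sinh(\text{half chord})=\sinh(\text{half base})\cosh(\text{height})$ applied to the symmetric quadrilateral with base on the axis. Everything else is routine. The strict inequality in the statement should be read as ``at most''; it becomes strict if $d<\e$ or if $\gamma$ was not already a geodesic.
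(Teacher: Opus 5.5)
Your proof is correct and follows the paper's route: pick a net vertex within distance $\e$ of $\gamma$, then apply the relation $\sinh(\frac12\ell_p)=\sinh(\frac12\ell(\gamma))\cosh d$ to the loop based at that vertex and freely homotopic to $\gamma$. The paper justifies this relation only by quoting Theorem~\ref{thm:collar}(iii), which is stated for small simple geodesics, for points in their collars, and in terms of injectivity radius. Your derivation from the general displacement formula for a hyperbolic isometry (via the Lambert quadrilateral) covers an arbitrary closed geodesic directly, so it is the more complete justification.
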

\begin{proof}
    Since $T$ is defined on an $\e$-net, it has a point at distance $d\le\e$. Then, Theorem~\ref{thm:collar} (iii) applies to this situation and gives the result.
\end{proof}

\subsection{The algorithm and proof of its complexity}
We can now compute the length spectrum of $\surf$ up to a given length $L$ using the following algorithm.

\begin{algorithm}
\textcolor{white}{.}

\textbf{Input}: A Delaunay triangulation over an $\e$-net of $\surf$, and a real number $L>0$.

\textbf{Output}: The length spectrum of $\surf$ up to length $L$.

\begin{enumerate}
\item\label{alg:preprocess} Perform the preprocessing of Lazarus and Rivaud~\cite{lazarus2012} to obtain a combinatorial representation $M$ of $\surf$.
\item\label{alg:init} Initialize an empty list $\mathcal{L}$ storing pairs consisting of the length of a closed geodesic and a canonical representative of its free homotopy class in $M$.
\item\label{alg:loop} For each vertex $p$ of the Delaunay triangulation $T$:
\begin{enumerate}
    \item Choose a lift $\tilde{p}$ of $p$ and compute the intersection of the lifted triangulation $\tilde{T}$ with the ball $B(\tilde{p},L_p+3\e)$ using a breadth-first search.
    
    \item Whenever the search reaches another lift of $p$, compute the length of the corresponding closed geodesic (as in~\cite{despreComputingDirichletDomain2023}) together with the canonical representative of its free homotopy class in $M$. If this representative is not already present in $\mathcal{L}$, add the corresponding pair.
\end{enumerate}
\item Return $\mathcal{L}$.
\end{enumerate}
\end{algorithm}

\begin{proof}[Proof of Theorem~\ref{thm:spectrum}]
Let $\gamma$ be a closed geodesic on $\surf$ of length less than $L$. By Lemma~\ref{lem:approxlength}, there exists a vertex $p$ of $T$ such that the pointed geodesic $\gamma_p$ based at $p$ and freely homotopic to $\gamma$ has length at most $L_p$. Choose a lift $\tilde p$ of $p$ in $\HH$. By Lemma~\ref{lem:BFS}, a breadth-first search in $\tilde T$ restricted to the ball $B(\tilde p,L_p+3\e)$ reaches another lift $\tilde p'$ of $p$ such that the geodesic segment joining $\tilde p$ and $\tilde p'$ projects to $\gamma_p$.

By Lemma~\ref{lem:eball}, this search visits at most
$\frac{\sinh^2(L_p/2+7\e/4)}{\sinh^2(\e/4)}
=O(\sinh^2(L_p/2))
=O(\sinh^2(L/2))$
vertices. To recover the entire length spectrum up to length $L$, the procedure is repeated for every vertex of $T$. Since $T$ is $\e$-thick, it contains at most $16(g-1)/\e^2=O(g)$ vertices.

For each pair of lifts of a vertex, the algorithm performs three operations: it computes the length of the corresponding closed geodesic, computes a canonical representative of its free homotopy class in $M$, and compares this representative with those of previously discovered geodesics having the same length. The first operation takes constant time, the second requires $O(L)$ time~\cite{lazarus2012}, and the total cost of the homotopy tests is $O(mL)$, where $m$ denotes the maximum multiplicity of a length in the spectrum below $L$. Therefore, the overall running time is $O(gmLe^{L})$.
\end{proof}

\subsection{Pseudo $\e$-net version}
We have described a strategy for computing the length spectrum of $\surf$ by first constructing a $1$-net, which either yields its systole or certifies that $\sys>1$. If $\sys<1$, we then construct a $\sys$-net and apply Theorem~\ref{thm:spectrum}. However, when $\sys$ is small, it is more efficient to construct a $\log(\sqrt{2})$-net directly and apply the following variant of Theorem~\ref{thm:spectrum} to the resulting Delaunay triangulation. The choice of the threshold $1$ proposed for the thick case is arbitrary; determining the optimal value is an interesting practical problem.

\begin{theorem}[Variant of Theorem~\ref{thm:spectrum}]\label{thm:spectrum2}
    Let $\surf$ be surface and $T$ a Delaunay triangulation over a pseudo $\log(\sqrt{2})$-net of $\surf$. We can compute the length of all the closed free geodesics of $\surf$ smaller than $L$ in $O(g\,m\,L\,e^L)$ time, where $m$ denotes the maximum multiplicity of a length in the spectrum below $L$.
\end{theorem}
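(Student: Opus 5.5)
We follow the proof of Theorem~\ref{thm:spectrum} and only add what is needed to handle the thin part. Put $\e=\log(\sqrt 2)$ and take a Delaunay triangulation $T$ over a pseudo $\e$-net. The surface then splits into the $\e$-thick part $\thick$, which is covered by the net, and at most $3g-3$ tagged $\e$-collars $\mathcal C(\rho_j,\e)$. Each collar has two boundary vertices $p_j^1,p_j^2$, and each of these carries a loop edge homotopic to $\rho_j$ (Step~\ref{step3}). We split the closed geodesics $\gamma$ with $\ell(\gamma)\le L$ into two kinds. The first kind are the core curves $\rho_j$ and their iterates $\rho_j^k$. The lengths $\ell(\rho_j)$ are already computed exactly in Lemma~\ref{lem:bananasplit} through $2\arccosh(|\tr\tau|/2)$, so the multiples $k\,\ell(\rho_j)\le L$ can simply be listed. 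This costs $O(gL/\sys)$, and we have to bound or absorb that term; a cleaner option is to output each $\rho_j$ together with its multiplicity pattern in closed form. The second kind is every other closed geodesic $\gamma$, and the main point is to show that such a $\gamma$ meets $\thick$. A simple closed geodesic that stays inside a collar is the core curve, and more generally any closed geodesic contained in a collar, which is an annulus, is freely homotopic to a power of its core. Geodesics in the same free homotopy class coincide, so $\gamma$ must leave the collar and therefore passes through $\thick$.

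For such a $\gamma$ we pick a point $x\in\gamma\cap\thick$ and a net point $p\in P$ with $\dist(x,p)\le\e$. This needs care: if $x$ is close to a collar boundary, the covering guarantee from the refinement algorithm may give a vertex $p_j^i$ instead of an interior point, which is still fine. Also, Lemma~\ref{lem:approxlength} uses Theorem~\ref{thm:collar}(iii), and that statement concerns the collar of $\gamma$. What we really need is the general bound that the loop at $p$ freely homotopic to $\gamma$ has length at most $L+2\e$, obtained by concatenating $\gamma$ with the segment $px$ traversed back and forth. Using $L_p:=L+2\e$ as the search radius only changes constants.

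The BFS step is where the work is. Lemma~\ref{lem:BFS} was stated for a triangulation over a genuine $\e$-net of $\HH$ with edges shorter than $2\e$, and in the lift $\h T$ the tagged triangles have long edges running across the collar banana. We would use the BFS from $\tilde p$ within $B(\tilde p,L_p+3\e)$, allowing it to go through tagged triangles as well. The proof of Lemma~\ref{lem:BFS} can be repeated if we replace ``edges of length at most $2\e$'' by the fact that each collar lifts to a banana. A distance path between two thick points crosses a banana only along a path whose length is at least that of the tagged edge it parallels, because the tagged edges $p^1p^2$ are orthogonal crossings and $\gamma_\perp$ is a shortest path (Lemma~\ref{lem:locatecollar}). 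So a connected component that is cut off can only sit within $O(\e)$ of the sphere, and the conclusion still holds with a larger constant in place of $3\e$. To count the vertices visited, we apply Lemma~\ref{lem:eball} to the lifted vertex set. Lifted vertices are $\e$-separated in $\HH$: for thick vertices this holds in $\thick$, and for the pair $p^1,p^2$ it follows from Lemma~\ref{lem:lnsqrt}, since their lifts are separated by the collar width. This gives $O(e^{L})$ visited vertices for each start vertex.

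The accounting then goes as in Theorem~\ref{thm:spectrum}. There are $O(g)$ start vertices, since the count is $16(g-1)/\e^2+2(3g-3)$ with $\e$ fixed. Each search visits $O(e^L)$ vertices. For each lift of $p$ that is reached we compute the length in constant time, compute the Lazarus--Rivaud canonical form in $O(L)$, and run homotopy comparisons at cost $O(mL)$. The total is $O(gmLe^L)$. I expect the main obstacle to be the modified Lemma~\ref{lem:BFS}, meaning connectivity of the BFS across the long tagged edges, together with making sure that the iterates of the short core curves do not produce a $1/\sys$ term.
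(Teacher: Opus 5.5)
Your proposal follows the paper's proof. The argument of Theorem~\ref{thm:spectrum} is reused, and the only real work is re-proving Lemma~\ref{lem:BFS} when long tagged edges cross the $\e$-collars. The paper does this by a case analysis on collar-boundary vertices, and it rests on the same observation as yours: a geodesic from $\h p$ that reaches the far side of a collar has already paid at least the crossing length, so the crossing edge there stays inside the ball. The paper ends with a margin of $4\e$ instead of $3\e$. Your explicit check that every closed geodesic other than a power of a core curve meets $\thick$ is something the paper leaves implicit.

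One correction: lifts of $p^1$ and $p^2$ need not be $\e$-separated. The paper itself notes they can be closer than $\e$ when $\ell(\rho)$ is near $\e$, and Lemma~\ref{lem:lnsqrt} only separates them from the \emph{other} vertices. So apply Lemma~\ref{lem:eball} separately to two families, each of which is $\e$-separated: all vertices other than the $p^2$'s, and the $p^2$'s alone. This only doubles the vertex count.
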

\begin{proof}
    We just need to adapt the proof of Lemma~\ref{lem:BFS} to this situation since everything else remains unchanged. So, let $p$ be a point of the $\log(\sqrt{2})$-net and $x$ a point of $T$ restricted to $B(p,R)$ for some $R>0$. We denote by $C_x$ the connected components of $T$ containing $x$ and assume that $C_x$ does not contain $p$. The edges of $T$ are all smaller than $2\log(\sqrt{2})$ apart from the edges that links to points on the boundary of a $\log(\sqrt{2})$-thin cylinder. We call such points cylinder points. We consider the closest point $x_p$ from $C_x$ to $p$. If $x_p$ is a non-cylinder vertex or a point of an edge between two non-cylinder vertex the $x_p$ is outside $B(p,R-3\log(\sqrt{2}))$ using the same reasoning as the proof of the initial lemma. Now, let us assume that $x_p$ is a cylinder vertex and assume that $x_p$ is inside $B(p,R-2\log(\sqrt{2}))$. Then, all non-cylinder vertices of the star around $x_p$ are in $C_x$ and, thus, the path from $p$ to $x_p$ hit $x_p$ from the interior of the thin-cylinder. Since $p$ is outside this cylinder it is closer to the other extremity vertex of the cylinder, a contradiction because this vertex is in $C_x$. If $x_p$ is on an edge from a cylinder point to an non-cylinder point, it is on an-edge smaller than $2\log(\sqrt{2})$ with an extremity outside $B(p,R-2\log(\sqrt{2}))$ then it is outside $B(p,R-4\log(\sqrt{2}))$. The last case remaining is when $x_p$ lie on an edge between two cylinder points. Those two points are necessarily on the same side of the cylinder (they are two lifts of the same point of $\surf$ unless the closest point from $p$ would be an extremity of the edge. Assume that one of the cylinder point $x_c$ of the edge $T$ is in $B(p,R-2\log(\sqrt{2}))$, then the path from $p$ to $x_p$ hit $x_p$ from the interior of the thin-cylinder. If the cylinder is wider than $2\log(\sqrt{2})$, then $p$ is closer to the other extremity vertex of the cylinder, a contradiction. Then, $x_c$ is connected to some point outside $B(p,R)$ by an edge that is smaller than $3\log(\sqrt{2})$ and thus it is outside $B(p,R-3\log(\sqrt{2}))$. It implies that $x_p$ is outside $B(p,R-4\log(\sqrt{2}))$. SO, in all cases $x_p$ and thus, $x$ is outside $B(p,R-4\log(\sqrt{2}))$. So, Lemma~\ref{lem:BFS} apply to this situation with $\e=\log(\sqrt{2})$ and a marge of $4\e$ instead of $3\e$.
\end{proof}

Alternatively, starting from a pseudo $\log(\sqrt{2})$-net, we can compute the systole of $\surf$ in time polynomial in the genus $g$. This does not yield a polynomial-time algorithm for computing the systole, since constructing the pseudo-net itself has complexity exponential in $g$. Nevertheless, we believe that the practical complexity of the flip algorithm used to construct the pseudo-net is significantly lower, making it worthwhile to analyze separately the complexity of the second stage, which computes $\sys$ from the pseudo-net.

The strategy is as follows. If the construction of the pseudo-net reveals a closed geodesic of length smaller than $\log(\sqrt{2})$, then the systole is necessarily one of these curves. Otherwise, $\surf$ is $\log(\sqrt{2})$-thick, and we can apply the algorithm of Theorem~\ref{thm:spectrum2}. Since we are only interested in the shortest geodesic, there is no need to handle multiplicities. This yields the following result, together with the fact that the systole of a closed surface of genus $g$ is at most $2\log(g)+8$ (see, for instance,~\cite{fanoni-parlier}).

\begin{corollary}
    Let $\surf$ be an $\log(\sqrt{2})$-thick surface and $T$ a Delaunay triangulation over a pseudo $\log(\sqrt{2})$-net of $\surf$. We can compute the length of the systole of $\surf$ in $O(g^2)$ time.
\end{corollary}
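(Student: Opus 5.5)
The plan is to specialise the algorithm of Theorem~\ref{thm:spectrum2} to the single target length $L=\sys$ and stop at the first closed geodesic found. The difficulty is that we do not know $\sys$ in advance. We therefore use the universal upper bound $\sys\le 2\log(g)+8$ from~\cite{fanoni-parlier} as the search radius $L$. With this choice every quantity that was exponential in $L$ becomes polynomial in $g$, since $e^{L}\le e^{8}g^2$.

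Concretely, the argument will proceed as follows.

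First, we treat the case where the pseudo-net construction has already exposed a tagged collar, that is, a closed geodesic shorter than $\log(\sqrt 2)$. The stored translations give the lengths of these at most $3g-3$ curves via $\ell(\rho)=2\arccosh(|\tr(\rho)|/2)$. Any curve shorter than $\log\sqrt2\le 2\arcsinh 1$ is small, so by Theorem~\ref{thm:collar} it is one of these. Hence the systole is the minimum among them, computed in $O(g)$ time.

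Otherwise $\surf$ is $\log(\sqrt2)$-thick, so $T$ is a Delaunay triangulation over a genuine $\e$-net with $\e=\log\sqrt2$. It has $O(g)$ vertices by Lemma~\ref{eq:nb_points_epaisse}. For each vertex $p$ we run the breadth-first search of Theorem~\ref{thm:spectrum2} in the lifted triangulation, restricted to $B(\tilde p, L_p+4\e)$ with $L=2\log g+8$. Whenever a lift of $p$ is reached, we record the translation length of the corresponding deck transformation. This is the length of the closed geodesic freely homotopic to that loop, and it costs $O(1)$. We output the minimum recorded value.

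Correctness follows the same chain as in the proof of Theorem~\ref{thm:spectrum}. The systole $\gamma$ has length $\sys\le L$. Lemma~\ref{lem:approxlength} provides a vertex $p$ whose based loop homotopic to $\gamma$ has length at most $L_p$. The adapted Lemma~\ref{lem:BFS}, with margin $4\e$, guarantees that the search from $\tilde p$ reaches the corresponding lift. Every recorded value is the length of some closed geodesic, hence at least $\sys$. Therefore the minimum equals $\sys$.

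Unlike Theorem~\ref{thm:spectrum}, we need neither the Lazarus--Rivaud homotopy canonicalisation nor the multiplicity factor $m$, because only the minimum length matters. This removes the factors $mL$.

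For the complexity, Lemma~\ref{lem:eball} bounds the number of vertices visited per search by $\sinh^2(L_p/2+O(\e))/\sinh^2(\e/4)$. Since $\e$ is a fixed constant, $\sinh(L_p/2)=\sinh(L/2)\cosh\e$, and this is $O(e^{L})=O(g^2)$. The main obstacle is that summing this bound over $O(g)$ vertices naively gives $O(g^3)$ rather than $O(g^2)$, so the plan must save a factor of $g$.

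My first attempt would be to argue that it suffices to run the search from a constant number of vertices, or to charge the cost of a search to area. Area is the more promising route. The lifted balls $B(\tilde p,L_p)$ over all $p$ have total area $O(g\cdot e^{L})$, which is too large. However, the searches can be stopped as soon as the radius exceeds the current best length found.

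Alternatively, I would tighten the constant in the systole bound. For the search radius only $e^{L/2}$ matters, via the $\sinh^2$ estimate, so I will check whether the visited vertex count is really $\sinh^2(L/2)\approx e^{L}\sim g^2$ or whether a sharper ball count gives $O(g)$ per vertex. If neither succeeds, I would fall back to a bound of the form $O(g\cdot e^{L})$ with $L=2\log g+8$ and present it as polynomial in $g$. This is the step I would scrutinise most.
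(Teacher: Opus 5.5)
Your algorithm and correctness argument are the paper's. The paper's whole proof is the paragraph preceding the corollary: a short geodesic exposed during the pseudo-net construction must be the systole; otherwise one runs the search of Theorem~\ref{thm:spectrum2} without multiplicity handling, with $L=2\log g+8$ taken from Fanoni--Parlier. Your use of Lemma~\ref{lem:approxlength}, of the $4\varepsilon$-margin version of Lemma~\ref{lem:BFS}, and of the fact that every recorded translation length is at least $\sigma$ is sound. Your first case is in fact vacuous under the hypothesis, since $\log(\sqrt2)$-thick already means $\sigma>\log\sqrt2$.

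The missing factor of $g$ you flag is a genuine gap, but the paper has the same gap; you have not overlooked an idea it contains. Plugging $m=1$ and $L=2\log g+8$ into $O(g\,m\,L\,e^L)$ gives $O(g^3\log g)$, or $O(g^3)$ once the canonicalisation step is dropped, and the paper offers nothing further.

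Moreover, none of the savings you propose can work for this algorithm. The lifts of an $\varepsilon$-net form an $\varepsilon$-net of $\mathbb{H}^2$. Hence a ball of radius $R$ contains at least $\sinh^2((R-\varepsilon)/2)/\sinh^2(\varepsilon/2)=\Theta(e^R)$ lifted vertices, so no sharper ball count exists. Each fixed-radius search therefore costs $\Theta(g^2)$, giving $\Theta(g^3)$ in total on \emph{every} surface. Stopping at the current best still forces a radius of at least about $\sigma$ from each of the $\Theta(g)$ vertices. Since surfaces with $\sigma\geq\frac43\log g-O(1)$ exist (Buser--Sarnak), this still costs $\Omega(g^{7/3})$ on them.

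So what you, and the paper, actually establish is $O(g^3)$. Reaching $O(g^2)$ would need a different search, for instance balls of radius $L/2+O(\varepsilon)$ around every vertex. A net vertex within $\varepsilon$ of the midpoint of the based loop near the systole has both endpoint lifts in such a ball, and each ball contains only $O(e^{L/2})=O(g)$ vertices. You would then still have to control the number of pairs of lifts of a common vertex inside each ball, which is not obviously $O(g)$.
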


\newpage

\bibliography{biblio}

\appendix

\section{Lemma for the proof of correctness (Section~\ref{sec:firstalgo})}

\begin{lemma}\label{lem:point_triangle}
	Let $\h\Delta$ be a triangle of $\HH$ and $\h x\in \h\Delta$. Denote $r$ as the radius of the circumcircle of $\h\Delta$. Then there exists a vertex of $\h\Delta$ whose distance to $\h x$ is at most $r$.
\end{lemma}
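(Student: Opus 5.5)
The plan is to use the circumdisk directly. Let $\h c$ be the circumcenter of $\h\Delta$ and $D$ its closed circumdisk of radius $r$. Hyperbolic disks are geodesically convex, and the three vertices lie in $D$. Hence the triangle $\h\Delta$, which is the convex hull of its vertices, is contained in $D$. So $\h x \in D$. The idea is to show that some vertex is no farther from $\h x$ than $r$ by splitting $\h\Delta$ into three pieces according to the Voronoi diagram of its vertices.

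The key step is a partition of $D$ (or of $\h\Delta$). I would cut $D$ along the three geodesic segments from $\h c$ to the vertices $\h v_0,\h v_1,\h v_2$, which gives three closed sectors. I would index them so that the sector $S_i$ is bounded by $\h c\h v_i$ and $\h c\h v_{i+1}$, where $\h v_i$ and $\h v_{i+1}$ are consecutive on the circle. By construction $\h x$ lies in one of these sectors, say $S_0$. I then claim that
\[
\min\bigl(\disth(\h x,\h v_0),\disth(\h x,\h v_1)\bigr)\leq r.
\]
To see this, consider the perpendicular bisector of $\h v_0\h v_1$. It passes through $\h c$, and $\disth(\h c,\h v_0)=\disth(\h c,\h v_1)=r$. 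This bisector cuts $S_0$ into two pieces. The piece on the side of $\h v_0$ is contained in the geodesic triangle with vertices $\h c$, $\h v_0$ and $\h m$, where $\h m$ is the point where the bisector meets the arc. Alternatively, I can use the simpler sub-piece with vertices $\h c$, $\h v_0$ and the midpoint of $\h v_0\h v_1$ when $\h x\in\h\Delta$.

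The cleanest version works with $\h\Delta\cap S_0$ instead of the full sector. This set is the triangle $\h c\h v_0\h v_1$ when $\h c\in\h\Delta$. In general it is a sub-polygon of $\h\Delta$, and the obtuse case has to be handled separately, because then $\h c$ lies outside $\h\Delta$. In the acute case I split the triangle $\h c\h v_0\h v_1$ by the bisector, which goes through $\h c$ and the midpoint $\h m_{01}$ of $\h v_0\h v_1$. The half containing $\h v_0$ is the triangle $\h v_0\h c\h m_{01}$. Every point of this half is at distance at most $\max(\disth(\h v_0,\h c),\disth(\h v_0,\h m_{01}))=r$ from $\h v_0$. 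This holds because hyperbolic balls are convex, so the ball of radius $r$ around $\h v_0$ contains the convex hull of $\h v_0$, $\h c$ and $\h m_{01}$, and $\disth(\h v_0,\h m_{01})\leq r$.

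In the obtuse case, with the obtuse angle at $\h v_2$, the triangle lies on one side of the line $\h v_0\h v_1$, opposite to $\h c$. I would cover $\h\Delta$ by the two triangles $\h v_0\h v_2\h m_{01}$ and $\h v_1\h v_2\h m_{01}$. Every point of the first is within $\max(\disth(\h v_0,\h v_2),\disth(\h v_0,\h m_{01}))$ of $\h v_0$, and similarly for the second with $\h v_1$. The bounds $\disth(\h v_0,\h m_{01})\leq r$ and $\disth(\h v_0,\h v_2)\leq r$ would have to be checked. I expect this to be the main obstacle, because the second bound is false in general. The fallback is to use the triangle $\h v_0\h m_{01}\h m_{02}$ together with the midpoint triangle, and to bound each vertex-to-midpoint distance by half a chord, which is at most $r$. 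This requires showing that the midpoint triangle $\h m_{01}\h m_{02}\h m_{12}$ is covered by the three vertex balls. To do that, I would check that each of its points is within $r$ of some vertex, by the same convex-hull argument applied to the Voronoi cells of the vertices intersected with $\h\Delta$.
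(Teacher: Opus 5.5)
Your case $\h c\in\h\Delta$ is correct. There $\h\Delta$ is the union of the triangles $\h c\h v_i\h v_{i+1}$, each half $\h v_0\h c\h m_{01}$ has all three corners within $r$ of $\h v_0$, and closed hyperbolic balls are convex. (In $\HH$ the right dichotomy is $\h c\in\h\Delta$ versus $\h c\notin\h\Delta$, not acute versus obtuse, since an acute hyperbolic triangle can have its circumcenter outside.)

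The gap is the case $\h c\notin\h\Delta$, which the proposal does not prove. Your first covering fails, as you note. In the fallback the corner triangles $\h v_i\h m_{ij}\h m_{ik}$ are fine, but the medial triangle is covered only by the claim that ``each of its points is within $r$ of some vertex''. That is the lemma itself, restricted to a subset.

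Appealing to ``the same convex-hull argument on Voronoi cells'' does not close this. The convex-hull argument needs every corner of each piece (Voronoi cell of $\h v_i$)$\,\cap\,\h\Delta$ to lie within $r$ of $\h v_i$. The new corners here are the points $\h q$ where the bisector of, say, $\h v_0\h v_2$ leaves $\h\Delta$ through the edge $\h v_0\h v_1$, and you give no bound on $\disth(\h q,\h v_0)$. The missing fact has two parts:
\begin{itemize}
\item $\h q$ lies on the segment $\h m_{02}\h c$. The bisector enters $\h\Delta$ at $\h m_{02}$ heading towards $\h c$, because $\h c$ is on the $\h\Delta$ side of the line $\h v_0\h v_2$. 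It must then leave $\h\Delta$ before reaching $\h c\notin\h\Delta$.
\item Along the bisector, the distance to $\h v_0$ increases monotonically away from $\h m_{02}$, by hyperbolic Pythagoras.
\end{itemize}
Together these give $\disth(\h q,\h v_0)\leq\disth(\h c,\h v_0)=r$, and likewise for $\h v_2$. With this, the Voronoi decomposition of $\h\Delta$ itself handles both cases at once, and the medial triangle is unnecessary.

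The paper avoids any decomposition. Suppose every vertex satisfied $\disth(\h v_i,\h x)>r=\disth(\h v_i,\h c)$, with $\h x\neq\h c$. Then all three vertices lie in the open half-plane $\{\h y : \disth(\h y,\h c)<\disth(\h y,\h x)\}$ bounded by the perpendicular bisector of $\h x\h c$. This half-plane is convex, so it would contain $\h\Delta$, yet it does not contain $\h x$. This short contradiction needs no case split on where $\h c$ lies and no distance estimates beyond the definition of $r$.
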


This lemma holds in both the Euclidean and the hyperbolic planes, as the following proof uses arguments that function in both settings.

\begin{proof}
	Let $D_{\h x}$ be the closed disk of radius $r$ centered in $\h x$, $D_{\h\Delta}$ be the circumdisk of $\h\Delta$, and $\h c$ be its circumcenter. If $\h x=\h c$, the two disks are equal and the result is trivial.
				
	Suppose that $\h x\neq \h c$. Denote $C_{\h\Delta}$ (resp. $C_{\h x}$) as the circle bounding the disk $D_{\h \Delta}$ (resp. $D_{\h x}$). The distance between $\h x$ and $\h c$ is at most $r$ because $\h x$ lies in $\h\Delta$, so $\h x\in D_{\h\Delta}$ and $\h c\in D_{\h x}$. Therefore, the two disks intersect and their intersection is not a singleton. Moreover, $D_{\h\Delta} \neq D_{\h x}$, so $C_{\h \Delta}$ and $C_{\h x}$ intersect exactly twice (Figure~\ref{fig:lemme_triangle}). The three vertices of $\h\Delta$ lie on the circle $C_{\h\Delta}$. But if their distance to $\h x$ is greater than $r$, i.e. if they belong to $C_{\h\Delta} \setminus D_{\h x}$, then the geodesic passing through the two intersection points of the circles, that is, the perpendicular bisector between $\h x$ and $\h c$, separates $\h x$ from $\h\Delta$, so $\h x$ cannot be in $\h\Delta$ (Figure~\ref{fig:lemme_triangle}, left). Thus, at least one vertex of $\h\Delta$ must lie on $C_{\h\Delta} \cap D_{\h x}$, that is, at distance at most $r$ from $x$.
	\begin{figure}[!ht]
		\centering
		\begin{subfigure}{0.5\textwidth}
			\centering
			\includegraphics[page=1]{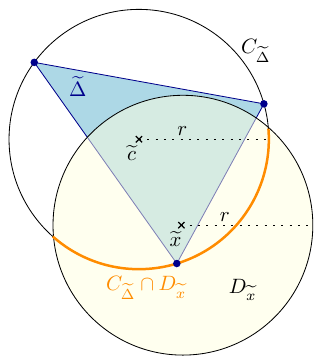}
		\end{subfigure}%
		\begin{subfigure}{0.5\textwidth}
			\centering
			\includegraphics[page=2]{lemme_point_triangle.pdf}
		\end{subfigure}
		\caption{Illustration of the proof of Lemma~\ref{lem:point_triangle} in Euclidean geometry (for simplicity). The points $\h c$ and $\h x$ belong in each other's disk of radius $r$. Right: If no vertex of $\h \Delta$ lie on $C_{\h\Delta}\cap D_{\h x}$ (bold), then $\h x$ is separated from $\h\Delta$ by the dashed geodesic.}\label{fig:lemme_triangle}
	\end{figure}
\end{proof}

\section{Computing the systole directly from a Delaunay triangulation}\label{sec:Akrout}
It is interesting to consider this strategy since it is very simple and independent of the geometry of $\surf$. However, the best complexity we obtain is not satisfying and it does not generalize to the study of the full spectrum of the surface. 
\begin{proposition}
Let $\surf$ be a surface given by a Delaunay triangulation on a single vertex. We can compute the systole of $\surf$, along with all of its small curves, in  $O\left(g\cdot2^{48g}\right)$ time.
\end{proposition}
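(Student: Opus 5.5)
The plan is to show that every closed geodesic of length at most $L$ has a copy in a small hyperbolic triangle with a vertex in the triangulation, and then scan a bounded region of the universal cover. The statement to prove is the appendix proposition: from a single-vertex Delaunay triangulation, compute the systole and all small curves in $O(g\cdot 2^{48g})$ time.

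The input triangulation has one vertex $b$. By Euler characteristic it has $2(2g-1)=4g-2$ triangles and $6g-3$ edges, all loops at $b$. The approach follows Akrout: any closed geodesic $\gamma$ of length at most $2\arcsinh(1)$ can be freely homotoped to a cyclic edge sequence. I would record $\gamma$ by the cyclic word of Delaunay edges it crosses, and bound the length of that word. Each word is then a candidate conjugacy class in $\Gamma$. Its translation length $2\arccosh(|\tr|/2)$ is computed in $O(\text{word length})$ operations, and we keep the minimum and all classes with length at most $2\arcsinh 1$.

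The key step is bounding how many times a short geodesic can cross the Delaunay edges.

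First, a lift $\tilde\gamma$ of length $\ell\le 2\arcsinh 1$ crosses a lifted Delaunay triangle in a segment. Each edge of a triangle is a distance path, because edges of a Delaunay triangulation are dual to Voronoi edges and hence are shortest between lifts. By Lemma~8 of \cite{despreRepresentingInfiniteHyperbolic2021}, $\gamma$ meets each edge a bounded number of times. Otherwise a subarc of $\gamma$ together with a subarc of the edge would form a bigon or a shorter non-contractible loop, contradicting the fact that $\gamma$ is a shortest closed curve in its class, or that small curves are simple (Theorem~\ref{thm:collar}).

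Second, I would argue that a simple closed geodesic crosses each of the $6g-3$ edges at most $O(1)$ times; a constant of $8$ is the natural target. Since the curve is simple, its intersection pattern with the triangulation is a normal curve. Normal curves are determined by their edge-intersection vector, and this vector satisfies the triangle (matching) equations. So the number of candidate normal curves is at most $(c+1)^{6g-3}$, where $c$ is the per-edge bound. With $c$ chosen so that $(c+1)^{6g}\le 2^{48g}$, i.e.\ $c+1\le 2^8$, this gives the $2^{48g}$ count.

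For each intersection vector, I would do the following:
\begin{itemize}
\item check the matching equations in $O(g)$ time;
\item reconstruct the normal curve and read off its word, of length $O(g)$;
\item multiply the side-pairing matrices along the word in $O(g)$ time;
\item compute the trace and compare it with $2\cosh(\arcsinh 1)=2\sqrt 2$.
\end{itemize}
Disconnected normal curves are rejected, or split into their components. The total cost is $O(g\cdot 2^{48g})$. Correctness follows because the systole is a small simple curve whenever $\sys\le 2\arcsinh1$, and every small curve is simple and therefore realised as one of the enumerated normal curves.

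\textbf{Main obstacle.} The difficult step is the uniform per-edge crossing bound $c$. I would try to prove it via a collar argument. If $\gamma$ crossed an edge $e$ many times, consecutive crossings would lie within distance $\ell(\gamma)/c$ along $\gamma$. These crossings would give short arcs between lifts of $e$. That would force either $e$ to leave the collar of $\gamma$, which is impossible since a shortest path between points of a collar stays inside it (\cite[Lemma 4.1.5]{buserGeometrySpectraCompact2010}), or $e$ to spiral inside the collar, which contradicts $e$ being a distance path. If this bound fails, the fallback is to bound the crossings by the length of $\gamma$ divided by a lower bound on the triangle widths. That fallback introduces a dependence on $\sys$, so it would not give the stated bound purely in terms of $g$.
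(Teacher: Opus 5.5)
\textbf{There is a genuine gap.} It is the step you flag yourself: nothing in the proposal proves a uniform bound $c$ on how many times a systole or small curve crosses a Delaunay edge, and the justifications you offer do not work.

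\begin{itemize}
\item Lemma~8 of \cite{despreRepresentingInfiniteHyperbolic2021} bounds the intersections of two \emph{distance paths}. But $\gamma$ is a closed geodesic, so the lemma cannot be applied to it as it stands.
\item The bigon alternative is vacuous. Lifts of geodesics meet at most once in $\HH$, so $\gamma$ and an edge never form a bigon, yet they can still cross many times. Likewise, a loop built from a subarc of $\gamma$ and a subarc of $e$ need not be shorter than $\gamma$, because you have no control on the length of the subarc of $e$.
\item The claim that a simple closed geodesic crosses each edge $O(1)$ times is false in general. There are arbitrarily long simple closed geodesics and the triangles have bounded diameter, so shortness has to enter the argument.
\item The collar sketch in your last paragraph is not an argument. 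Even if completed, it would only treat small curves, and your correctness paragraph never handles the systole when $\sys>2\arcsinh 1$.
\item A loop edge at $b$ is not a shortest path between its endpoints, since they coincide on $\surf$. So ``each edge is a distance path'' cannot be taken literally either.
\end{itemize}

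\textbf{The missing idea} is the one the paper uses: geodesic convexity of these curves. Cut $\gamma$ at two antipodal points $x,y$ into arcs $\alpha_1,\alpha_2$ from $x$ to $y$, each of length $\ell(\gamma)/2$. Each arc is a shortest path on $\surf$.
\begin{itemize}
\item For the systole: suppose $\beta$ joins $x$ to $y$ with $\ell(\beta)<\sys/2$. Then the loops $\alpha_1\beta^{-1}$ and $\beta\alpha_2^{-1}$ are both shorter than $\sys$. Their product is homotopic to $\gamma$, so one of them is non-contractible, a contradiction.
\item For a small curve: a path from $x$ of length $<\ell(\gamma)/2\le\arcsinh 1$ cannot leave the collar, whose half-width $\arcsinh(1/\sinh(\ell(\gamma)/2))$ is at least $\arcsinh 1$. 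Inside the collar, the arcs of $\gamma$ are shortest.
\end{itemize}
Only after this splitting can the distance-path intersection results of \cite{despreRepresentingInfiniteHyperbolic2021} be invoked. This is how the paper obtains at most $4$ crossings per edge, and it covers the systole whatever its length.

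Once such a bound is available, your enumeration over normal-coordinate vectors is a legitimate alternative to the paper's count. The paper lists, from each starting triangle, all reduced walks of length below $48g-24$ with two choices per step. Your enumeration would give $O(g\cdot 5^{6g})$, comfortably inside the claimed bound.
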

\begin{proof}
Systoles are geodesically convex subsets, meaning that any shortest path between two of its points lies on the systole itself. In particular, the two arcs of the systole between two opposite points are shortest paths. As such, it can cross an edge of a Delaunay triangulation at most 4 times~\cite{despreRepresentingInfiniteHyperbolic2021}. Due to their collar width, the same argument applies to short curves.
In addition, a triangulation with a single vertex has $12g-6$ edges. We now use normal coordinates with respect to a Delaunay triangulation. As a systole can cross the triangulation at most 4 times, this gives a path in the triangulation of combinatorial length smaller than $48g-24$ corresponding to the systole.

We will use a somewhat similar strategy to Akrout~\cite{akrout2006} by performing an exhaustive search of homotopy classes seen as walks along sides of triangles (Akrout's algorithm uses pants decompositions instead of triangulations). More concretely, for each starting triangle, we compute all reduced walks of length smaller than $48g-24$. There are at most $2^{48g-24}$ such walks per triangle which gives a total of $O\left(g\cdot2^{48g-24}\right)$ admissible walks. The complexity of the algorithm follows, since computing the length of the closed geodesic in a free homotopy class is doable in constant time by a similar reasoning as the one in the proof of lemma~\ref{lem:bananasplit}.
\end{proof}

\end{document}